\title{Improved Approximation Guarantees for Shortest Superstrings using Cycle Classification by Overlap to Length Ratios}
\author[1]{Matthias Englert}
\author[2]{Nicolaos Matsakis}
\author[3]{Pavel Vesel\'y\thanks{Part of this work was done when the author was at the University of Warwick.
  	Partially supported by European Research Council grant ERC-2014-CoG 647557, by GA \v{C}R project 19-27871X, and by Center for Foundations of Modern Computer Science (Charles University project UNCE/SCI/004).}}
\affil[1]{University of Warwick, \texttt{M.Englert@warwick.ac.uk}}
\affil[2]{\texttt{nickmatsakis@gmail.com}}
\affil[3]{Charles University, \texttt{vesely@iuuk.mff.cuni.cz}}
\date{}
\def\OPT{\textsf{OPT}\xspace}
\def\ALG{\textsf{ALG}\xspace}
\def\GREEDY{\textsf{GREEDY}\xspace}
\def\MGREEDY{\textsf{MGREEDY}\xspace}
\def\TGREEDY{\textsf{TGREEDY}\xspace}
\def\CC{cycle cover\xspace}
\def\pref{\textsf{pref}\xspace}
\def\C{\textsf{CC}}
\def\ov{\textsf{ov}\xspace}
\def\period{\textsf{period}\xspace}
\def\dist{\textsf{dist}\xspace}
\def\strings{\textsf{strings}\xspace}
\def\period{\textsf{period}\xspace}
\DeclareMathOperator*{\swap}{swap}
\newcounter{thm}
\numberwithin{thm}{section}
\theoremstyle{plain}
\newtheorem{theorem}[thm]{Theorem}
\newtheorem{corollary}[thm]{Corollary}
\newtheorem{lemma}[thm]{Lemma}
\newtheorem{fact}[thm]{Fact}
\newtheorem{definition}[thm]{Definition}
\newtheorem{observation}[thm]{Observation}
\theoremstyle{remark}
\newtheorem{remark}[thm]{Remark}
\newcommand\ignored[1]{}
 \newcounter{note}[section]
\begin{document}

\maketitle
\begin{abstract}
  In the Shortest Superstring problem, we are given a set of strings and we are asking for a common superstring, which has the minimum number of characters. The Shortest Superstring problem is NP-hard and several constant-factor approximation algorithms are known for it.
  Of particular interest is the \GREEDY algorithm, which repeatedly merges two strings of maximum overlap until a single string remains. The \GREEDY algorithm, being simpler than other well-performing approximation algorithms for this problem, has attracted attention since the 1980s and is commonly used in practical applications.

  Tarhio and Ukkonen~(TCS 1988) conjectured that \GREEDY gives a 2-approximation. In a seminal work, Blum, Jiang, Li, Tromp, and Yannakakis (STOC 1991) proved that the superstring computed by \GREEDY is a 4-approximation,
  and this upper bound was improved to 3.5 by Kaplan and Shafrir (IPL 2005).

  We show that the approximation guarantee of \GREEDY is at most $(13+\sqrt{57})/6 \approx 3.425$, making the first progress on this question since 2005.
  Furthermore, we prove that the Shortest Superstring can be approximated within a factor of $(37+\sqrt{57})/18\approx 2.475$, improving slightly upon the currently best $2\frac{11}{23}$-approximation algorithm by Mucha (SODA 2013).
\end{abstract}

\section{Introduction}
In the Shortest Superstring problem (SSP), we are given a set $S$ of strings over a finite alphabet, and we are asking for a string of minimum length, which contains each member of $S$ as a substring.
SSP has found important applications in various scientific domains~\cite{Gevezes2014}.
One of the early uses was for DNA sequencing \cite{leskBook,myers2016history}, where a DNA molecule consisting of four different nucleotides (Adenine, Thymine, Guanine, and Cytosine) is gradually assembled from DNA fragments, which can be viewed as an instance of SSP over a quaternary alphabet.
SSP can also arise in data compression \cite{storerBook}. Since information is represented by binary strings, we are asking for the minimum number of binary digits that can encode a larger set of strings.
Interestingly, SSP has been used to study how effectively viruses compress their genome by overlapping genes~\cite{ilie2006shortest}.

SSP is NP-hard,
even when the alphabet is binary \cite{garey}. Moreover, SSP is APX-hard \cite{blum} as it is not $(\frac{333}{332}-\epsilon)$-approximable for any constant $\epsilon>0$ unless $\mathrm{P}=\mathrm{NP}$ \cite{KarpinskiS11}.
There exists a plethora of constant-factor SSP approximation algorithms, the currently best of which has an approximation ratio upper bound of $2\frac{11}{23}=\frac{57}{23}\approx 2.478$ \cite{mucha}. Blum, Jiang, Li, Tromp, and Yannakakis~\cite{blum} showed that the \GREEDY algorithm, which repeatedly merges two strings of maximum overlap (breaking ties arbitrarily) until a single string remains, computes a 4-approximate superstring. Additionally, Blum et al.\ gave two simple variants of \GREEDY, namely \TGREEDY with approximation ratio at most~3 and \MGREEDY with ratio at most 4. A series of improved approximation algorithms followed, most of which were published in the 1990s \cite{ArmenS95, ArmenS98, breslauer, CzumajGPR97,KosarajuPS94,mucha,Sweedyk99,TengY93}. It is worth noting that several of these algorithms are significantly more involved than the natural \GREEDY algorithm.

The \GREEDY algorithm for SSP was proposed by Gallant, Maier, and Storer \cite{gallant}. Tarhio and Ukkonen \cite{tarhio} and independently Turner \cite{Turner89} showed that \GREEDY gives a $\frac{1}{2}$-approximation for the maximum string compression. The string compression equals the number of characters that a superstring algorithm saves from the total length of all strings in $S$, i.e., it is the total overlap between all pairs of adjacent strings across the superstring. This result, however, does not imply a constant approximation ratio upper bound for \GREEDY, for the length metric.

Moreover, Tarhio and Ukkonen showed that the approximation ratio of \GREEDY is at least 2, by considering the input $S=\{ab^{k}, b^{k+1}, b^{k}a\}$, for which, depending on the tie-breaking choice, \GREEDY will either output the shortest superstring or a superstring of length twice the minimum, when $k\rightarrow \infty$.\footnote{For $S=\{c(ab)^{k}, (ba)^{k},(ab)^{k}c\}$, \GREEDY will merge the first with the third string, producing a superstring of length twice that of the optimal superstring $c(ab)^{k+1}c$, when $k\rightarrow \infty$ \cite{blum}. No tie-breaking is involved here.} Finally, Tarhio and Ukkonen conjectured that \GREEDY is a 2-approximation algorithm, forming the long-standing \emph{Greedy Conjecture}.
By utilizing the Overlap Rotation Lemma of \cite{breslauer} in the proof of Blum et al.~\cite{blum}, Kaplan and Shafrir \cite{kaplanshafrir} showed that \GREEDY gives a 3.5-approximation.

The \GREEDY algorithm has been commonly used in practical applications when it becomes infeasible to compute an optimal solution~\cite{Li90,myers2016history,ilie2006shortest}. Also, the good performance of \GREEDY in practice has been documented within a probabilistic framework \cite{FriezeS96, Ma09}.

In this paper, we make the first progress on the approximation guarantee of \GREEDY since 2005.
\begin{theorem}\label{thm:greedy}
  The approximation ratio of \GREEDY is at most $(13+\sqrt{57})/6 \approx 3.425$.
\end{theorem}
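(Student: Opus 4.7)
The plan is to work within the \MGREEDY/cycle-cover framework of Blum et al.~\cite{blum} as refined by Kaplan and Shafrir~\cite{kaplanshafrir}, and to obtain the improvement via a finer cycle-by-cycle analysis suggested by the title: classifying each cycle of the cycle cover produced by \MGREEDY according to its overlap-to-length ratio.

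First, I would run \MGREEDY to obtain a cycle cover $\mathcal{C}=\{C_1,\dots,C_k\}$ of the input strings. Standard arguments show that $\sum_i \ov(C_i) \ge \ov(\OPT)$, so opening each cycle into a linear string yields a superstring of length at most $|\OPT| + \sum_i w_{\min}(C_i)$, where $w_{\min}(C_i)$ denotes the weight of a minimum-overlap edge of~$C_i$. The classical 4-approximation then follows from the bound $w_{\min}(C_i)\le |C_i|/2$, and Kaplan--Shafrir improve this to~3.5 by combining this approach with the Overlap Rotation Lemma of~\cite{breslauer}, which gives extra information about how \GREEDY traverses each cycle of~$\mathcal{C}$.

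The key new step I would take is to assign to each cycle $C$ the ratio $\alpha(C) = \ov(C)/|C| \in [0,1)$ and split the global accounting at a threshold $\alpha^{*}$ to be optimised later. For \emph{low-ratio} cycles ($\alpha(C) \le \alpha^{*}$), $|C|$ is large relative to $\ov(C)$, so each string on $C$ already contributes substantial additive length not absorbed by overlaps; the standard inequalities between $w_{\min}(C)$, $|C|$ and $\ov(C)$ then give a per-string charge strictly below~$3.425$. For \emph{high-ratio} cycles ($\alpha(C) > \alpha^{*}$), Fine--Wilf-type periodicity lemmas should force all strings on $C$ to share a short common period, so consecutive maximum overlaps satisfy additional tight relations; this lets one either redirect \OPT through $C$ with bounded loss, or directly bound the length that \GREEDY produces on the strings of~$C$, sharpening the Overlap Rotation Lemma argument. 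Balancing the losses in the two regimes should reduce to a quadratic in $\alpha^{*}$ whose positive root is exactly $(13+\sqrt{57})/6$.

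The main obstacle I expect is the high-ratio case: there the weight-based bound $\sum_i \ov(C_i)$ is essentially tight, so the entire improvement over~$3.5$ must come from exploiting periodicity. Concretely, I would need to show that once $\alpha(C)$ exceeds $\alpha^{*}$, the strings of $C$ are so strongly periodic that \GREEDY is forced into a merge order mimicking a near-optimal traversal of~$C$, and in particular that these strings cannot be scattered too far apart in the final superstring by interactions with other cycles. Making the interplay between distinct high- and low-ratio cycles tight, and choosing $\alpha^{*}$ so that the charges in the two regimes exactly match, is the delicate part of the argument.
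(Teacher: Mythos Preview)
Your high-level intuition---classify cycles of the \MGREEDY cycle cover by their overlap-to-length ratio and treat the regimes differently---does match the paper, but the concrete plan has real gaps.

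First, the framework is garbled. The quantity one must bound is $o$, the sum of overlaps of the \emph{cycle-closing} edges (the last, hence smallest-overlap, edge \MGREEDY adds to each cycle), not the total overlap $\ov(C)$. Opening the cycles yields a superstring of length $w+o$, not $|\OPT|+\sum_i w_{\min}(C_i)$; there is no inequality ``$w_{\min}(C_i)\le |C_i|/2$'' in the standard argument, and the $4$-approximation comes from Blum et al.'s bound $o\le n+2w$. More importantly, the connection to \GREEDY is not direct: one uses Blum et al.'s inequality $|\GREEDY(S)|\le 2n+o_c-w_c$, where $o_c,w_c$ refer only to the \emph{culprit} cycles, and then applies the bound $o\le n+\alpha w$ (with $\alpha=(1+\sqrt{57})/6$, so that $2+\alpha=(13+\sqrt{57})/6$) to that subinstance. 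Without culprits you are analyzing \MGREEDY, not \GREEDY.

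Second, and more substantively, the improvement does not come from a single threshold case-split whose balance solves a quadratic. The paper proves two \emph{independent} inequalities,
\[
o\le n+\sum_{c\ \text{small}}w(c)+\tfrac32\sum_{c\ \text{large}}w(c)
\qquad\text{and}\qquad
o\le n+\gamma\sum_{c\ \text{small}}w(c)+\sum_{c\ \text{large}}w(c),
\]
and takes a convex combination; the $\sqrt{57}$ arises from the system relating $\alpha$ and $\gamma$, not from a single threshold. Your outline essentially covers only the first bound (a refinement of Kaplan--Shafrir via the Overlap Rotation Lemma, together with periodicity of small-cycle strings). The second bound is the main new idea and is entirely absent from your plan: one lower-bounds $n$ by the length of the minimum Hamiltonian cycle $\C_0$, then constructs a sequence of \emph{edge swaps} transforming $\C_0$ into the optimal cycle cover $\C$ and shows that each swap increases the total overlap by a controlled amount tied to the small cycle being completed. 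This requires replacing each small cycle by a single long periodic string beforehand, and introducing a relation between small and large cycles (each large cycle related to at most two small ones) to absorb losses. Your suggestion that ``\GREEDY is forced into a merge order mimicking a near-optimal traversal of $C$'' and that high-ratio strings ``cannot be scattered too far apart'' is precisely what one cannot establish; the paper sidesteps this by working with $\C_0$ rather than with \GREEDY's actual run.
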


Furthermore, we obtain a better approximation guarantee for SSP, improving slightly upon the algorithm by Mucha~\cite{mucha}.

\begin{theorem}\label{thm:ssp}
  The Shortest Superstring problem can be approximated within a factor of $(37+\sqrt{57})/18\approx 2.475$.
\end{theorem}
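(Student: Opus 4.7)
The plan is to follow the well-established two-algorithm framework used by Blum et al., Sweedyk, and Mucha, but plug in the sharper \GREEDY bound from Theorem~\ref{thm:greedy}. Concretely, the algorithm runs two subroutines in parallel---plain \GREEDY and a cycle-cover-based concatenation algorithm that computes a minimum-weight \CC in the overlap graph and then opens each cycle at a cheap edge and concatenates the resulting chains---and returns whichever output is shorter. Let \OPT denote the optimal superstring length, and let $\|CC^*\|$ be the length of the cycle cover, viewed as a collection of superstrings obtained by writing each cycle as a circular merge. The standard lower bound $\|CC^*\| \le \OPT$ will play a central role.

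First, I would invoke Theorem~\ref{thm:greedy} to obtain $|\GREEDY(S)| \le r_G \cdot \OPT$ with $r_G = (13+\sqrt{57})/6$. Second, I would use the classical analysis of the cycle-cover-based algorithm (as in Blum et al. and refined by Kaplan--Shafrir) to show that its output has length at most $\|CC^*\| + \OPT$, so in particular at most $2 \cdot \OPT$. By itself, taking the shorter of the two outputs already gives $\min(r_G, 2) \cdot \OPT = 2 \cdot \OPT$, which is not yet strong enough. The crucial step is therefore to extract from the proof of Theorem~\ref{thm:greedy} a \emph{parametric} version: \GREEDY's output can be bounded as a function of some structural parameter of the optimal \CC (roughly, the average overlap-to-length ratio of the cycles, as suggested by the title of the paper). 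Instances where this ratio is small (many ``short'' cycles in the sense of the paper) force \GREEDY close to~\OPT, while large ratios cause \GREEDY to approach $r_G \cdot \OPT$ but simultaneously force $\|CC^*\|$ to be small relative to~\OPT, which tightens the cycle-cover algorithm's bound.

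Third, I would combine the two parametric bounds: express both $|\GREEDY(S)|/\OPT$ and $|\textsf{CC-Algo}(S)|/\OPT$ as functions of the same structural parameter $\lambda \in [0,1]$, then take the pointwise minimum and maximize over $\lambda$. Matching the two curves at their crossover point produces a quadratic whose positive root is exactly $(37+\sqrt{57})/18$, and the algebraic identity $\sqrt{57}$ in both Theorems~\ref{thm:greedy} and~\ref{thm:ssp} confirms that the same quadratic governs both bounds. The main obstacle is the parametric refinement of Theorem~\ref{thm:greedy}: the proof there must be dissected so that the worst-case ratio $r_G$ is only attained under a structural condition on the optimal \CC that makes the cycle-cover-based algorithm strictly better, and quantifying this trade-off tightly is the core technical step. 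Once the parametric bound is in hand, the remaining combination and optimization is a routine calculation.
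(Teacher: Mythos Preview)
Your proposal has a genuine gap in the second step and, as a consequence, in the overall architecture. The claim that the cycle-cover-based concatenation algorithm outputs a superstring of length at most $\|CC^*\| + \OPT \le 2\cdot\OPT$ is false: opening each cycle of the minimum-length cycle cover and concatenating gives a string of length $w + o$, where $o$ is the total overlap of the cycle-closing edges, and one only has $o \le n + \alpha\cdot w$ (the paper's main inequality~\eqref{eq:main_o_c_bound}), not $o \le n$. Thus this algorithm is merely a $(2+\alpha)$-approximation, the same bound you quote for \GREEDY, and no parametric trade-off between the two can push below $2+\alpha \approx 3.425$. Your subsequent plan to ``match two curves at their crossover'' therefore cannot reach $2.475$.

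The paper obtains Theorem~\ref{thm:ssp} by a different route that does not involve \GREEDY at all. It first computes the representative strings $\mathcal{R}$ via \MGREEDY (total length $w + o \le (2+\alpha)\cdot n$), and then runs a $\delta$-approximation for Maximum Asymmetric TSP on the overlap graph of $\mathcal{R}$ to order and overlap these representatives well. The key supporting fact (Lemma~\ref{lm:optSuperstringForR}) is that the \emph{optimal} superstring of $\mathcal{R}$ has length at most $2n$, so the MaxATSP step recovers at least a $\delta$-fraction of the gap $(w+o)-2n$. This yields the approximation factor $2 + (1-\delta)\cdot\alpha$ of Theorem~\ref{thm:mgreedy-to-approx}, and plugging in $\delta = \tfrac{2}{3}$ from the Kaplan et al.\ or Paluch--Elbassioni--van Zuylen MaxATSP algorithm gives $2 + \alpha/3 = (37+\sqrt{57})/18$ by a purely linear calculation. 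The $\sqrt{57}$ appearing in both theorems thus traces back to the same constant $\alpha$ from~\eqref{eq:main_o_c_bound}, not to a quadratic crossover between two competing algorithms as you conjectured.
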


Finally, our techniques also imply better approximation guarantees for \TGREEDY and \MGREEDY; see Section~\ref{sec:approx}.

\section{Definitions}\label{sec:definitions}

Here, we review useful notation and concepts from previous works~\cite{blum,breslauer,kaplanshafrir} that are necessary to explain our contribution in more detail in Section~\ref{sec:ourContribution}.

By $S=\{s_{1},\dots,s_{m}\}$ we denote the input consisting of $m\ge 2$ finite strings. Without loss of generality (w.l.o.g.), we assume that no string in $S$ is a substring of another string in $S$. This is because the addition of any substring of a string in $S$ to the input cannot modify the superstring that any algorithm considered here outputs.

By $|s|$ we denote the length (i.e., number of characters) of a string $s$.
By $s[i,j]$ we denote the substring of $s$ starting at its $i$-th character and ending at its $j$-th character, where $j\in [i,|s|]$. For any two strings $s$ and $t$, $st$ will denote their concatenation.

\paragraph{Overlaps and distances.}
By $\ov(s,t)$ we denote the longest (maximum) overlap to merge a string $s$ with a different string $t$, i.e., $\ov(s,t)=s[|s|-i+1,|s|]$, where $i$ is the  largest integer for which $s[|s|-i+1,|s|]=t[1,i]$ holds. For instance, for $s$='bababa' and $t$='ababab', we have $\ov(s,t)$='ababa'. By $\ov(s,s)$ we denote the longest self-overlap of string $s$ which has length smaller than $|s|$; for instance, $\ov(s,s)$='baba' for $s$='bababa'.

By $\pref(s,t)$ we denote the prefix of maximally merging string $s$ with string $t$, i.e., assuming that $s=uv$ and $t=vz$ for strings $u$, $v=\ov(s,t)$ and $z$, it holds that $\pref(s,t)=u$.
In the same way, we define $\pref(s,s)$ so that $s = \pref(s,s)\ov(s,s)$.
The \emph{distance} $\dist(s,t)=|\pref(s,t)|$ is the number of characters of the prefix; possibly $\dist(s,t)\neq \dist(t,s)$.

\paragraph{Distance and overlap graphs.}
The \emph{distance graph} $G_{\dist}(S)=(V,E,\dist(,))$ is a complete directed graph with self-loops, where $|V|=m$, $|E|=m^{2}$. Each node corresponds to a string in $S$ and the weight of a directed edge $(s,t)$ equals $\dist(s,t)$, the distance to merge string $s$ with the (not necessarily distinct) string $t$. Note that the edge lengths satisfy the triangle inequality $\dist(s,t)\leq \dist(s,t')+\dist(t',t)$ as one always obtains the longest overlap by directly merging $s$ to $t$.

Similarly, the \emph{overlap graph} $G_{\ov}(S)$ is a complete directed graph $(V,E,|\ov(,)|)$ with self-loops, where $|V|=m$, $|E|=m^{2}$ and the profit of each directed edge $(s,t)$ equals $|\ov(s,t)|$, i.e., the longest overlap to merge string $s$ with the (not necessarily distinct) string $t$. We will also write $\ov(s,t)$ as $\ov(e)$, where $e=(s,t)$ is a directed edge of the overlap graph.

We can identify an edge $e=(s,t)$ in $G_{\dist}$ or $G_{\ov}$ with the new string $\pref(s,t)t$ which we obtain by merging $s$ and $t$. Repeating this argument, we see that a simple directed path $s_0\rightarrow s_1 \rightarrow \dots \rightarrow s_k$  corresponds to a new string $\pref(s_0,s_1)\dots\pref(s_{k-1},s_k)s_k$ which contains all strings represented by nodes on the path as substrings in the same order. Accordingly, a superstring of $S$ simply corresponds to a directed Hamiltonian path in the graph.
If two strings $s$ and $t$ appear in adjacent positions and in this order (i.e., $s$ precedes $t$) across a superstring, we say that $s$ and $t$ are \emph{merged} in the superstring.

\paragraph{Cycle Covers.}
A \CC in a complete directed weighted graph $G$ with self-loops is a set of directed cycles such that the inner degree and the outer degree of each node of $G$ are both unit. An $x$-cycle, where $x\in[1,m]$, is a directed cycle consisting of $x$ nodes.
If $s$ and $t$ are in the same cycle of a \CC containing edge $(s,t)$, we say that $s$ and $t$ are \emph{merged} in the \CC.

By $w$ we denote the minimum length  of a \CC in $G_{\dist}(S)$, i.e., $w$ is the minimum sum of distances of edges in a \CC in $G_{\dist}(S)$.  A minimum-length \CC in $G_{\dist}(S)$ is a maximum overlap \CC in $G_{\ov}(S)$, since for any edge $(s, t)$, it holds that $|\ov(s,t)| = |s| - \dist(s,t)$. Note that we may have more than one \CC with the same length $w$; to see that, consider the input $S=\{ab^{k}, b^{k+1}, b^{k}a\}$, for which the 3-cycle consisting of strings $ab^{k}, b^{k+1}, b^{k}a$ has length $k+2$, which equals the length of the 2-cycle for strings $ab^{k}, b^{k}a$ plus the length of the 1-cycle for string $b^{k+1}$.

A maximum overlap \CC  in $G_{\ov}(S)$ is computed efficiently in the second step of the \MGREEDY algorithm of Blum et al.~\cite[Theorem 10]{blum}. In a nutshell, \MGREEDY computes an optimal \CC by sorting the edges of the overlap graph non-increasingly by their overlap lengths (breaking ties arbitrarily), and adding an edge $(s,t)$ to the cycle cover if and only if no edge $(s,t')$ or $(s',t)$ has been chosen before $(s,t)$.
Fixing some arbitrary tie-breaking, we denote the resulting maximum overlap cycle cover by $\C(S)$.
For any cycle $c$ of $\C(S)$, the last edge of $c$ added by \MGREEDY to the solution is called the \emph{cycle-closing edge}. We will frequently use the fact that the overlap length of every edge in a cycle $c$ is at least as large as the overlap length of the cycle-closing edge of $c$. The sum of overlap lengths of all cycle-closing edges of $\C(S)$ will be denoted by $o$.

By $|\ALG(S)|$ we denote the length of a superstring $\ALG(S)$ produced by an algorithm \ALG for input~$S$. We use $n=|\OPT(S)|$, where \OPT is an optimal Shortest Superstring algorithm.
Since merging the last string of a superstring with the first string of this superstring gives a \CC in the distance graph (namely, a Hamiltonian cycle), it follows that $w\leq n$.

\paragraph{Representative strings.}
By $s_{c_{0}}\rightarrow s_{c_{1}} \rightarrow \dots \rightarrow s_{c_{r-1}} \rightarrow s_{c_{0}}$ we denote the cycle $c\in \C(S)$ consisting of $r\ge 1$ strings, where the last edge $s_{c_{r-1}}\rightarrow s_{c_{0}}$ always denotes the cycle-closing edge. By $R_{c}$ we denote the string $\pref(s_{c_{0}},s_{c_{1}})\pref(s_{c_{1}},s_{c_{2}})\dots\pref(s_{c_{r-2}},s_{c_{r-1}})s_{c_{r-1}}$, i.e., the string obtained by opening the cycle-closing  edge $s_{c_{r-1}}\rightarrow s_{c_{0}}$ of cycle $c$. String $R_{c}$ will be called the \emph{representative} string of cycle $c$; note that $R_c$ contains all strings of $c$ as substrings. As $\mathcal{R}$ we denote the set of all representative strings. It follows that a superstring of the strings in $\mathcal{R}$ is, also, a superstring of the strings in $S$.

\section{Our Contribution}\label{sec:ourContribution}

Our technical result is the following upper bound on $o$, the total overlap length of cycle-closing edges, in terms of the shortest superstring length $n$ and $w$, the total length of all cycles of the minimum-length cycle cover $\C(S)$:
\begin{equation}\label{eq:main_o_c_bound}
  o \le n+\alpha\cdot w  \quad\quad \text{for\ } \alpha = \frac{1 + \sqrt{57}}{6} \approx 1.425\,.
\end{equation}
This improves upon similar bounds on $o$ in~\cite{blum,kaplanshafrir}, which we outline below.
In the following two subsections, we explain how this inequality implies Theorems~\ref{thm:greedy} and~\ref{thm:ssp}.
The remaining part of the paper is devoted to proving~\eqref{eq:main_o_c_bound}.

\subsection{Improved Approximation Guarantee of \GREEDY}

Assuming that all $|E|=m^{2}$ edges of $G_{\ov}(S)$ are ordered by non-increasing overlap, breaking ties arbitrarily, \GREEDY works by going down this list and picking edge $e$ if:
\begin{itemize}
  \item $e$ does not share a head or tail with an edge $e'$ that \GREEDY picked in a previous step (such $e'$ precedes $e$ in the ordered list of edges) and
  \item $e$ is not a cycle-closing edge.
\end{itemize}
Otherwise, \GREEDY moves to the next edge in the order. Clearly, \GREEDY outputs a directed path of $m-1$ edges which gives a superstring by merging adjacent strings.
Note that the computation of $\C(S)$ by \MGREEDY only differs from \GREEDY by not using the second condition.

Blum et al.\ \cite{blum} call the edges rejected by \GREEDY for not satisfying the second condition
(but satisfying the first condition) \emph{bad back edges}.
The reason that they are called ``back edges'' is that one can number the input strings $S=\{s_{1},\dots,s_{m}\}$ so that
the superstring $\GREEDY(S)$ contains the strings in the same order, i.e.,
$s_i$ appears before $s_j$ in $\GREEDY(S)$ if and only if $i<j$.
In this subsection, we assume that the input strings are numbered in this way.

We say that a bad back edge $e$ \emph{spans interval $[i,j]$} (for $i<j$) if $e = (s_j, s_i)$.
Blum et al.\ show that the intervals spanned by two bad back edges are either disjoint or one is contained in the other,
i.e., these intervals form a laminar family \cite[Lemma~13]{blum}. A \emph{culprit} is a bad back edge $e$ such that the interval spanned by $e$ is minimal in this laminar family (i.e., there is no bad back edge $e'$ such that the interval spanned by $e'$ is properly contained in the interval spanned by $e$). See Figure~\ref{fig:culprit} for an illustration.
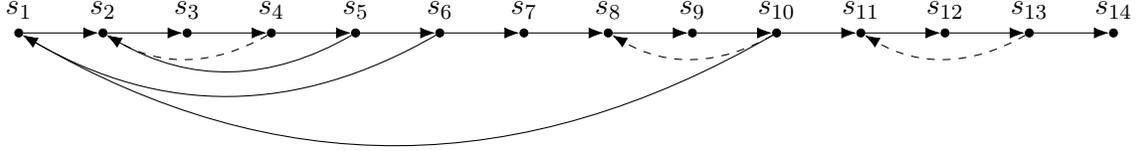
\begin{figure}
  \begin{center}
    \begin{tikzpicture}[-{Latex[length=2mm]}, vertex/.style={draw,circle,fill=black,inner sep=0pt,minimum size=3pt}]

      \node[vertex] (v1) at (0,0) [label={above:$s_1$}] {};
      \foreach \i in {2,...,14} {
          \pgfmathtruncatemacro{\prev}{\i - 1}
          \node[vertex,right=of v\prev] (v\i) [label={above:$s_{\i}$}] {};
          \draw (v\prev) to (v\i);
        }
      \foreach \i/\j in {5/2, 6/1, 10/1} {
          \draw (v\i) [bend left] to (v\j);
        }
      \foreach \i/\j in {4/2, 10/8, 13/11} {
          \draw[dashed] (v\i) [bend left] to (v\j);
        }
    \end{tikzpicture}
    \caption{Illustration of \emph{culprits}. The superstring returned by \GREEDY merges the strings $s_1$ to $s_{14}$ in this order as indicated by the path (however, the order in which \GREEDY picks edges $(s_{i-1}, s_i)$ is different). The bend edges are the \emph{bad back edges}. Out of the bad back edges, the dashed edges are the \emph{culprits}. }
    \label{fig:culprit}
  \end{center}
\end{figure}
A cycle is called \emph{culprit} if its cycle-closing edge is a culprit.

Let $w_c$ denote the sum of the lengths of culprit cycles and let $o_c$ be the sum of overlap lengths of culprit edges.
Blum et al.\ showed the following two inequalities (paragraph after the proof of Lemma 17 in \cite{blum}):
\begin{align}
  |\GREEDY(S)| & \leq 2n+o_{c}-w_{c} \label{eq:BlumFirst}
  \\
  o_{c}        & \leq n + 2w_{c} \label{eq:BlumSecond}
\end{align}
Plugging \eqref{eq:BlumSecond} into \eqref{eq:BlumFirst}, we have $|\GREEDY(S)|\leq 2n+o_c -w_c \leq 3n+w_{c} \leq 4n$, since $w_c \leq w\le n$. By using the Overlap Rotation Lemma of \cite{breslauer}, Kaplan and Shafrir \cite{kaplanshafrir} improved \eqref{eq:BlumSecond} to $o_c \leq n+1.5w_c $ and, hence, the upper bound on the approximation ratio of \GREEDY to 3.5 since $|\GREEDY(S)|\leq 2n+o_c -w_c \leq 3n+0.5\cdot w_c\leq 3.5n$.

Let $S_c\subseteq S$ be the set of input strings which lie on culprit cycles.
Blum et al.\ show that the application of \MGREEDY on $S_c$ outputs exactly the culprit cycles \cite[Lemma~15]{blum} (see also Observation~\ref{obs:CCofSubset}).
Therefore, our technical result in~\eqref{eq:main_o_c_bound} applied to input $S_c$ implies $o_{c}\leq n_{c}+\alpha\cdot w_{c}$ where $n_{c}\leq n$ equals the length of the shortest superstring for $S_c$.
Plugging this into \eqref{eq:BlumFirst}, we have:
\begin{equation}\label{eq:greedy2}
  |\GREEDY(S)|\leq 2n+n_c+(\alpha-1)\cdot w_c  \le 3n+(\alpha-1)\cdot w_c \le (2+\alpha)\cdot n \approx 3.425n\,.
\end{equation}

\subsection{Improved Approximation Guarantee for SSP}
\label{sec:approx}

As discussed before, the algorithm \MGREEDY computes $\C(S)$ or, more specifically, the set of representative strings $\mathcal{R}$ for all cycles. It then outputs the superstring that is obtained by concatenating all representative strings in an arbitrary order. The total length of the representative strings is $w+o$, i.e., the minimum length of a cycle cover in $G_{\dist}(S)$ plus the sum of overlaps of all cycle-closing edges of the cycle cover. Our main result in \eqref{eq:main_o_c_bound} states that $o\le n+\alpha\cdot w$. Therefore, the superstring computed by \MGREEDY has length $w+o\le n+(1+\alpha)\cdot w \le (2+\alpha)\cdot n$. Hence, just as for \GREEDY, we get that \MGREEDY is a $(2+\alpha)$-approximation algorithm, which improves upon the upper bound of $3.5$ implied in \cite{kaplanshafrir}.

Instead of just concatenating the representative strings, we can also attempt to overlap them, i.e., to compute a shorter superstring of the representative strings. One possibility is to use an approximation algorithm for Maximum Asymmetric TSP (MaxATSP) for this in order to find a superstring that aims to maximize the total overlap between the representative strings.

The following theorem is adopted from the literature~\cite{breslauer,Mucha07,mucha} (for this particular version we are following \cite[Theorem 21]{Mucha07}) and, combined with our new result for \MGREEDY, results in an improved approximation guarantee for SSP. A proof is included in the appendix for completeness.

\begin{theorem}\label{thm:mgreedy-to-approx}
  If \MGREEDY is a $(2+\alpha)$-approximation algorithm and there exists a $\delta$-approximation algorithm for MaxATSP (for $\delta \le 1$), then there exists a $(2+(1-\delta)\cdot \alpha)$-approximation algorithm for SSP.
\end{theorem}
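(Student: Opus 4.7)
The plan is to post-process the cycle cover computed by \MGREEDY with a MaxATSP subroutine. First, run \MGREEDY on $S$ to obtain $\C(S)$ and the representative set $\mathcal{R}$, whose total length equals $w + o$. Then build the overlap graph on $\mathcal{R}$ and invoke the $\delta$-approximation for MaxATSP, yielding a Hamiltonian structure on $\mathcal{R}$ whose total overlap is at least $\delta$ times the maximum possible along any Hamiltonian path. Merging the representatives along this structure produces a superstring of $\mathcal{R}$, and hence a superstring of $S$, since every $R_c$ contains all strings of cycle $c$ as substrings. Denoting by $\Omega^{*}$ the maximum overlap over Hamiltonian paths on $\mathcal{R}$, the output length is at most $(w+o) - \delta\,\Omega^{*}$.

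The key ingredient is a classical bound stating that the shortest superstring of $\mathcal{R}$ has length at most $n + w$, equivalently $\Omega^{*} \ge (w+o) - (n+w) = o - n$. The intuition is that, starting from an optimal superstring $T$ of $S$ of length $n$, for each cycle $c$ one can choose a cyclic rotation of the representative $R_c$ and align it against a canonical occurrence in $T$ of one of its member strings, so that $R_c$ extends $T$ by at most $|c|$ new characters. Summing over cycles yields a superstring of $\mathcal{R}$ of length at most $n + \sum_{c} |c| = n + w$.

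Combining the two bounds, the output length is at most $(w+o) - \delta(o-n) = (1-\delta)(w+o) + \delta(n+w)$. The hypothesis that \MGREEDY is a $(2+\alpha)$-approximation yields $w+o \le (2+\alpha)\,n$, and the elementary cycle-cover inequality $w \le n$ bounds the second summand by $2\delta n$. Altogether,
\[
  (1-\delta)(2+\alpha)\,n + 2\delta n \;=\; \bigl(2 + (1-\delta)\alpha\bigr)\,n,
\]
which is exactly the claimed approximation ratio.

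The main obstacle is verifying the classical bound that the shortest superstring of $\mathcal{R}$ has length at most $n+w$. Although folklore in the area, its proof requires a careful positional argument: one selects the cyclic rotation of each representative and orders the cycles by earliest occurrence in $T$ in such a way that the extensions beyond $T$ contributed by the different cycles telescope to $\sum_c |c| = w$ rather than accumulating in a worse way. A minor technical point is converting the Hamiltonian cycle returned by MaxATSP into a Hamiltonian path on $\mathcal{R}$; this is handled in the standard way by deleting the edge of smallest overlap in the cycle, a loss that is absorbed without affecting the leading constants of the analysis.
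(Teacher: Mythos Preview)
Your approach is essentially the same as the paper's: compute the representatives $\mathcal{R}$, apply MaxATSP to the overlap graph on $\mathcal{R}$, and use the bound that the shortest superstring of $\mathcal{R}$ has length at most $n+w$ (equivalently, $\Omega^{*}\ge o-n$). The final arithmetic is identical.

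Two points deserve tightening.

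\emph{Tour versus path.} Deleting the smallest-overlap edge of a $\delta$-approximate tour does \emph{not} in general yield a $\delta$-approximate Hamiltonian path; you can lose a $1/|\mathcal{R}|$ fraction of the overlap, which is not ``absorbed'' when the theorem asserts the exact constant $2+(1-\delta)\alpha$. The paper avoids this by the standard dummy-vertex reduction: add one node with all incident edges of weight $0$, so any $\delta$-approximate tour in the augmented graph is, after removing the dummy, a $\delta$-approximate Hamiltonian path in the original graph. This gives the exact bound you need for $(w+o)-\delta\,\Omega^{*}$.

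\emph{The bound $|\OPT(\mathcal{R})|\le n+w$.} Your sketch (choose a ``cyclic rotation of $R_c$'' and argue that extensions ``telescope'') is more delicate than necessary and, as stated, not quite right: a rotation of $R_c$ need not contain $R_c$, so a superstring of rotations is not obviously a superstring of $\mathcal{R}$. The paper's argument is simpler and direct: pick the single string $s_{c_0}$ from each cycle $c$, take an optimal superstring of $\{s_{c_0}:c\}\subseteq S$ (length at most $n$), and then replace each occurrence of $s_{c_0}$ by $s(c)\,s_{c_0}$, which has $s_{c_0}$ as a suffix and $R_c$ as a prefix. Each replacement adds exactly $w(c)$ characters, giving a superstring of $\mathcal{R}$ of length at most $n+\sum_c w(c)=n+w$. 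No telescoping or ordering argument is needed.
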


Using the $\frac{2}{3}$-approximation algorithm for MaxATSP of \cite{kaplanATSP} or the more recent and simpler $\frac{2}{3}$-approximation algorithm of \cite{PaluchEZ12}, Theorem~\ref{thm:mgreedy-to-approx} with $\delta=\frac{2}{3}$ implies that we get an approximation guarantee of $\frac{37+\sqrt{57}}{18}\approx 2.475$. This improves slightly upon the approximation guarantee of $2\frac{11}{23}\approx 2.478$ of the currently best SSP algorithm~\cite{mucha}. The use of a better than $\frac{2}{3}$-approximation algorithm for MaxATSP as a black-box will give an even smaller approximation guarantee for SSP\footnote{
  Recently, Paluch~\cite{paluch} announced a $0.7$-approximation algorithm for MaxATSP, which would
  give a $2\frac{33}{76}\approx 2.434$-approximation for SSP when using the result from~\cite{breslauer,Mucha07} in a black-box way.
  Setting $\delta = 0.7$ in Theorem~\ref{thm:mgreedy-to-approx} directly implies an improved $2.428$-approximation for SSP.}.

\paragraph{\TGREEDY.}
The \TGREEDY algorithm of Blum et al.\ works by first computing the representative strings $\mathcal{R}$ and then, rather than applying a possibly complicated approximation algorithm for MaxATSP, applying \GREEDY to this set of representative strings. As \GREEDY gives a $\frac12$-approximation for such instances of MaxATSP~\cite{tarhio} (more precisely, for the longest Hamiltonian path, which is sufficient),
using $\delta=\frac{1}{2}$ in Theorem~\ref{thm:mgreedy-to-approx}, we get that \TGREEDY is a $\frac{25+\sqrt{57}}{12}\approx 2.7125$-approximation algorithm, which improves upon the upper bound of $2.75$ (implied in  \cite{breslauer,Mucha07}).

\section{The Big Picture}\label{sec:bigPicture}

\paragraph{Small, large, and extra large cycles.}
Our key idea is to partition cycles into a few types according to the ratio between their length and the overlap length of their cycle-closing edge, and treat these types differently in the analysis.
To this end, let $w(c)$ denote the length of a cycle $c$ of $\C(S)$, and let $o(c)$ denote the overlap length of the cycle-closing edge of $c$, i.e., $o(c) = |\ov(s_{c_{r-1}}, s_{c_{0}})|$, where $(s_{c_{r-1}}, s_{c_{0}})$ is the cycle-closing edge. A cycle $c$ of $\C(S)$ is
\begin{itemize}
  \item a \emph{small} cycle if $o(c)> 2w(c)$,
  \item a \emph{large} cycle if $\alpha\cdot w(c)<o(c)\le 2w(c)$, and
  \item an \emph{extra large} cycle if $o(c)\le \alpha\cdot w(c)$,
\end{itemize}
where $\alpha$ is the parameter defined in~\eqref{eq:main_o_c_bound}.
The set of extra large cycles of $\C(S)$ will be denoted by $\mathcal{X}(S)$, the set of large cycles of $\C(S)$ will be denoted by $\mathcal{L}(S)$, and the set of small cycles of $\C(S)$ will be denoted by $\mathcal{S}(S)$.

In Section \ref{sec:large43}, we show that we can assume w.l.o.g.\ that $\C(S)$ contains no extra large cycle.  For this, we exploit the slack in the right-hand side of $o(c)\le \alpha\cdot w(c)$ for an extra large cycle $c$, compared to the right-hand side of $o\leq n+\alpha\cdot w$ that we want to show.

\paragraph{Outline.}
To get our technical result in~\eqref{eq:main_o_c_bound}, we prove two independent upper bounds on $o$.
In Section~\ref{sec:firstBound}, we improve $o\le n+1.5w= n+  1.5\cdot\sum_{c\in \mathcal{S}(S)}w(c)+ 1.5\cdot\sum_{c\in \mathcal{L}(S)}w(c)$ of \cite{kaplanshafrir} to
\begin{equation}\label{eq:firstBound}
  o\leq n+  \sum_{c\in \mathcal{S}(S)}w(c)+\frac32\cdot\sum_{c\in \mathcal{L}(S)}w(c)\,.
\end{equation}

On its own, the improvement by $\frac{1}{2}\cdot\sum_{c\in \mathcal{S}(S)}w(c)$ over \cite{kaplanshafrir} is insignificant because the total length of the small cycles may be very small compared to the total length of the large cycles.
However, we show a different upper bound which is better when small cycles contribute only very little to $w$.
Namely, in Section \ref{sec:secondBound}, we prove that
\begin{equation}\label{eq:secondBound}
  o\leq n +\gamma\cdot \sum_{c\in \mathcal{S}(S)}w(c) +\sum_{c\in \mathcal{L}(S)}w(c)
\end{equation}
for a positive constant $\gamma$, and this is sufficient to obtain $o\leq n+(1.5-\epsilon)\cdot w$ for a positive constant $\epsilon$, when combined with the first upper bound on $o$. Naturally, the smaller $\gamma$ we get, the smaller the resulting upper bound.
We will require that $\gamma$ and the aforementioned parameter $\alpha$ satisfy the following four constraints:
\begin{align}
  (3-2\alpha)\cdot \gamma                           & = 2-\alpha \label{eqn:alpha_gamma_constr_extra_large}
  \\
  3\cdot \left(\alpha - \frac{2}{\gamma - 2}\right) & \ge 1 \label{eqn:alpha_gamma_constr_related_cycles}
  \\
  \frac{5}{2} + \frac{1}{2(\alpha - 1)}             & \le \gamma \label{eqn:alpha_gamma_constr_good_edge}
  \\
  \gamma                                            & \le (\gamma-1)\cdot \alpha \label{eqn:alpha_gamma_constr_sym_diff_4}
\end{align}
Solving this system of inequalities, while minimizing $\alpha$, yields
\begin{equation*}
  \alpha = \frac{1 + \sqrt{57}}{6} \approx 1.425
  \quad\quad\text{and}\quad\quad
  \gamma = \frac{31 + 3\sqrt{57}}{14} \approx 3.832\,.
\end{equation*}
Note that~\eqref{eqn:alpha_gamma_constr_good_edge} and~\eqref{eqn:alpha_gamma_constr_sym_diff_4} are not tight,
i.e., $\alpha$ and $\gamma$ are determined
by~\eqref{eqn:alpha_gamma_constr_extra_large} and~\eqref{eqn:alpha_gamma_constr_related_cycles}.

Multiplying \eqref{eq:firstBound} by $(2\alpha-2)$ and \eqref{eq:secondBound} by $(3-2\alpha)$ and adding the two resulting inequalities we get
\begin{align*}
  o & \le n + ((2\alpha-2)+(3-2\alpha)\cdot \gamma) \sum_{c\in \mathcal{S}(S)}w(c)
  \,\,+\,\, ((3\alpha-3)+(3-2\alpha))\sum_{c\in \mathcal{L}(S)}w(c)                                                   \\
    & = n +  \alpha \sum_{c\in \mathcal{S}(S)}w(c)\,\,+\,\, \alpha\sum_{c\in \mathcal{L}(S)}w(c) = n+\alpha\cdot w\,,
\end{align*}
where we use~\eqref{eqn:alpha_gamma_constr_extra_large} in the second step. This shows~\eqref{eq:main_o_c_bound}, as desired.

\paragraph{Intuition.}
\label{sec:intuition}

Before we start with formal proofs, we give some intuition and explain the main ideas behind our technical contribution.
First, we observe in Section~\ref{sec:large43} that we can assume that there are no extra large cycles (as they can be handled separately), which will come in handy for the second bound.
Note that if all (remaining) cycles are large, then our proof is complete as summing over all cycles gives $ o\leq 2\cdot w \le n + w$. On the other hand, if all cycles are small,
the first bound~\eqref{eq:firstBound} gives $o\leq n + w$, again implying a better bound than in~\eqref{eq:main_o_c_bound}. This means that it is the presence of both small and large cycles that makes the analysis challenging.

To facilitate the analysis of small cycles,
we show in Section~\ref{sec:smallStrings} that we can make the following assumption: If an optimal superstring merges two strings from one small cycle $c$, then these two strings must be merged in the small cycle $c$ as well. This essentially follows from the large amount of overlap length (relatively to $w(c)$) in small cycles.

We obtain the first bound by proving a lower bound on $n$, the optimal superstring length. Roughly speaking, we show that each small cycle $c$ must contribute at least $o(c) - w(c)$ to $n$, for which we use that strings of small cycles must be relatively long (longer than $o(c) > 2w(c)$) together with a bound from~\cite{blum} on the overlap between two strings from different cycles. For a large cycle, we use a generalization of the Overlap Rotation Lemma from~\cite{breslauer} to carefully pick a single string from this cycle that is suitable for obtaining the lower bound on $n$.

It is the second upper bound that constitutes our main technical contribution.
Recall that $w$, the length of the optimal cycle cover $\C$, is a lower bound on the length of the shortest Hamiltonian cycle $\C_0$ in $G_{\dist}$, which is itself a lower bound on $n$.
In proving the second upper bound, we make use of the difference between $w$ and the length of $\C_0$ and therefore, we derive a stronger lower bound on $n$.
Namely, we construct a careful sequence of edge swaps transforming $\C_0$ into $\C$ such that each step decreases the length of the current cycle cover by at least a certain suitable amount. In a nutshell, when an edge swap in the constructed sequence adds an edge of a small cycle $c$ to the current cycle cover, we show that this must decrease the length of the cycle cover by at least $o(c) - \gamma\cdot w(c)$ minus a term for certain large cycles affected by the swap.
Summing up over all steps will give us the desired lower bound on the length of $\C_0$.

\paragraph{Outline.} Before proving the two bounds using the ideas outlined above, we review useful lemmas from previous work in Section~\ref{sec:morePrelims}
and derive several properties of strings belonging to small cycles in Section~\ref{sec:smallStrings}.
We remark that Sections~\ref{sec:firstBound} and~\ref{sec:secondBound} are independent of each other and can be read in any order.

\section{Preliminaries for the Analysis}\label{sec:prelimsForAnalysis}

We start by observing that \MGREEDY executed on the strings belonging to a subset of cycles of the minimum cycle cover $\C(S)$ produces exactly the same subset of cycles.

\begin{observation}\label{obs:CCofSubset}
  Let $\overline{\C}\subseteq \C(S)$ be a set of cycles and let $\overline{S}\subseteq S$ be the set of input strings that belong to cycles in $\overline{\C}$.
  Then \MGREEDY on input $\overline{S}$ (with the same tie-breaking rule) outputs $\overline{\C}$, which is thus the minimum-length cycle cover of $\overline{S}$, i.e., $\C(\overline{S}) = \overline{\C}$.
\end{observation}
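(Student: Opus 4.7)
The plan is to prove this by induction on the position in the common sorted edge ordering used by \MGREEDY. The underlying intuition is that \MGREEDY's decision to pick an edge $(s,t)$ depends only on whether any previously processed edge leaves $s$ or enters $t$, and on the input $\overline{S}$ all such ``blocking'' edges are preserved because the cycle through $s$ in $\C(S)$ lies entirely inside $\overline{\C}$.

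Formally, I would fix the total ordering of the edges of $G_{\ov}(S)$ used by \MGREEDY and use the restriction of this ordering to edges with both endpoints in $\overline{S}$ as the ordering on $G_{\ov}(\overline{S})$ (with the same tie-breaking). Let $A_i$ denote the set of edges selected by \MGREEDY on $S$ after processing the first $i$ edges, and let $B_j$ denote the set of edges selected by \MGREEDY on $\overline{S}$ after processing the first $j$ edges in the restricted ordering. I would prove by induction the invariant
\[
  B_j \;=\; A_{i(j)} \cap \overline{E},
\]
where $\overline{E}$ is the set of edges in $G_{\ov}(S)$ with both endpoints in $\overline{S}$ and $i(j)$ is the index in the full ordering corresponding to the $j$-th edge of the restricted ordering.

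The induction step considers an edge $e=(s,t)$ with $s,t\in\overline{S}$. If \MGREEDY on $S$ picks $e$, then by the inductive hypothesis no edge of $B_{j-1}$ leaves $s$ or enters $t$, so \MGREEDY on $\overline{S}$ picks $e$ as well. The interesting case, and the one that feels like it could be an obstacle, is when \MGREEDY on $S$ rejects $e$ because a previously chosen edge $(s,t')$ leaves $s$ (the case of an entering edge at $t$ is symmetric). Here I need to argue that the blocking edge $(s,t')$ also survives in $\overline{S}$. But $(s,t')\in\C(S)$ because \MGREEDY on $S$ picks outgoing edges at most once per vertex, and since $s\in\overline{S}$ the cycle of $\C(S)$ containing $s$ lies in $\overline{\C}$; therefore $t'\in\overline{S}$, so $(s,t')\in\overline{E}$ and, by the inductive hypothesis, $(s,t')\in B_{j-1}$, blocking $e$ in the $\overline{S}$ computation too.

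Applying the invariant at the end of the process gives that \MGREEDY on $\overline{S}$ outputs $\C(S)\cap\overline{E}=\overline{\C}$, because the edges of $\C(S)$ incident to vertices in $\overline{S}$ are exactly the edges of the cycles in $\overline{\C}$. Finally, since \MGREEDY is known \cite[Theorem 10]{blum} to compute a minimum-length cycle cover in $G_{\dist}$, the output $\overline{\C}$ is the minimum-length cycle cover of $\overline{S}$, i.e., $\C(\overline{S})=\overline{\C}$.
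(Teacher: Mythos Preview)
Your proof is correct and follows essentially the same approach as the paper's: both hinge on the observation that any edge of $\C(S)$ incident to a vertex of $\overline{S}$ lies entirely inside $\overline{E}$, so the ``blocking'' edges relevant to vertices in $\overline{S}$ are preserved when restricting to $\overline{S}$. The paper compresses this into two sentences (noting that cross-edges are always rejected and hence can be removed without changing \MGREEDY's choices on $\overline{E}$), while you spell out the same reasoning as a careful induction on the edge ordering.
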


\begin{proof}
  Note that \MGREEDY on input $S$ rejects any edge $(s,t)$ between $\overline{S}$ and  $S\setminus \overline{S}$ because there is an incident edge $(s',t)$ or $(s,t')$ with larger (or equal) overlap that precedes $(s,t)$ in the list of edges sorted by their overlap length.
  Thus, when we run \MGREEDY on input $\overline{S}$, it selects exactly the same edges among vertices in $\overline{S}$ as when we run \MGREEDY on input $S$.
\end{proof}

\subsection{Dealing with Extra Large Cycles}\label{sec:large43}

Let $\overline{S}\subseteq S$ be the subset of strings that belong to all small and large cycles of $\C(S)$.
Observation~\ref{obs:CCofSubset} implies that $\C(\overline{S})$ consists of all small and large cycles of $\C(S)$, while
$\C(S-\overline{S})$ consists of all extra large cycles of $\C(S)$.
Let $\hat{w}$ denote the sum of lengths of the (extra large) cycles in $\C(S-\overline{S})$ and let $\hat{o}$ be the sum of overlap lengths of the cycle-closing edges of the cycles in $\C(S-\overline{S})$. Similarly, let $\overline{o}$ be the sum of overlap lengths of the cycle-closing edges in $\C(\overline{S})$ and let $\overline{w}$ be the sum of lengths of the cycles in $\C(\overline{S})$. Proving~\eqref{eq:main_o_c_bound} for input $\overline{S}$ implies that $\overline{o}\leq |\OPT(\overline{S})|+\alpha\cdot \overline{w}$, and assuming this, we show $o\leq n+\alpha\cdot w$. Indeed, we take the sum of inequality $\overline{o}\leq |\OPT(\overline{S})|+\alpha\cdot \overline{w}$ with inequality $\hat{o}\leq \alpha\cdot\hat{w}$ (which holds by the definition of extra large cycles) and obtain:
\begin{equation*}
  o = \overline{o}+\hat{o}\leq |\OPT(\overline{S})|+\alpha\cdot\overline{w}+\alpha\cdot\hat{w}=|\OPT(\overline{S})|+\alpha\cdot w\le n+\alpha\cdot w
\end{equation*}
where the penultimate step uses $w=\overline{w}+\hat{w}$ and the last inequality uses $|\OPT(\overline{S})|\le  |\OPT(S)|=n$, which follows from $\overline{S}\subseteq S$. Therefore, for proving~\eqref{eq:main_o_c_bound}, we assume w.l.o.g.~that $\C(S)$ has no extra large cycle.

\subsection{Useful Lemmas from Previous Work}\label{sec:morePrelims}

We start with describing further concepts from the literature.
A \emph{semi-infinite} string is defined as the concatenation of an infinite number of finite non-empty strings. If these strings are the same string $x$, then the semi-infinite string will be denoted by $x^{\infty}$ and called \emph{periodic}.
For a semi-infinite string $\alpha$ and integer $k\ge 1$, we denote by $\alpha[k]$ its (semi-infinite) substring which starts at its $k$-th character.

We say that a string $s$ has \emph{periodicity} of length $a$ for $a\le |s|$ if $s$ is a prefix of $x^{\infty}$ for some string $x$ of length $a$.
Note that $\pref(s,s)$ is the shortest string $x$ such that $s$ is a prefix of $x^\infty$. The length of $\pref(s,s)$ is denoted as $\period(s)=|\pref(s,s)|=\dist(s,s)$. In other words, $\period(s)$ is the smallest periodicity of a string.
We will need the following property of periodicity\footnote{%
  Blum et al.~\cite{blum} say that an \emph{equivalence class} $[s]$ (for the equivalence defined below Lemma~\ref{lm:periodicity_gcd}) has periodicity $a$
  if it is invariant under a rotation by $a$ characters, i.e., it holds that $\pref(s,s)=uv=vu$ where $|u|=a$.
  Note that this definition of periodicity is different to ours.
  With a reference to~\cite{fine1965uniqueness}, they remark that if $[s]$ has periodicities $a$ and $b$, then it has periodicity $\gcd(a, b)$ as well.
  We are not aware of a proof of this property for our definition of periodicity.}.

\begin{lemma}\label{lm:periodicity_gcd}
  Any string $s$ with periodicities $a$ and $b$ such that $|s| \ge a + b$ has periodicity $\gcd(a, b)$,
  where $\gcd(a, b)$ is the greatest common divisor of $a$ and $b$.
  Consequently, any periodicity $a$ of $s$ with $a \le |s|/2$ (if any) is an integer multiple of $\period(s)$.
\end{lemma}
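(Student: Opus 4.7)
This statement is a version of the classical Fine and Wilf periodicity theorem, with a slightly weaker hypothesis $|s| \ge a + b$ in place of the tight $|s| \ge a + b - \gcd(a,b)$. My plan is a Euclidean-style induction on $a + b$, reducing a pair $(a, b)$ of periodicities of $s$ to the pair $(a - b, b)$, mirroring the subtraction step of the Euclidean algorithm.

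The technical core is the following claim: if $s$ has periodicities $a$ and $b$ with $a > b$ and $|s| \ge a + b$, then $s$ also has periodicity $a - b$. To prove it I fix an index $i$ with $1 \le i$ and $i + (a - b) \le |s|$ and establish $s[i] = s[i + (a - b)]$ by a short case analysis. If $i + a \le |s|$, I first apply periodicity $a$ to get $s[i] = s[i + a]$ and then periodicity $b$ to get $s[i + a] = s[i + a - b]$; here $i + a - b \ge 1$ because $a > b$. If instead $i + a > |s|$, the hypothesis $|s| \ge a + b$ forces $i > |s| - a \ge b$, so periodicity $b$ applies in the other direction: $s[i] = s[i - b]$. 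Then periodicity $a$ applied forward from $i - b$ yields $s[i - b] = s[i - b + a] = s[i + (a - b)]$, which is well defined since $i + (a - b) \le |s|$. This second case is the one place where the full hypothesis $|s| \ge a + b$ is genuinely used, and it is the main obstacle in the proof; the first case uses only $i + a \le |s|$ and $a > b$.

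With the claim in hand, the induction is short. The base case $a = b$ is immediate since $\gcd(a, b) = a$. For $a > b$, the claim produces periodicities $a - b$ and $b$ of $s$, and $|s| \ge a + b \ge (a - b) + b$, so the inductive hypothesis applied to the pair $(a - b, b)$ gives periodicity $\gcd(a - b, b) = \gcd(a, b)$. For the consequence, given a periodicity $a$ of $s$ with $a \le |s|/2$, I set $b = \period(s)$; since $\period(s)$ is the smallest periodicity, $b \le a$, and therefore $a + b \le 2a \le |s|$. The first part then yields that $s$ has periodicity $\gcd(a, b)$. Since $\gcd(a, b) \le b = \period(s)$ and $\period(s)$ is by definition the smallest periodicity of $s$, we must have $\gcd(a, b) = \period(s)$, which means $\period(s) \mid a$, as required.
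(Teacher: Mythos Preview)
Your proof is correct. It differs from the paper's argument in its overall strategy: you run a Euclidean-style induction on $a+b$, reducing the pair $(a,b)$ to $(a-b,b)$ via the key claim that periodicity $a-b$ is inherited whenever $|s|\ge a+b$ and $a>b$. The paper instead gives a direct, non-inductive argument: setting $g=\gcd(a,b)$, $a'=a/g$, $b'=b/g$, it shows that the map $f(i)=(i+b')\bmod a'$ permutes $\{0,\dots,a'-1\}$ transitively (since $\gcd(a',b')=1$), and uses periodicity by $b$ once and periodicity by $a$ to reduce modulo $a$, thereby proving that all $g$-blocks within the first $a$ characters coincide. Your approach is the classical textbook proof of Fine--Wilf and is arguably more elementary; the paper's approach avoids induction at the cost of invoking a small group-theoretic fact. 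Your write-up also spells out the ``Consequently'' clause explicitly (taking $b=\period(s)$ and using $a+b\le 2a\le |s|$), which the paper leaves implicit.
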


\begin{proof}
  Let $g = \gcd(a, b)$, and suppose w.l.o.g.\ that $a < b$.
  Further, let $a' = a/g$ and $b' = b/g$.
  Due to the periodicity by $a$, it is sufficient to prove that for any $i = 1,\dots,a'-1$, it holds that
  $s[1,g] = s[i\cdot g + 1, (i+1)\cdot g]$.
  To this end, using that $s$ has periodicities $a$ and $b$ and that $|s| \ge a + b$, for any $i = 0,\dots,a'-1$, we get
  \begin{equation}\label{eqn:periodicity_gcd}
    s[i\cdot g + 1, (i+1)\cdot g] = s[i\cdot g + 1 + b, (i+1)\cdot g + b] = s[f(i)\cdot g + 1, (f(i)+1)\cdot g]\,,
  \end{equation}
  where we use $f(i) := (i + b')\bmod a'$.
  Note that the cyclic group $\mathbb{Z}/a'\mathbb{Z} = \{0,\dots,a'-1\}$ of integers modulo $a'$ (with addition) is generated by $b'\bmod a'$,
  since $\gcd(a', b') = \gcd(a/g, b/g) = 1$, which follows from $\gcd(a, b) = g$.
  Hence, applying~\eqref{eqn:periodicity_gcd} for $i = 0, f(0), f(f(0)),$ and so on proves that
  $s[1,g] = s[i\cdot g + 1, (i+1)\cdot g]$ for any $i = 1,\dots,a'-1$.
\end{proof}

String $z$ is a \emph{rotation} of string $q$ if $q=uv$ and $z=vu$ for some strings $v$ and $u$
(string $z$ is a rotation of itself if one of them is empty).
Two strings $s$ and $t$ are \emph{equivalent} if $\pref(t,t)$ is a rotation of $\pref(s,s)$, i.e., there exist strings $x$ and $y$ (possibly empty) such that $\pref(s,s)=xy$ and $\pref(t,t)=yx$. Two strings that are not equivalent will be called \emph{inequivalent}.

For any cycle $c=s_{c_{0}}\rightarrow s_{c_{1}} \rightarrow \dots \rightarrow s_{c_{r-1}} \rightarrow s_{c_{0}}$ in $G_{\dist}(S)$, we define $s(c)$ as the string $\pref(s_{c_{0}},s_{c_{1}})\pref(s_{c_{1}},s_{c_{2}})\dots\pref(s_{c_{r-1}},s_{c_{0}})$. Note that $R_c = s(c)\ov(s_{c_{r-1}},s_{c_{0}})$ and $R_c$ is a prefix of $s(c)^\infty$. We define as $\strings(c,s_{c_{l}})$ the string $\pref(s_{c_{l}},s_{c_{l+1}})\dots\pref(s_{c_{l-1}},s_{c_{l}})$, where  subscript arithmetic is modulo $r$  and $0\le l\le r-1$. In other words, $\strings(c,s_{c_{l}})$ is a rotation of $s(c)$ such that string $s_{c_{l}}$ is a prefix of $\strings(c,s_{c_{l}})^\infty$.

The following three lemmas appear in previous works:

\begin{lemma}[Claim 2 in \cite{blum}]\label{lm:Claim2Blum}
  For any cycle $c$ in the distance graph for $S$, every string of $c$ is a substring of $s(c)^{\infty}$.
\end{lemma}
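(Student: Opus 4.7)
The claim I want to establish says that each string $s_{c_l}$ appearing on a cycle $c = s_{c_0} \to s_{c_1} \to \dots \to s_{c_{r-1}} \to s_{c_0}$ lies as a (finite) substring of the periodic semi-infinite string $s(c)^\infty$. My plan is to first show the stronger-looking statement that $s_{c_l}$ is a prefix of $\strings(c, s_{c_l})^\infty$, where $\strings(c, s_{c_l})$ is the cyclic rotation of $s(c)$ obtained by starting at position $l$ in the cycle, and then to translate this into the desired substring statement using that a rotation of $s(c)$ raised to infinity is a suffix of $s(c)^\infty$. The essential building block is the elementary identity $s_{c_l} = \pref(s_{c_l}, s_{c_{l+1}}) \, \ov(s_{c_l}, s_{c_{l+1}})$ together with the fact that $\ov(s_{c_l}, s_{c_{l+1}})$ is a prefix of $s_{c_{l+1}}$; this immediately yields that $s_{c_l}$ is a prefix of $\pref(s_{c_l}, s_{c_{l+1}}) \, s_{c_{l+1}}$.

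I would next run this argument inductively along the cycle. Concretely, I would prove by induction on $k \ge 1$ that
\begin{equation*}
  s_{c_l} \text{ is a prefix of } \pref(s_{c_l}, s_{c_{l+1}}) \pref(s_{c_{l+1}}, s_{c_{l+2}}) \cdots \pref(s_{c_{l+k-1}}, s_{c_{l+k}}) \, s_{c_{l+k}},
\end{equation*}
where indices are taken modulo $r$. The base case is the observation above. For the inductive step, the base case applied at position $l + k - 1$ says that $s_{c_{l+k-1}}$ is a prefix of $\pref(s_{c_{l+k-1}}, s_{c_{l+k}}) \, s_{c_{l+k}}$, so appending and prepending concatenations preserves the prefix relation. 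Setting $k = r$ then yields that $s_{c_l}$ is a prefix of $\strings(c, s_{c_l}) \cdot s_{c_l}$.

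Once I have that $s_{c_l}$ is a prefix of $\strings(c, s_{c_l}) \cdot s_{c_l}$, a straightforward further induction on $k$ gives that $s_{c_l}$ is a prefix of $\strings(c, s_{c_l})^k \cdot s_{c_l}$ for every $k \ge 1$; taking $k$ large enough so that $k \cdot |\strings(c, s_{c_l})| \ge |s_{c_l}|$ (or simply passing to the limit) shows that $s_{c_l}$ is a prefix of the semi-infinite string $\strings(c, s_{c_l})^\infty$. To conclude, I would use the rotation relation: by definition, $s(c) = xy$ and $\strings(c, s_{c_l}) = yx$ for suitable (possibly empty) strings $x$ and $y$ determined by $l$. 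Then $s(c)^\infty = x \cdot (yx)^\infty = x \cdot \strings(c, s_{c_l})^\infty$, so $\strings(c, s_{c_l})^\infty$ is exactly the suffix of $s(c)^\infty$ starting at position $|x| + 1$. Hence the prefix $s_{c_l}$ of $\strings(c, s_{c_l})^\infty$ appears as a (finite) substring of $s(c)^\infty$ at position $|x| + 1$, which is the desired conclusion.

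I do not anticipate a genuine obstacle here; the statement is really a bookkeeping consequence of the definitions of $\pref$, $\ov$, and $\strings(c, \cdot)$. The only point worth handling carefully is the rigorous manipulation of semi-infinite strings, in particular justifying the step $(yx)^\infty = x^{-1} (xy)^\infty$ informally written above, which I would phrase as: $\strings(c, s_{c_l})^\infty$ is the infinite string obtained from $s(c)^\infty$ by deleting its prefix $x$, an operation that is well-defined because $s(c)^\infty$ is genuinely infinite to the right.
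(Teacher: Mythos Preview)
Your argument is correct. The paper itself does not supply a proof of this lemma; it is quoted verbatim as Claim~2 from Blum et al.~\cite{blum} and used as a black box. Your inductive unfolding of the identity $s_{c_l} = \pref(s_{c_l}, s_{c_{l+1}})\,\ov(s_{c_l}, s_{c_{l+1}})$ along the cycle, followed by the rotation observation that $\strings(c,s_{c_l})^\infty = s(c)^\infty[|x|+1]$, is precisely the standard justification (and is essentially how the original source argues it). One minor point you might make explicit for completeness: $|\strings(c,s_{c_l})| \ge 1$, which holds because no input string is a substring of another (so $\pref(s,t)$ is never empty for distinct $s,t$) and $\pref(s,s)$ has length $\period(s)\ge 1$; this is needed so that $\strings(c,s_{c_l})^k$ eventually outgrows $s_{c_l}$.
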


\begin{lemma}[Claim 3 in  \cite{blum}]\label{lm:Claim3Blum}
  If all strings of a subset of $S$ are substrings of a semi-infinite string $t^{\infty}$, then there exists a cycle of length $|t|$ in the distance graph $G_{\dist}(S)$ that contains all these strings.
\end{lemma}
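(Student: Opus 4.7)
The plan is to construct the cycle by ordering the strings according to their starting positions within a single period of $t^\infty$ and then bounding each consecutive distance by the corresponding gap in those starting positions.

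Let $S' \subseteq S$ denote the subset of strings that are substrings of $t^\infty$. For each $s \in S'$, let $R_s \subseteq \{0,1,\ldots,|t|-1\}$ be the set of residues modulo $|t|$ of positions at which $s$ appears in $t^\infty$; by periodicity of $t^\infty$, for every $p \in R_s$ the string $s$ is a prefix of the semi-infinite suffix $t^\infty[p+1]$.

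First I would argue that the sets $R_s$ are pairwise disjoint across $s \in S'$. If some $p \in R_s \cap R_{s'}$ for distinct $s, s' \in S'$, both strings would be prefixes of the same semi-infinite string $t^\infty[p+1]$, so one would be a prefix — hence a substring — of the other, contradicting the standing assumption from Section~\ref{sec:definitions} that no input string is a substring of another. Thus I can pick $p_s \in R_s$ for each $s$ with all the $p_s$'s distinct. After relabelling $S'$ as $s_1,\ldots,s_k$ so that $p_1 < p_2 < \cdots < p_k$, the candidate cycle is $s_1 \to s_2 \to \cdots \to s_k \to s_1$ in $G_{\dist}(S)$.

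Set $g_j := p_{j+1} - p_j$ for $1 \le j < k$ and $g_k := |t| - p_k + p_1$, so that $\sum_{j=1}^k g_j = |t|$. The key step is to verify $\dist(s_j, s_{j+1}) \le g_j$ for every $j$ (indices modulo $k$). Aligning $s_j$ at position $p_j+1$ of $t^\infty$ and $s_{j+1}$ at position $p_{j+1}+1$, periodicity forces the last $\max(|s_j|-g_j,0)$ characters of $s_j$ to coincide with the first $\max(|s_j|-g_j,0)$ characters of $s_{j+1}$, witnessing an overlap of at least that length and hence $\dist(s_j, s_{j+1}) \le \min(|s_j|, g_j) \le g_j$. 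Summing around the cycle gives total distance at most $\sum_j g_j = |t|$, matching the lemma's claim.

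The main potential obstacle is the distinctness of the $p_s$'s, since a single string can occur at many positions of $t^\infty$ when its smallest period divides $|t|$. Arguing disjointness of the residue sets $R_s$ directly — rather than attempting a Hall-style selection argument — dispatches this cleanly via the no-substring assumption. The per-edge bound is then an immediate consequence of the definitions of $\ov$ and $\dist$ combined with the defining periodicity $t^\infty[i+|t|]=t^\infty[i]$ of $t^\infty$.
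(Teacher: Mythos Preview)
The paper does not prove this lemma; it is quoted from Blum et al.\ without proof, so there is no in-paper argument to compare against. Your construction is the standard one and is correct: order the strings by a choice of starting residue modulo $|t|$, bound each consecutive $\dist$ by the corresponding gap, and sum to $|t|$.

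Two small points worth tightening. First, when you assert that the last $|s_j|-g_j$ characters of $s_j$ coincide with the first $|s_j|-g_j$ characters of $s_{j+1}$, you are implicitly using $|s_{j+1}| \ge |s_j|-g_j$; this holds because otherwise $s_{j+1}$ would lie entirely inside the span of $s_j$ in $t^\infty$ and hence be a substring of $s_j$, which the standing no-substring assumption rules out. Second, your argument yields a cycle of length \emph{at most} $|t|$, not exactly $|t|$. That is the intended reading and is all that the applications in the paper require (they derive contradictions with minimality of $\C(S)$); indeed, exact equality can fail, e.g.\ for $t=abab$ and strings $aba$, $bab$, the unique cycle through both has length $2$, not $4$.
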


\begin{lemma}[Lemma 13 in \cite{Mucha07}] \label{lm:lmMucha}
  It holds that $\period(R_{c})=w(c)$ for any cycle $c$ of $\C(S)$.
\end{lemma}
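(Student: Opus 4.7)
\bigskip

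\noindent\textbf{Proof plan for Lemma~\ref{lm:lmMucha}.} My plan is to prove the equality $\period(R_c)=w(c)$ by establishing the two inequalities separately. The upper bound $\period(R_c)\le w(c)$ is essentially already recorded in the paragraph preceding the lemma: it is observed there that $R_c = s(c)\cdot\ov(s_{c_{r-1}},s_{c_0})$ is a prefix of $s(c)^{\infty}$, and by definition the length $|s(c)|$ equals the sum of the distances around cycle $c$, which is exactly $w(c)$. Hence $R_c$ is a prefix of $x^{\infty}$ for the string $x = s(c)$ of length $w(c)$, so $\period(R_c)\le w(c)$ by definition of $\period$.

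The interesting direction is the lower bound $\period(R_c)\ge w(c)$, which I would prove by contradiction, leveraging the optimality of the cycle cover $\C(S)$. Suppose $\period(R_c)=a<w(c)$, and let $p=\pref(R_c,R_c)$ so that $|p|=a$ and $R_c$ is a prefix of $p^{\infty}$. Because $R_c$ contains every string of $c$ as a substring, each such string is itself a substring of the semi-infinite periodic string $p^{\infty}$. By Lemma~\ref{lm:Claim3Blum} (applied with the subset being exactly the strings lying on $c$), there exists a cycle $c'$ in the distance graph $G_{\dist}(S)$ of length $|p|=a$ that contains precisely the strings of $c$ (at least; and since we only need it to cover those strings, we can take $c'$ to be a cycle on exactly the vertex set of $c$, as the construction in Lemma~\ref{lm:Claim3Blum} arranges the given subset into a single cycle of length $|p|$).

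Now I would form a new cycle cover by removing $c$ from $\C(S)$ and inserting $c'$ in its place. Since $c'$ uses exactly the vertex set of $c$, the result is still a valid cycle cover of $S$, and its total length is $w - w(c) + a < w$, contradicting the fact that $\C(S)$ is a minimum-length cycle cover in $G_{\dist}(S)$. Therefore no such $a<w(c)$ exists, and combined with the upper bound we get $\period(R_c)=w(c)$.

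The only subtle point, and the one I would be most careful about when writing this out, is verifying that the cycle $c'$ supplied by Lemma~\ref{lm:Claim3Blum} can be taken to span exactly the vertices of $c$ (so that the swap preserves the property of being a cycle cover of $S$, i.e., each vertex of $S$ has in- and out-degree one). This is immediate from the statement of Lemma~\ref{lm:Claim3Blum}, which produces a cycle on the given subset of strings of the desired length $|p|$. Everything else is bookkeeping.
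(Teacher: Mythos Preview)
The paper does not supply its own proof of this lemma; it is quoted as Lemma~13 from~\cite{Mucha07} and used as a black box. Your argument is correct and is exactly the standard one: $\period(R_c)\le w(c)$ because $R_c$ is a prefix of $s(c)^\infty$ with $|s(c)|=w(c)$, and $\period(R_c)\ge w(c)$ because otherwise Lemma~\ref{lm:Claim3Blum} (whose construction indeed yields a cycle on precisely the given subset of strings) would produce a shorter cycle on the vertex set of $c$, contradicting the minimality of $\C(S)$.
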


As a corollary of these lemmas, we obtain:

\begin{observation}\label{obs:representativesInequivalent}
  The representative strings $R_c$ and $R_{c'}$ for any two cycles $c$ and $c'$ in $\C(S)$ are inequivalent.
  Moreover, any string $\hat{R}_{c'}$ that contains all strings of cycle $c'$ as substrings is inequivalent to $s(c)^\infty$.
\end{observation}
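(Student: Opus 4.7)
The plan is to derive both claims by contradiction against the minimality of $\C(S)$, using the primitive-period identification from Lemma~\ref{lm:lmMucha} together with the cycle-construction Lemmas~\ref{lm:Claim2Blum} and~\ref{lm:Claim3Blum}. A preliminary step I will need in both parts is to observe that $\pref(R_c, R_c) = s(c)$: by Lemma~\ref{lm:lmMucha}, $\period(R_c) = w(c) = |s(c)|$, and since $R_c$ is a prefix of $s(c)^\infty$ of length at least $w(c)$, its length-$w(c)$ prefix is precisely $s(c)$, which must therefore coincide with $\pref(R_c, R_c)$. The analogous identity holds for $c'$.

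For the first claim, suppose for contradiction that $c \neq c'$ are cycles of $\C(S)$ with $R_c$ equivalent to $R_{c'}$. Unfolding the definition of equivalence and the above identity yields $s(c) = uv$ and $s(c') = vu$ for some (possibly empty) strings $u, v$. Shifting by $|v|$ characters shows that $s(c')^\infty$ coincides with $s(c)^\infty$ after its first $|v|$ characters, hence every substring of $s(c')^\infty$ is also a substring of $s(c)^\infty$. Applying Lemma~\ref{lm:Claim2Blum} to both cycles, every string of $c \cup c'$ lies in $s(c)^\infty$, so Lemma~\ref{lm:Claim3Blum} produces a single cycle $c''$ of length $|s(c)| = w(c)$ containing all of them. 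Replacing $c$ and $c'$ in $\C(S)$ by $c''$ yields a valid cycle cover of total length $w(\C(S)) - w(c')$. Since no input string is a substring of another, every edge of $G_{\dist}$ has weight at least $1$, whence $w(c') \ge 1 > 0$, contradicting the minimality of $\C(S)$.

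For the second claim, I will read ``$\hat{R}_{c'}$ is equivalent to $s(c)^\infty$'' as $\hat{R}_{c'}$ being a substring of $s(c)^\infty$ (either interpretation that might be intended—matching primitive roots up to rotation, or being a prefix of some rotation of $s(c)^\infty$—implies this). Then every string of $c'$, being a substring of $\hat{R}_{c'}$, lies in $s(c)^\infty$; together with the strings of $c$, which lie in $s(c)^\infty$ by Lemma~\ref{lm:Claim2Blum}, the same application of Lemma~\ref{lm:Claim3Blum} produces a single cycle of length $w(c)$ that replaces $c$ and $c'$, contradicting minimality exactly as before. The main thing to nail down is the precise meaning of equivalence to a semi-infinite string; once that is fixed, both parts reduce to the same ``compress two cycles into one'' argument, which is the real content of the observation.
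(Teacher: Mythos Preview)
Your proof is correct and follows essentially the same route as the paper: identify $\pref(R_c,R_c)=s(c)$ via Lemma~\ref{lm:lmMucha}, use equivalence to place all strings of both cycles inside $s(c)^\infty$, and invoke Lemma~\ref{lm:Claim3Blum} to contradict minimality of $\C(S)$. One clarification on the point you flagged: the paper's intended reading of ``$\hat{R}_{c'}$ equivalent to $s(c)^\infty$'' is exactly the primitive-root interpretation, namely that $\pref(\hat{R}_{c'},\hat{R}_{c'})$ is a rotation of $f:=\pref(s(c)^\infty,s(c)^\infty)$; since $f^\infty=s(c)^\infty$, this indeed forces $\hat{R}_{c'}$ (and hence every string of $c'$) to be a substring of $s(c)^\infty$, so your reduction to the substring case is justified.
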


\begin{proof}
  Recall that $R_c$ is a prefix of $s(c)^\infty$, which contains all strings of $c$ by Lemma~\ref{lm:Claim2Blum}.
  From Lemma~\ref{lm:lmMucha} it follows that $\pref(R_c,R_c) = s(c)$ and $\pref(R_{c'},R_{c'}) = s(c')$.
  If  $R_c$ and $R_{c'}$ were equivalent, then $s(c')$ is a rotation of $s(c)$ and thus, any string of both cycles appears as a substring of $s(c)^\infty$.
  Therefore, by Lemma~\ref{lm:Claim3Blum}, all strings of both $c$ and $c'$ are contained in a single cycle of length $w(c)$, contradicting the minimality of $\C(S)$.

  The second claim follows similarly. If $\pref(\hat{R}_{c'},\hat{R}_{c'})$ is a rotation of $f := \pref(s(c)^\infty,s(c)^\infty)$,
  then $f^\infty = s(c)^\infty$ contains all strings of both $c$ and $c'$,
  so we again obtain a contradiction with the minimality of $\C(S)$ by using Lemma~\ref{lm:Claim3Blum}.
\end{proof}

Since the representative string $R_{c}$ contains any string $s$ of the cycle $c$ it belongs to, the period of $s$ cannot be larger than $\period(R_{c})$ and thus, by Lemma~\ref{lm:lmMucha}, we obtain:

\begin{observation}\label{obs:periodUB}
  For any string $s$ of a cycle $c\in \C(S)$, it holds that $\period(s)\le w(c)$.
\end{observation}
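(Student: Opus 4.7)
The plan is to inherit the periodicity bound from the representative string $R_c$, whose period is pinned down by Lemma~\ref{lm:lmMucha}. First, I would recall that $R_c$ contains $s$ as a substring by the very construction of $R_c$, and that by Lemma~\ref{lm:Claim2Blum}, $s$ is a substring of the semi-infinite periodic string $s(c)^\infty$, where $|s(c)| = w(c)$ directly from the definition of $s(c)$. Lemma~\ref{lm:lmMucha} then gives $\period(R_c) = w(c)$, matching $|s(c)|$.

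The key step is to transfer periodicity from $s(c)^\infty$ down to $s$. If $s$ occurs inside $s(c)^\infty$ starting at some position $k$, then $s$ is automatically a prefix of $(x')^\infty$, where $x'$ is the rotation of $s(c)$ obtained by cyclically shifting its first $k-1$ characters to the end. Provided $|s| \ge w(c)$, this shows that $s$ has periodicity of length $w(c)$ in the sense of the paper's definition, so $\period(s) \le w(c)$. In the remaining case $|s| < w(c)$, the conclusion is immediate: taking $x = s$ in the definition of periodicity gives $\period(s) \le |s| < w(c)$.

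I do not expect a substantive obstacle here: the statement is essentially an immediate consequence of Lemmas~\ref{lm:Claim2Blum} and~\ref{lm:lmMucha}, as reflected by the one-sentence proof sketch already present in the text preceding the observation. The only bookkeeping point worth stating explicitly is the side condition ``$a \le |s|$'' in the paper's definition of periodicity, which is why I split the argument by whether $|s| \ge w(c)$ or not.
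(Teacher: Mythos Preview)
Your proposal is correct and follows essentially the same approach as the paper's one-line justification: both rest on the fact that $s$ is a substring of something with periodicity $w(c)$ (the paper phrases it via $R_c$ and Lemma~\ref{lm:lmMucha}, while you go through $s(c)^\infty$ via Lemma~\ref{lm:Claim2Blum}, which are equivalent since $R_c$ is a prefix of $s(c)^\infty$). Your explicit case split on whether $|s|\ge w(c)$ just spells out the bookkeeping behind the paper's implicit claim that ``the period of $s$ cannot be larger than $\period(R_c)$.''
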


Next, we  need the following upper bound for the overlap length between inequivalent strings:

\begin{lemma}[Lemma 2.3 in \cite{kaplanshafrir}]\label{lm:kaplanShafrirOverlap}
  For any two inequivalent strings $s$ and $t$, it holds that $|\ov(s,t)|< \period(s)+\period(t)$.
\end{lemma}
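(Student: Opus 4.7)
The plan is to argue by contradiction: assume $|\ov(s,t)| \ge \period(s) + \period(t)$ and derive that $\pref(t,t)$ is a rotation of $\pref(s,s)$, contradicting the inequivalence of $s$ and $t$. Write $v := \ov(s,t)$, $p := \period(s)$, and $q := \period(t)$.

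First I would observe that the single string $v$ inherits two different periodicities. Since $v$ is a suffix of $s$ and $s$ is a prefix of $\pref(s,s)^{\infty}$, the suffix $v$ is itself a prefix of $y^{\infty}$ for an appropriate cyclic rotation $y$ of $\pref(s,s)$; in particular $v$ has periodicity $p$. Symmetrically, because $v$ is a prefix of $t$ and $t$ is a prefix of $\pref(t,t)^{\infty}$, the string $v$ has periodicity $q$. The hypothesis $|v| \ge p + q$ then lets me invoke Lemma~\ref{lm:periodicity_gcd} to conclude that $v$ also has periodicity $g := \gcd(p, q)$, so $v$ is a prefix of $w^{\infty}$ for some word $w$ of length $g$, and $g$ divides both $p$ and $q$.

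Next I would show that in fact $p = q = g$. Since $|v| \ge p$, the rotation $y$ of $\pref(s,s)$ must equal the prefix $v[1,p]$, and because $g \mid p$ and $v$ is a prefix of $w^{\infty}$, this prefix is exactly $w^{p/g}$; hence $y = w^{p/g}$. But $\pref(s,s)$ is primitive by minimality of the period, and primitivity is preserved under cyclic rotation, so $y$ is primitive. This forces $p/g = 1$, i.e., $p = g$; by the symmetric argument applied to $t$, also $q = g$, so $p = q$. With $p = q$, both length-$p$ words $y$ and $\pref(t,t)$ coincide with $v[1,p]$, so $\pref(t,t) = y$ is a rotation of $\pref(s,s)$, the desired contradiction.

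The only step requiring any real care is the inheritance of the two periodicities by $v$ from $s$ and $t$, together with the standard combinatorial fact that a cyclic rotation of a primitive word is primitive; everything else is a direct bookkeeping application of Lemma~\ref{lm:periodicity_gcd}.
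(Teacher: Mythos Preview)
The paper does not supply its own proof of this lemma; it is quoted verbatim from Kaplan and Shafrir~\cite{kaplanshafrir} and used as a black box. So there is no proof in the paper to compare against.

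That said, your argument is correct and is essentially the standard one. The only comment worth making is that the detour through $p=q$ is unnecessary: once you know $v$ is a prefix of $w^\infty$ with $|w|=g$, you have $y=w^{p/g}$ and $\pref(t,t)=w^{q/g}$, so $y^\infty=\pref(t,t)^\infty=w^\infty$, which already forces $\pref(t,t)$ to be a rotation of $y$ (and hence of $\pref(s,s)$) regardless of whether $p$ and $q$ coincide. But your version via primitivity is perfectly valid and arguably cleaner to verify.
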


In the case that these two inequivalent strings belong to two different cycles $c$ and $c'$ of $\C(S)$, we have
$|\ov(s,t)| <w(c)+w(c')$ by Observation~\ref{obs:periodUB}, and more generally:

\begin{lemma}[Lemma 9 in \cite{blum}]\label{lm:maOverlap}
  Let $c$ and $c'$ be any two cycles of $\C(S)$. It holds that $|\ov(s,t)| <w(c)+w(c')$, where $s$ is any string of $c$ and $t$ is any string of $c'$.
\end{lemma}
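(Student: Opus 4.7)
I would prove this by contradiction, assuming that for some $s$ belonging to cycle $c$ and $t$ belonging to cycle $c'$ with $c \neq c'$, we have $|\ov(s, t)| \geq w(c) + w(c')$. The strategy is to show that all strings of $c$ and $c'$ are substrings of one common semi-infinite periodic string of period $g := \gcd(w(c), w(c'))$; then Lemma~\ref{lm:Claim3Blum} provides a single cycle of length $g \leq \min(w(c), w(c')) < w(c) + w(c')$ in $G_{\dist}(S)$ covering these strings, and replacing $c$ and $c'$ in $\C(S)$ with it yields a strictly shorter cycle cover, contradicting the minimality of $\C(S)$.

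First, let $v := \ov(s, t)$, so $v$ is simultaneously a suffix of $s$ and a prefix of $t$, of length at least $w(c) + w(c')$. Because $s$ is a prefix of $\strings(c, s)^\infty$ (a periodic string with period $w(c)$), its suffix $v$ is a substring of $\strings(c, s)^\infty$ of length $\geq w(c)$ and therefore has periodicity $w(c)$ in the sense of Section~\ref{sec:morePrelims}. By the analogous argument applied to $t$ as a prefix of $\strings(c', t)^\infty$, $v$ also has periodicity $w(c')$. Since $|v| \geq w(c) + w(c')$, Lemma~\ref{lm:periodicity_gcd} implies $v$ has periodicity $g$: that is, $v$ is a prefix of $y^\infty$ for some string $y$ of length $g$.

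Second, I would lift this shorter periodicity from $v$ to the full periodic strings $\strings(c, s)^\infty$ and $\strings(c', t)^\infty$. Within $v$'s occurrence in $\strings(c, s)^\infty$, a contiguous window of length $w(c)$ is visible (since $|v| \geq w(c) + g \geq w(c)$) and is by construction a rotation of $\strings(c, s)$; but this window also inherits the period-$g$ structure of $v$, so it equals $y^{w(c)/g}$ up to rotation. Hence $\strings(c, s)^\infty$ has periodicity $g$, and likewise $\strings(c', t)^\infty$ has periodicity $g$. Because $v$ is a common substring of length at least $2g$ sitting inside both of these now period-$g$ strings, their length-$g$ patterns must agree up to rotation, so $\strings(c, s)^\infty$ and $\strings(c', t)^\infty$ represent the same periodic semi-infinite string (up to shift). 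Lemma~\ref{lm:Claim2Blum} now places every string of $c$ and every string of $c'$ inside this common periodic string, and Lemma~\ref{lm:Claim3Blum} produces the desired cycle of length $g$ to complete the contradiction.

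The main obstacle I anticipate is precisely this second step: lifting periodicity $g$ from the finite overlap $v$ to both full periodic strings $\strings(c, s)^\infty$ and $\strings(c', t)^\infty$ and then identifying them as the same periodic structure (not merely as having the same period length). The key leverage is that $|v| \geq w(c) + w(c') \geq w(c) + g$ forces a complete copy of the $c$-side pattern to live inside the period-$g$ window of $v$, and that $|v| \geq 2g$ lets one match up the $g$-patterns arising on the two sides; however, the rotation/shift bookkeeping needed to pin down the identification cleanly is the delicate part of the proof.
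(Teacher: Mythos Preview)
Your argument is correct, but the paper (in its proof of the generalization, Corollary~\ref{cor:lm:maOverlap}, which it says follows the same idea as Lemma~9 in~\cite{blum}) takes a more direct route that sidesteps exactly the ``delicate'' lifting step you flag. Rather than invoking Lemma~\ref{lm:periodicity_gcd} to get periodicity $g=\gcd(w(c),w(c'))$ on $v$ and then pushing $g$ back up to both periodic strings, the paper observes that $v$ is a prefix of $x_1^\infty$ and of $x_2^\infty$ for suitable rotations $x_1,x_2$ of $s(c),s(c')$; since $|v|\ge |x_1|+|x_2|$, comparing the first $|x_1|+|x_2|$ characters of $v$ in both representations immediately gives $x_1x_2=x_2x_1$, and hence $x_1^\infty=x_2^\infty$. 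That single equality already says $s(c)^\infty$ and $s(c')^\infty$ have the same set of substrings, so Lemmas~\ref{lm:Claim2Blum} and~\ref{lm:Claim3Blum} place all strings of $c\cup c'$ in one cycle of length $w(c)$, contradicting minimality. What your approach buys is a slightly stronger conclusion (a common cycle of length $g\le\min(w(c),w(c'))$), at the price of the extra bookkeeping you identify; the paper's commutativity trick needs no gcd, no lifting, and no shift-matching, since $x_1^\infty=x_2^\infty$ is derived in one line.
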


We will need an even more general corollary that follows from the same argument as in Lemma~9 in \cite{blum}
(see also Lemma~7 in \cite{Mucha07}), but we provide a proof for completeness.

\begin{corollary}\label{cor:lm:maOverlap}
  Let $c$ and $c'$ be any two cycles of $\C(S)$. Any string $h$, which is a substring of both $s(c)^\infty$ and $s(c')^\infty$,
  satisfies $|h| < w(c)+w(c')$.
  In particular, it holds that $|\ov(s,t)| <w(c)+w(c')$, where $s$ is any substring of $s(c)^\infty$ and $t$ is any substring of $s(c')^\infty$.
\end{corollary}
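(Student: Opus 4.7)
My plan is to prove the main claim by contradiction: suppose $|h| \ge w(c) + w(c')$. Since $h$ is a substring of $s(c)^\infty$, there is a rotation $\tilde{s}(c)$ of $s(c)$ (obtained by cutting at the position where $h$ starts in $s(c)^\infty$) such that $h$ is a prefix of $\tilde{s}(c)^\infty$; analogously, there is a rotation $\tilde{s}(c')$ of $s(c')$ such that $h$ is a prefix of $\tilde{s}(c')^\infty$. Note $|\tilde{s}(c)| = w(c)$ and $|\tilde{s}(c')| = w(c')$, and since $\tilde{s}(c)^\infty$ and $s(c)^\infty$ have exactly the same set of substrings (and likewise for $c'$), Lemma~\ref{lm:Claim2Blum} guarantees that $\tilde{s}(c)^\infty$ contains all strings of $c$ and $\tilde{s}(c')^\infty$ contains all strings of $c'$.

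Next, I will argue that $\tilde{s}(c)^\infty$ and $\tilde{s}(c')^\infty$ are the \emph{same} semi-infinite string. They agree on their common prefix $h$ of length $\ge w(c) + w(c')$, and $h$ has periodicities $w(c)$ and $w(c')$, so Lemma~\ref{lm:periodicity_gcd} gives that $h$ has periodicity $g := \gcd(w(c), w(c'))$. For any index $i \ge 1$, the $i$-th character of $\tilde{s}(c)^\infty$ is $h[((i-1) \bmod w(c))+1]$ and the $i$-th character of $\tilde{s}(c')^\infty$ is $h[((i-1) \bmod w(c'))+1]$; these two indices both lie in $[1,|h|]$ and are congruent modulo $g$, so the periodicity $g$ of $h$ forces them to the same character. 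Hence $\tilde{s}(c)^\infty = \tilde{s}(c')^\infty$, and this common periodic semi-infinite string has some shortest period $q$ of length at most $\min(w(c), w(c'))$.

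Now every string of cycles $c$ and $c'$ is a substring of $q^\infty$, so by Lemma~\ref{lm:Claim3Blum} there is a single cycle of length $|q| \le \min(w(c),w(c')) < w(c) + w(c')$ in $G_{\dist}(S)$ containing all these strings. Replacing $c$ and $c'$ in $\C(S)$ by this single cycle produces a cycle cover of strictly smaller total length, contradicting the minimality of $\C(S)$. I expect the main obstacle is the periodicity argument in the second paragraph: once Lemma~\ref{lm:periodicity_gcd} is in place it reduces to a routine index calculation, but it is important to set up the rotations so that $h$ is genuinely a common prefix of $\tilde{s}(c)^\infty$ and $\tilde{s}(c')^\infty$, not just a common substring at different offsets. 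Finally, the ``in particular'' statement follows immediately, since $\ov(s,t)$ is simultaneously a suffix of $s$ and a prefix of $t$ and hence a substring of both $s(c)^\infty$ and $s(c')^\infty$, so the main claim yields $|\ov(s,t)| < w(c) + w(c')$.
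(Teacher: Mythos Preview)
Your proof is correct and follows essentially the same route as the paper's: assume $|h|\ge w(c)+w(c')$, take rotations of $s(c)$ and $s(c')$ so that $h$ is a common prefix of the two periodic semi-infinite strings, show these semi-infinite strings coincide, and derive a contradiction via Lemma~\ref{lm:Claim3Blum}. The only minor difference is in how you establish that the two semi-infinite strings agree: the paper observes directly that $x_1 x_2 = x_2 x_1$ (from the common prefix of length $|x_1|+|x_2|$) and deduces $x_1^\infty = x_2^\infty$ by induction, whereas you invoke Lemma~\ref{lm:periodicity_gcd} (Fine--Wilf) to get periodicity $\gcd(w(c),w(c'))$ and then do an index calculation---both are standard and equally valid.
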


\begin{proof}
  Assume for a contradiction that $|h| \ge w(c)+w(c')$.
  Since $h$ is a substring of $s(c)^\infty$, it is a prefix of $x_1^\infty$ for a string $x_1$ with $|x_1| = w(c)$, which is a rotation  of $s(c)$.
  Similarly, $h$ is a prefix of $x_2^\infty$ for $x_2$ with $|x_2| = w(c')$, which is a rotation of $s(c')$.
  Using $|h| \ge w(c)+w(c')$, we get that $x_1 x_2 = x_2 x_1$ and by a simple induction, it holds that $x_1^k x_2^k = x_2^k x_1^k$
  for any $k\ge 1$, which implies $x_1^\infty = x_2^\infty$. Since any string in cycle $c$ is a substring of $s(c)^\infty$, it is also a substring of $x_1^\infty = x_2^\infty$, and similarly for $c'$. Thus, using Lemma~\ref{lm:Claim3Blum} gives a contradiction with the fact that $c$ and $c'$ are two cycles of the minimum-length \CC $\C(S)$.
\end{proof}

\subsection{Properties of Strings of Small Cycles}\label{sec:smallStrings}

In this section, we prove several properties of small cycles.
Consider a small cycle  $c$.
Recall that the \MGREEDY algorithm picks edges in non-increasing order of overlap length when producing $\C(S)$. Therefore, $o(c)$ is no larger than any other overlap length between two merged strings in cycle $c$. By this and since the length of any string $s$ in $c$
is greater than the length of any of its two (i.e., left and right) overlaps  (or the self-overlap if $c$ is a 1-cycle),
we have $|s| > o(c)$.
Further, by the definition of a small cycle, it is $o(c)>2\cdot w(c)$ and thus, for any string $s$ of $c$, we get:
\begin{equation}\label{eq:minStringLength0}
  |s|> 2\cdot w(c)
\end{equation}
Note that the representative string $R_c$ is even longer as $|R_{c}|=w(c)+o(c)> 3\cdot w(c)$, since string $R_{c}$ is formed by opening cycle $c$ at the cycle-closing edge.

While a string of a cycle $c$ is not necessarily equivalent to string $R_c$ (cf.\ Lemma~2.1 in~\cite{kaplanshafrir}),
we prove that this property actually holds for small cycles.

\begin{lemma}\label{lm:equivSmall}
  Consider any small cycle $c$ of $\C(S)$. All strings of $c$ and $R_{c}$ are equivalent and in particular,
  $\period(s) = w(c)$ for any string $s$ of cycle $c$.
\end{lemma}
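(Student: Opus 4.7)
The plan is to first show that $\period(s) = w(c)$ for every string $s$ of the small cycle $c$; the equivalence claim then follows from the definition by comparing $\pref(s,s)$ with $\pref(R_c, R_c) = s(c)$ (Lemma~\ref{lm:lmMucha}).

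Fix any string $s$ of $c$. By Lemma~\ref{lm:Claim2Blum}, $s$ is a substring of $s(c)^\infty$, so in particular $s$ is a prefix of $x^\infty$ for some rotation $x$ of $s(c)$ with $|x| = w(c)$ (for instance $x = \strings(c,s)$ when $s = s_{c_l}$). Hence $s$ has periodicity $w(c)$. The smallness of $c$ gives $|s| > 2w(c)$ via~\eqref{eq:minStringLength0}, so $w(c) \le |s|/2$ and Lemma~\ref{lm:periodicity_gcd} forces $\period(s)$ to divide $w(c)$.

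Suppose toward a contradiction that $\period(s) = p < w(c)$, so $p \mid w(c)$ and $p \le w(c)/2$. The main task is to transfer this smaller period from $s$ to all of $s(c)^\infty$, and thereby to $R_c$. Since $s = x^\infty[1..|s|]$ is $p$-periodic and $|s| - p > w(c) - p$, the equality $x^\infty[i] = x^\infty[i+p]$ holds for every $i \le w(c) - p$, so $x$ itself is $p$-periodic. Because $p \mid w(c)$, this gives $x = y^{w(c)/p}$ with $y = x[1..p]$, hence $x^\infty = y^\infty$. As $s(c)^\infty$ and $x^\infty$ agree from some bounded offset onward and both are $w(c)$-periodic, $s(c)^\infty$ is also $p$-periodic; consequently $R_c$ is a prefix of $(y')^\infty$ for a string $y'$ of length $p < w(c)$, contradicting $\period(R_c) = w(c)$ by Lemma~\ref{lm:lmMucha}.

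With $\period(s) = w(c)$ established, the length-$w(c)$ string $x$ coincides with $\pref(s,s)$ (the unique string of length $\period(s)$ whose infinite power has $s$ as a prefix). Since $\pref(s,s) = x$ is a rotation of $s(c) = \pref(R_c, R_c)$, the strings $s$ and $R_c$ are equivalent by definition, and mutual equivalence of all strings of $c$ follows by transitivity. The main obstacle I expect is the propagation step — lifting a period of the single string $s$ to a period of $R_c$ — but the generous bound $|s| > 2w(c)$ from~\eqref{eq:minStringLength0} leaves enough overlap between the two periodicities for the Fine--Wilf style reasoning packaged in Lemma~\ref{lm:periodicity_gcd} to force $p \mid w(c)$, and an elementary index shift to push the period through to $s(c)^\infty$.
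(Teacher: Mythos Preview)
Your proof is correct and follows essentially the same route as the paper: both arguments use $|s|>2w(c)$ together with Lemma~\ref{lm:periodicity_gcd} to conclude that $\period(s)$ divides $w(c)$, then propagate a hypothetical strictly smaller period from $s$ to the whole periodic string $s(c)^\infty$ via the rotation $x=\strings(c,s)$, reaching a contradiction. The only cosmetic difference is the endpoint of the contradiction: the paper invokes Lemmas~\ref{lm:Claim2Blum} and~\ref{lm:Claim3Blum} to build a shorter cycle through all strings of $c$ (contradicting minimality of $\C(S)$), whereas you cite $\period(R_c)=w(c)$ from Lemma~\ref{lm:lmMucha} directly; since Lemma~\ref{lm:lmMucha} is itself proved via the same minimality argument, the two are equivalent.
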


\begin{proof}
  Recall that $R_c$ is a prefix of $s(c)^\infty$.
  From Lemma~\ref{lm:lmMucha} it follows that $\pref(R_c,R_c) = s(c)$.
  Hence, it suffices to show that $\pref(s,s)$ is a rotation of $s(c)$ for any string $s$ of the small cycle $c$.
  We first prove that $\period(s) = w(c)$.
  By Observation~\ref{obs:periodUB}, we have $\period(s)\le w(c)$. Assume for a contradiction that $\period(s)<w(c)$.
  Since $s$ has periodicity $w(c)$ and, by~\eqref{eq:minStringLength0}, $|s|> 2w(c)$, we have that $w(c)$ must be a multiple of $\period(s)$ by Lemma~\ref{lm:periodicity_gcd}.
  So there exists an integer $k\ge2$ such that $k\cdot |\pref(s,s)|=k\cdot \period(s)=w(c)$.
  Recall that $\strings(c,s)$ is a rotation of $s(c)$ that is a prefix of $s$ and has length $w(c)$.
  We thus have that $\strings(c,s) = \pref(s,s)^k$, which implies $\strings(c,s)^\infty = \pref(s,s)^\infty$.
  Note that every substring of $s(c)^\infty$ is also a substring of $\strings(c,s)^\infty = \pref(s,s)^\infty$. By Lemmas~\ref{lm:Claim2Blum} and~\ref{lm:Claim3Blum}, it follows that all strings of $c$ belong to a cycle (in $G_{\dist}(S)$) of length $|\pref(s,s)|=\period(s)<w(c)$, which contradicts the minimality of $\C(S)$.
  Hence, $\period(s) = w(c)$ and thus, $\pref(s,s) = \strings(c,s)$. This concludes the proof as $\strings(c,s)$ is a rotation of $s(c) = \pref(R_c,R_c)$.
\end{proof}

As a corollary, we obtain that for small cycles, the triangle inequality in $G_{\dist}(S)$
becomes equality.

\begin{lemma}\label{lm:smallStrings}
  Consider two strings $s\in S$ and $t\in S$ both belonging to a small cycle $c\in \C(S)$ and assume that $s$ is not merged with $t$ across cycle $c$. Then, for any string $t'$ that lies on cycle $c$ between $s$ and $t$ (in this order), it holds that $\dist(s,t)=\dist(s,t')+\dist(t',t)$.
\end{lemma}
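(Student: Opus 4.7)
The direction $\dist(s,t)\le D:=\dist(s,t')+\dist(t',t)$ is immediate from the triangle inequality on the distance graph noted in Section~\ref{sec:definitions}, so the plan focuses on the reverse inequality $\dist(s,t)\ge D$. The idea is to exploit the tight periodic structure of small cycles guaranteed by Lemma~\ref{lm:equivSmall}. Setting $U:=\strings(c,s)$, a rotation of $s(c)$ of length $w(c)$, Lemma~\ref{lm:equivSmall} tells us that $s$ is a prefix of $U^\infty$ of length $|s|$; and since walking from $s$ to $t$ along $c$ through $t'$ accumulates a shift of exactly $D$ positions in $U^\infty$, we can identify $t$ with the length-$|t|$ window of $U^\infty$ starting at position $D+1$, i.e., $t[i]=U^\infty[D+i]$ for $1\le i\le|t|$. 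Because $s,t',t$ lie in this order on a partial (not full) traversal of $c$ and no input string is a substring of another, $D<w(c)$; combined with $|s|>2w(c)$ from~\eqref{eq:minStringLength0}, this yields $|s|-D>w(c)$.

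I would then argue by contradiction, assuming $\ov(s,t)=L>|s|-D$, and set $d:=L-(|s|-D)$, so that $1\le d\le D-1<w(c)$. Comparing the overlap as a suffix of $s$ and as a prefix of $t$ inside $U^\infty$ yields $U^\infty[|s|-L+1,|s|]=U^\infty[D+1,D+L]$, so $U^\infty$ is invariant under a shift by $d$ on the positions $[|s|-L+1,D+L]$, a window of length $L+d$. Since $L>w(c)$, this window has length $L+d>d+w(c)$, and so Lemma~\ref{lm:periodicity_gcd} applied to the two periods $d$ and $w(c)$ on the window delivers period $g:=\gcd(d,w(c))\le d<w(c)$. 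The sub-window of positions $i$ with $i+g$ still in the window has length $L+d-g\ge w(c)$, hence it hits every residue modulo $w(c)$; combined with the global period $w(c)$ of $U^\infty$, this forces $U^\infty[j]=U^\infty[j+g]$ for every $j\ge 1$, so $U^\infty$ has the strictly smaller period $g$. Since $s(c)^\infty$ is merely a cyclic shift of $U^\infty$, it too has period $g$, and hence the prefix $R_c$ of $s(c)^\infty$ has period $g<w(c)$, contradicting $\period(R_c)=w(c)$ from Lemma~\ref{lm:lmMucha}.

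The main technical obstacle is keeping the window bookkeeping straight: first ensuring that the window on which the shift-by-$d$ invariance holds is long enough to invoke the Fine--Wilf-style Lemma~\ref{lm:periodicity_gcd}, and then verifying that the resulting shift-by-$g$ invariance covers a long enough sub-window to be promoted to a global period of $U^\infty$. Both checks hinge on the ``double slack'' $|s|>2w(c)$ for strings on small cycles together with $D<w(c)$ for a partial traversal around $c$, which is ultimately what makes the lemma valid specifically for small cycles.
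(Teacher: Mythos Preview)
There is a genuine gap in the step where you place $t$ inside $U^\infty$. You set $D:=\dist(s,t')+\dist(t',t)$ using the \emph{direct} distances in $G_{\dist}$, and then assert that ``walking from $s$ to $t$ along $c$ through $t'$ accumulates a shift of exactly $D$ positions in $U^\infty$.'' But walking along the cycle accumulates the sum $D_c$ of the \emph{cycle-edge} distances on the $s\to t$ arc, and what the periodic layout actually gives is $t[i]=U^\infty[D_c+i]$, not $t[i]=U^\infty[D+i]$. A~priori you only know $D\le D_c$ (triangle inequality along the cycle edges); equality $D=D_c$ is precisely what the lemma is about. The fix is simple: run your window/Fine--Wilf argument with $D_c$ in place of $D$ to obtain $\dist(s,t)\ge D_c$, and then conclude $\dist(s,t)=D=D_c$ from the chain $\dist(s,t)\le D\le D_c$. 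All of your length estimates ($D_c<w(c)$ because the arc is proper, $L>|s|-D_c>w(c)$, $1\le d<D_c<w(c)$, and the window length $L+d\ge d+w(c)$) go through unchanged.

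Once patched, your route is correct but considerably heavier than the paper's. The paper sidesteps Fine--Wilf entirely: it forms $R'=\pref(s,t)\,\strings(c,t)^\infty$ and notes that $s$ occurs both as a prefix of $R'$ and (via the $t\to s$ arc of $c$) again inside $\strings(c,t)^\infty$, so
\[
\period(s)=\dist(s,s)\le \dist(s,t)+\sum_{j}\dist(t_j,t_{j+1}) < \dist(s,t')+\dist(t',t)+\sum_{j}\dist(t_j,t_{j+1}) \le w(c),
\]
directly contradicting $\period(s)=w(c)$ from Lemma~\ref{lm:equivSmall}. Your argument instead squeezes a short period out of $U^\infty$ via Lemma~\ref{lm:periodicity_gcd} and then promotes it globally; this reaches an equivalent contradiction (with $\period(R_c)=w(c)$ from Lemma~\ref{lm:lmMucha}), but at the cost of the extra window bookkeeping you flagged as the main obstacle.
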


\begin{proof}
  First, it is $\dist(s,t)\leq \dist(s,t')+\dist(t',t)$ by the triangle inequality in $G_{\dist}(S)$. Next, assume for a contradiction that $\dist(s,t)<\dist(s,t')+\dist(t',t)$.
  Consider the semi-infinite string $R' = \pref(s, t)\strings(c, t)^\infty$. Let $t_0 = t, t_1, \dots, t_\ell = s$ be the strings on the directed path from $t$ to $s$ on cycle $c$. Observe that $s$ is a prefix of $R'$ (as $t$ is a prefix of $\strings(c, t)^\infty$) and a substring of $\strings(c, t)^\infty$, starting at position $\sum_{j=0}^{\ell-1} \dist(t_j, t_{j+1})$. It follows that
  $$\dist(s,s) \le \dist(s, t) + \sum_{j=0}^{\ell-1} \dist(t_j, t_{j+1}) < \dist(s,t')+\dist(t',t) + \sum_{j=0}^{\ell-1} \dist(t_j, t_{j+1}) \le w(c)\,,$$
  where the penultimate inequality holds by the assumption $\dist(s,t)<\dist(s,t')+\dist(t',t)$ and the last inequality follows by using the triangle inequality in $G_{\dist}(S)$ for the edges between $s$ and $t'$ and for those between $t'$ and $t$. Thus, we have that $\period(s)=\dist(s,s)<w(c)$, which contradicts Lemma \ref{lm:equivSmall}.
\end{proof}

Lemma~\ref{lm:smallStrings} implies the following useful property:

\begin{observation}\label{obs:smallCyclesRotations}
  If two strings that belong to the same small cycle $c\in \C(S)$ are \emph{not} merged in $c$, then there is an optimal superstring in which they are \emph{not} merged.
\end{observation}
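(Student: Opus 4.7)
The plan is to start from any optimal superstring $\OPT$ in which $s$ and $t$ are merged (say, $s$ immediately precedes $t$; the other orientation is symmetric by swapping the roles of $s$ and $t$), and produce another optimal superstring in which $s$ and $t$ are not merged by a local rerouting of the corresponding Hamiltonian path in $G_{\dist}(S)$.

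Viewing $\OPT$ as a directed Hamiltonian path $P$ in $G_{\dist}(S)$ with $(s,t)$ as an edge, I would pick an intermediate string $t'$ on $c$ that lies strictly between $s$ and $t$ along the directed cycle~$c$. Such a $t'$ exists because the assumption that $s$ and $t$ are not merged in $c$ implies that $(s,t)$ is not an edge of $c$, so the directed path from $s$ to $t$ inside $c$ has length at least two. The rerouting then deletes $(s,t)$ from $P$ together with the (at most two) edges of $P$ incident to $t'$, and adds the edges $(s,t')$ and $(t',t)$, as well as a ``bypass'' edge $(a,b)$ from the predecessor $a$ of $t'$ in $P$ to its successor $b$ in $P$, whenever both exist. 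A short verification shows that the resulting edge set is a Hamiltonian path $P'$, whose corresponding superstring $\OPT'$ has $t'$ between $s$ and $t$; in particular, $s$ and $t$ are not merged in $\OPT'$.

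To bound $|\OPT'|$ against $|\OPT|$, I would invoke Lemma~\ref{lm:smallStrings} to get $\dist(s,t) = \dist(s,t') + \dist(t',t)$, so the two new edges $(s,t')$ and $(t',t)$ together cost exactly what the deleted edge $(s,t)$ did. The triangle inequality in $G_{\dist}(S)$ then gives $\dist(a,b) \le \dist(a,t') + \dist(t',b)$, so the bypass does not increase the total length either. In the boundary subcases in which $t'$ is the first or last vertex of $P$, there is no bypass edge and only one of $(a,t'), (t',b)$ is deleted, which can only decrease the length. Hence $|\OPT'| \le |\OPT|$, and by optimality of $\OPT$ both superstrings have the same length and $\OPT'$ is optimal as well.

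The main obstacle lies in the bookkeeping for the boundary cases of the rotation: when $t'$ is an endpoint of $P$, and when $t'$ sits right next to the edge $(s,t)$ on $P$, for example, when the predecessor of $t'$ in $P$ is $t$, or the successor of $t'$ is $s$. In each such sub-case, one must check explicitly that the resulting edge set really is a Hamiltonian path (in particular, that no vertex repetition is introduced) and that the length bound still holds; both follow from the same combination of Lemma~\ref{lm:smallStrings} and the triangle inequality, so the enumeration is routine. A further small remark is that the rotation never introduces the reverse edge $(t,s)$ into $P'$, because otherwise we would have $s \to t \to t' \to s$ in a Hamiltonian path, which is impossible.
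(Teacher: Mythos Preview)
Your argument is correct and rests on the same two ingredients as the paper: Lemma~\ref{lm:smallStrings} (to show that inserting a cycle-intermediate string between $s$ and $t$ costs exactly $\dist(s,t)$) and the triangle inequality in $G_{\dist}(S)$ (to show that removing a string and bypassing it cannot increase length). The difference is only in how many strings you relocate. You pick a \emph{single} intermediate vertex $t'$ on the $s$--$t$ path in $c$, excise it from its current location in $P$, and splice it between $s$ and $t$; this forces you to add a bypass edge $(a,b)$ and to enumerate the boundary cases (endpoint $t'$, adjacency $a=t$ or $b=s$). The paper instead removes \emph{all} intermediate strings $t_1,\dots,t_\ell$ from the superstring at once (each removal only shortens it, by triangle inequality) and then reinserts the whole block between $s$ and $t$, where repeated use of Lemma~\ref{lm:smallStrings} shows the reinsertion changes nothing. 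This sidesteps the bypass edge entirely and hence all the boundary bookkeeping, at the price of invoking Lemma~\ref{lm:smallStrings} $\ell$ times rather than once. Either route works; the paper's is shorter to write out, yours is more local.

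One small remark: your parenthetical ``the other orientation is symmetric'' is unnecessary under the paper's (ordered) notion of ``merged'', since if the optimal superstring has edge $(t,s)$ rather than $(s,t)$, then $s$ and $t$ are already not merged in it and there is nothing to prove.
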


\begin{proof}
  Suppose that strings $s, t$ belonging to $c\in \C(S)$ are not merged in $c$,
  and let $t_1, \dots, t_\ell$ (for $\ell\ge 1$) be the strings on the directed $s$-$t$-path in $c$.
  Let $\sigma$ be any superstring in which $s$ and $t$ are merged.
  Consider string $\hat{\sigma}$ obtained by removing strings $t_1, \dots, t_\ell$ from $\sigma$, which may only decrease its length, i.e., $|\hat{\sigma}|\le |\sigma|$.
  From $\hat{\sigma}$, we create a superstring $\sigma'$ by inserting strings $t_1, \dots, t_\ell$ between $s$ and $t$ in $\hat{\sigma}$.
  Crucially, by Lemma~\ref{lm:smallStrings}, it holds that $|\sigma'| = |\hat{\sigma}|\le |\sigma|$.
  Thus, if $\sigma$ is optimal, then $\sigma'$ is also optimal.
\end{proof}

\begin{remark}\label{rem:allTogether}
  By Observation \ref{obs:smallCyclesRotations}, if a superstring $\sigma$ merges all $r$ strings belonging to the same small cycle $c=s_{c_{0}}\rightarrow s_{c_{1}} \rightarrow \dots \rightarrow s_{c_{r-1}} \rightarrow s_{c_{0}}$ (i.e., they all appear in adjacent positions across the superstring $\sigma$), then we can transform $\sigma$ into a superstring $\sigma'$ with $|\sigma'|\le |\sigma|$ where the order of these strings across $\sigma'$ is a rotation of the ordered set $\{s_{c_{0}},s_{c_{1}},\dots,s_{c_{r-1}}\}$. In this case, each of the $r$ edges of $c\in\C(S)$ coincides with an edge of $\sigma'$ except for one edge, which is not necessarily the cycle-closing edge $s_{c_{r-1}} \rightarrow s_{c_{0}}$ of $c$.
\end{remark}

\section{The First Upper Bound}\label{sec:firstBound}
In this section, we prove~\eqref{eq:firstBound}, which is our first bound on $o$.

We consider a partition of strings of all small cycles such that no two strings from two different cycles are in one part and moreover, due to Observation \ref{obs:smallCyclesRotations}, if strings $s$ and $t$ from a small cycle $c$ are in one part, then all strings between $s$ and $t$ on $c$ are in that part as well. In other words, this partition consists of directed paths and single nodes that remain after removing a subset of edges from small cycles. The particular partition that we consider below is induced by an optimal superstring for a certain subset of the input $S$ containing all strings of small cycles and one (carefully chosen) string of each large cycle.

Consider a small cycle $c$.
Let $r'$ be the number of parts with strings from cycle $c$, and for $j = 0,\dots, r'$, denote by $\bar{s}_{j}$ the string obtained by merging strings in the $j$-th part (in the same order as they appear on the small cycle $c$).
In the next technical lemma, we lower-bound the sum of lengths of the strings $\bar{s}_{j}$.

\begin{lemma}\label{lm:residuesSmall}
  It holds that $\sum_{j=0}^{r'-1}(|\bar{s}_{j}|-2\cdot w(c))\ge o(c)-w(c)$ for any small cycle $c=s_{c_{0}} \rightarrow \dots \rightarrow s_{c_{r-1}} \rightarrow s_{c_{0}}$, where $r'\leq r$.
\end{lemma}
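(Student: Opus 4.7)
The plan is to unfold $|\bar{s}_j|$ via distances along the cycle, then telescope over all parts.
Set $d_i := \dist(s_{c_i}, s_{c_{(i+1)\bmod r}})$ so that $\sum_{i=0}^{r-1} d_i = w(c)$.
By Observation~\ref{obs:smallCyclesRotations}, each part is a contiguous directed arc of $c$; let $E_{\text{rem}}$ denote the set of $r'$ cycle edges that separate consecutive parts.
For a part consisting of $s_{c_k}, s_{c_{k+1}}, \dots, s_{c_{k+\ell-1}}$, the string $\bar{s}_j$ is a superstring of the first and last strings in that order, so
\[
|\bar{s}_j| \;\ge\; \dist(s_{c_k}, s_{c_{k+\ell-1}}) + |s_{c_{k+\ell-1}}| \;=\; \sum_{i=k}^{k+\ell-2} d_i \,+\, |s_{c_{k+\ell-1}}|,
\]
where the equality uses Lemma~\ref{lm:smallStrings} (distances along an arc of a small cycle add up to the direct distance). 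When the arc is a single string, the sum over $d_i$ is empty and we just get $|\bar{s}_j| \ge |s_{c_k}|$.

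Summing over $j$, the surviving $d_i$'s (those not in $E_{\text{rem}}$) contribute $w(c) - \sum_{e \in E_{\text{rem}}} \dist(e)$.
Each removed edge $e = (s, s')$ has its source $s$ as the last string of exactly one arc, so the contributions $|s| - \dist(s, s') = |\ov(e)|$ pair up cleanly:
\[
\sum_{j=0}^{r'-1} |\bar{s}_j| \;\ge\; w(c) + \sum_{e \in E_{\text{rem}}} |\ov(e)| \;\ge\; w(c) + r' \cdot o(c),
\]
using in the last step that $o(c)$ is the minimum overlap length over all edges of $c$ (the cycle-closing edge is picked last by \MGREEDY, so every edge of $c$ has overlap at least $o(c)$).

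Subtracting $2 r' \cdot w(c)$ from both sides, the desired bound becomes
\[
w(c) + r' \cdot o(c) - 2 r' \cdot w(c) \;\ge\; o(c) - w(c),
\]
which rearranges to $(r' - 1)\bigl(o(c) - 2 w(c)\bigr) \ge 0$. This holds since $r' \ge 1$ and $o(c) > 2 w(c)$ by the definition of a small cycle. As a sanity check, equality is attained when $r' = 1$ and the removed edge is the cycle-closing one, in which case $\bar{s}_0 = R_c$ and $|R_c| = w(c) + o(c)$. There is no genuinely hard step here; the only substantive ingredients are Lemma~\ref{lm:smallStrings} (enabling the distance decomposition along each arc) and the small-cycle inequality $o(c) > 2 w(c)$. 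The main thing to handle carefully is the bijection between removed edges and arcs that makes the $|\ov(e)|$ terms telescope correctly.
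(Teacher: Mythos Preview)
Your proof is correct and follows essentially the same approach as the paper: decompose each $|\bar{s}_j|$ as the sum of distances along its arc plus the length of the last string, sum over all parts so that the distances telescope to $w(c)$ and the last-string lengths combine with the removed-edge distances to yield the overlaps $|\ov(e)|\ge o(c)$, and finish using $o(c)>2w(c)$. One minor remark: your detour via the inequality $|\bar{s}_j|\ge \dist(s_{c_k},s_{c_{k+\ell-1}})+|s_{c_{k+\ell-1}}|$ followed by Lemma~\ref{lm:smallStrings} is unnecessary, since by definition $\bar{s}_j$ is the maximal merge along the arc and hence $|\bar{s}_j|=\sum_{i=k}^{k+\ell-2} d_i + |s_{c_{k+\ell-1}}|$ holds with equality directly; the paper takes this shorter route.
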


\begin{proof}
  Fix a small cycle $c$.
  Consider string $\bar{s}_{j}$, and let $t_{j}^{0}, t_{j}^{1}, \dots, t_{j}^{\ell_j-1}$ for $\ell_j\ge 1$ be the strings that are merged into $\bar{s}_{j}$.
  Assuming that the parts are numbered in the order in which they appear on the cycle,
  $t_{j+1}^{0}$ is the string to which $t_{j}^{\ell_j-1}$ is merged on cycle $c$, with the subscript arithmetic modulo $r'$.
  (In the special case of a 1-cycle, we have $r' = r = 1$, $\ell_0 = 1$, $t^0_0$ is the only string of that cycle, and we use $t^0_1 = t^0_0$.)
  It holds that:
  \begin{align*}
    |\bar{s}_{j}|
     & =\sum_{k=0}^{\ell_j-2}\dist(t_{j}^{k},t_{j}^{k+1})+|t_{j}^{\ell_j-1}|
    \\
     & =\sum_{k=0}^{\ell_j-2}\dist(t_{j}^{k},t_{j}^{k+1})+\dist(t_{j}^{\ell_j-1},t_{j+1}^{0}) + |\ov(t_{j}^{\ell_j-1},t_{j+1}^{0})| \,,
  \end{align*}
  since $|s| = \dist(s,t) + |\ov(s,t)|$ for any two strings $s$ and $t$.
  Summing over all $r'$ strings $\bar{s}_{j}$, we get
  \begin{align*}
    \sum_{j=0}^{r'-1}(|\bar{s}_{j}|-2 w(c))
     & = \sum_{j=0}^{r'-1}\left( \sum_{k=0}^{\ell_j-2}\dist(t_{j}^{k},t_{j}^{k+1})+\dist(t_{j}^{\ell_j-1},t_{j+1}^{0}) + |\ov(t_{j}^{\ell_j-1},t_{j+1}^{0})| -2w(c) \right)
    \\
     & = w(c) + \left(|\ov(t_{0}^{\ell_0-1},t_{1}^{0})| -2w(c) \right) + \sum_{j=1}^{r'-1} \left(|\ov(t_{j}^{\ell_j-1},t_{j+1}^{0})| -2w(c) \right)
    \\
     & \ge w(c) + \left(o(c)-2w(c)\right) + 0 = o(c)-w(c)\,,
  \end{align*}
  where the second equality uses that each edge of cycle $c$ either ``lies inside a string $\bar{s}_{j}$'', i.e., is an edge $(t_{j}^{k},t_{j}^{k+1})$ for some $j$ and $0\le k\le \ell_j-2$, or ``leads from string $\bar{s}_{j}$ to $\bar{s}_{j+1}$'', i.e., is an edge $(t_{j}^{\ell_j-1},t_{j+1}^{0})$ for some $j$, and the inequality follows from the fact that $o(c)$ is the smallest overlap on cycle $c$ and that $o(c) > 2w(c)$ as the cycle is small.
\end{proof}

We will need the Overlap Rotation Lemma from~\cite{breslauer}:
\begin{lemma}[Lemma 3.3 in~\cite{breslauer}] \label{lm:overlapRotationLemma}
  Let $\alpha$ be a periodic semi-infinite string. There exists an integer $k\in [1, \period(\alpha)]$ such that $|\ov(s,\alpha[k])|< \period(s)+\frac{1}{2}\period(\alpha)$ for any (finite) string $s$ inequivalent to  $\alpha$.
\end{lemma}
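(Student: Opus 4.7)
My plan is to pick a single rotation index $k^{*}$ that works for every inequivalent $s$ simultaneously, arguing by contradiction. Set $p := \period(\alpha)$ and $x := \pref(\alpha,\alpha)$, so that $\alpha = x^{\infty}$ and $x$ is primitive (otherwise $\period(\alpha) < p$). For each $k \in [1,p]$, let $y_{k}$ denote the rotation of $x$ for which $\alpha[k] = y_{k}^{\infty}$; every $y_{k}$ is primitive as well.

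Suppose toward a contradiction that some $s$ inequivalent to $\alpha$, with $q := \period(s)$, satisfies $\ell := |\ov(s,\alpha[k^{*}])| \geq q + p/2$, and let $u$ denote the length-$\ell$ string that is simultaneously the suffix of $s$ of length $\ell$ and the prefix of $y_{k^{*}}^{\infty}$ of length $\ell$. Since $|u| \geq q$, the suffix $u$ of $s$ has period $q$, so $u = z^{\infty}[1,\ell]$ for a specific rotation $z$ of $\pref(s,s)$; at the same time $u = y_{k^{*}}^{\infty}[1,\ell]$, so $u$ has period $p$ as well. When $\ell \geq p + q$, Lemma~\ref{lm:periodicity_gcd} forces $u$ to have period $\gcd(p,q)$; primitivity of $y_{k^{*}}$ then yields $p \mid q$, and primitivity of $\pref(s,s)$ forces $q = p$, after which comparing the first $p$ characters of the two expressions for $u$ gives $z = y_{k^{*}}$, i.e., $\pref(s,s)$ is a rotation of $x$, contradicting the inequivalence of $s$ and $\alpha$. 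The special case $q = p$ falls out directly from $\ell \geq 3p/2 \geq p$ without invoking Lemma~\ref{lm:periodicity_gcd}.

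The heart of the proof is the remaining regime $q + p/2 \leq \ell < p + q$ with $q \neq p$, where Lemma~\ref{lm:periodicity_gcd} does not apply. I would select $k^{*}$ via a structural feature of the primitive word $x$ (for instance a critical factorization position of $x$, or an index where $y_{k^{*}}$ has a distinguishing short prefix that occurs in $y_{k^{*}}^{\infty}$ only at positions $\equiv 1 \pmod{p}$), so that the partial period-$q$ structure of $u$ forced on the first $\ell > p/2$ positions of $y_{k^{*}}^{\infty}$ can be propagated throughout $y_{k^{*}}$. Primitivity of $y_{k^{*}}$ then forbids any period smaller than $p$, which pushes us back into the $\ell \geq p + q$ or $q = p$ cases already handled. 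The main obstacle is precisely closing this $p/2$ gap: the bound in the lemma is tight, and only a carefully distinguished rotation can rule out every inequivalent violator at once rather than string-by-string.
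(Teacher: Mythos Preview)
The paper does not give its own proof of this lemma; it is quoted verbatim from Breslauer, Jiang, and Jiang~\cite{breslauer} and used as a black box. So there is no ``paper's proof'' to compare your proposal against.

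As for the proposal itself: your handling of the cases $\ell \ge p+q$ and $q=p$ via Fine--Wilf is fine, and you have correctly isolated the real difficulty, namely the regime $q + p/2 \le \ell < p+q$. You also correctly guess the tool the original proof uses --- the Critical Factorization Theorem --- but you do not actually deploy it, and the sentence ``the partial period-$q$ structure \dots\ can be propagated throughout $y_{k^{*}}$'' is not an argument. The way the original proof closes the gap is roughly this: by the Critical Factorization Theorem applied to $xx$ (of length $2p$, with $\period(xx)=p$ since $x$ is primitive), there is a position $k^{*}\in[1,p]$ whose \emph{local period} in $xx$ equals $p$. Choosing this $k^{*}$, an overlap $u$ of length $\ell \ge q + p/2$ starting at position $k^{*}$ of $x^{\infty}$ places a factor with period $q$ of length at least $q$ straddling the boundary at position $p$ (because the overlap begins no later than position $p$ and has length $\ge p/2 + q$, so it extends at least $q$ characters past position $p$ and starts at least $p/2$ characters before it). That factor witnesses a local repetition of length $q$ at the critical point, forcing $q \ge p$; combined with $q \le p$ (from $\ell \ge q$ and $u$ having period $p$), one gets $q=p$, which you have already dispatched.

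In short: right intuition, right tool named, but the crucial step --- that criticality of $k^{*}$ converts ``period $q$ on a window of length $\ge q+p/2$'' into ``local period $\le q$ at a point where the local period is $p$'' --- is asserted rather than proved in your plan.
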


Note that the index $k$ is universal for all strings inequivalent to $\alpha$.
We now generalize Lemma~\ref{lm:overlapRotationLemma}:

\begin{lemma}\label{lm:overlapRotationLemmaGen}
  Let $\alpha$ and $k$ be as in Lemma \ref{lm:overlapRotationLemma}. For any $k'\in[0,k)$ and any (finite) string $s$ inequivalent to $\alpha$, the string $\alpha[k-k']$ satisfies $|\ov(s,\alpha[k-k'])|< \period(s)+\frac{1}{2}\period(\alpha)+k'$.
\end{lemma}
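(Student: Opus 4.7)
The plan is to reduce the claim for $\alpha[k-k']$ to the known bound for $\alpha[k]$ supplied by Lemma~\ref{lm:overlapRotationLemma} by observing that any overlap of $s$ with $\alpha[k-k']$ longer than $k'$ yields an overlap of $s$ with $\alpha[k]$ shorter by exactly $k'$ characters.

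In detail, set $L := |\ov(s,\alpha[k-k'])|$. If $L \le k'$, the desired inequality holds trivially since $\period(s)$ and $\frac{1}{2}\period(\alpha)$ are nonnegative, so I would assume $L > k'$. The overlap is a string $v$ of length $L$ that is simultaneously a suffix of $s$ and the length-$L$ prefix of $\alpha[k-k']$. Now I would consider the suffix of $v$ consisting of its last $L-k'$ characters; this is still a suffix of $s$, and it coincides with characters $k'+1,\dots,L$ of $\alpha[k-k']$, which by the definition of the shift operation on semi-infinite strings is exactly the length-$(L-k')$ prefix of $\alpha[k]$. Hence the last $L-k'$ characters of $s$ form a valid overlap of $s$ with $\alpha[k]$, giving $|\ov(s,\alpha[k])|\ge L-k'$.

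The conclusion then follows immediately: since $s$ is inequivalent to $\alpha$, Lemma~\ref{lm:overlapRotationLemma} applied at index $k$ gives $|\ov(s,\alpha[k])| < \period(s) + \frac{1}{2}\period(\alpha)$, so combining with the previous inequality yields $L - k' < \period(s) + \frac{1}{2}\period(\alpha)$, i.e.
\[
|\ov(s,\alpha[k-k'])| = L < \period(s) + \tfrac{1}{2}\period(\alpha) + k',
\]
as required. There is no real obstacle here; the only subtlety is to check that the shift arithmetic on the semi-infinite string $\alpha$ works out correctly, i.e.\ that $(\alpha[k-k'])[k'+j] = \alpha[k+j-1]$ for $j = 1,\dots,L-k'$, which is simply the definition of $\alpha[\,\cdot\,]$.
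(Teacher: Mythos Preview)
Your proof is correct and takes essentially the same approach as the paper: both establish $|\ov(s,\alpha[k-k'])| \le |\ov(s,\alpha[k])| + k'$ and then invoke Lemma~\ref{lm:overlapRotationLemma}. The only difference is cosmetic---the paper derives this inequality via the triangle inequality for distances (writing $|\ov(s,\alpha[k-k'])| = |s| - \dist(s,\alpha[k-k']) \le |s| - \dist(s,\alpha[k]) + \dist(\alpha[k-k'],\alpha[k]) \le |\ov(s,\alpha[k])| + k'$), whereas you argue directly by taking a suffix of the overlap string.
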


\begin{proof}
  For $k'=0$ the statement of the lemma coincides with Lemma~\ref{lm:overlapRotationLemma}. It remains to show the lemma for $k'>0$.
  We have
  \begin{align*}
    |\ov(s,\alpha[k-k'])| & = |s|-\dist(s,\alpha[k-k'])
    \\
                          & \le |s| - \dist(s,\alpha[k]) + \dist(\alpha[k-k'],\alpha[k])
    \\
                          & = |\ov(s,\alpha[k])| + \dist(\alpha[k-k'],\alpha[k])
    \\
                          & \le  |\ov(s,\alpha[k])| + k' < \period(s)+\frac{1}{2}\period(\alpha) + k'\,,
  \end{align*}
  where in the second line, we applied the triangle inequality in $G_{\dist}(S)$ and the last step follows from Lemma~\ref{lm:overlapRotationLemma}.
\end{proof}

In Lemma \ref{lm:newOPTBound}, we prove the first upper bound on $o$, i.e., inequality~\eqref{eq:firstBound}.
\begin{lemma}\label{lm:newOPTBound}
  It holds that $o\leq n +  \sum_{c\in \mathcal{S}(S)}w(c) + 1.5\cdot\sum_{c\in \mathcal{L}(S)}w(c)$.
\end{lemma}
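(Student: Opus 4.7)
The plan is to lower-bound $n$ by the length of a shortest superstring $\sigma$ of a carefully chosen subset $S' \subseteq S$, and then analyze $|\sigma|$ using the structural lemmas built so far. First, I would define $S'$ to consist of all input strings lying on a small cycle of $\C(S)$, together with one ``rotationally aligned'' string $s^c$ per large cycle $c$, chosen as follows: apply Lemma~\ref{lm:overlapRotationLemma} to $\alpha_c := s(c)^\infty$ to obtain $k_c \in [1, w(c)]$, write $q_0, \ldots, q_{r-1} \in [0, w(c))$ for the starting positions of $s_{c_0}, \ldots, s_{c_{r-1}}$ in $\alpha_c$, and pick $s^c = s_{c_\ell}$ so that $k_c$ lies in the cyclic interval $[q_\ell, q_{\ell+1})$, setting $k'_c := (k_c - q_\ell) \bmod w(c) \in [0, w(c))$. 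This selection guarantees two properties: \textbf{(i)} $s^c$ is a prefix of $\alpha_c[k_c - k'_c]$, so by Lemma~\ref{lm:overlapRotationLemmaGen} (or a mild extension to $k'_c \in [0, w(c))$ that follows verbatim from its proof via $\dist(\alpha_c[k_c - k'_c], \alpha_c[k_c]) \le k'_c$), $|\ov(t, s^c)| < \period(t) + \tfrac{1}{2} w(c) + k'_c$ holds for every string $t$ inequivalent to $\alpha_c$; and \textbf{(ii)} the outgoing cycle edge at $s^c$ has $\dist(s^c, s_{c_{\ell+1}}) \ge k'_c$, so $|s^c| \ge o(c) + k'_c$.

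Next, I would take $\sigma$ to be a shortest superstring of $S'$, giving $|\sigma| \le n$. By Observation~\ref{obs:smallCyclesRotations}, I may assume that for each small cycle $c$ the strings of $c$ appear in $\sigma$ in $r'_c$ maximal runs, each a directed segment of $c$; merging within each run yields $\bar{s}^c_0, \ldots, \bar{s}^c_{r'_c - 1}$, to which Lemma~\ref{lm:residuesSmall} applies. Viewing $\sigma$ as a sequence of blocks $B_1, \ldots, B_K$, each being either some $\bar{s}^c_j$ or some $s^c$, I write $|\sigma| = \sum_i |B_i| - \sum_i |\ov(B_i, B_{i+1})|$, where each external overlap is an overlap between the last string of $B_i$ and the first string of $B_{i+1}$, two strings from different cycles of $\C(S)$ by construction of the runs. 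I bound each external overlap as follows: if $B_{i+1}$ is a large cycle block $s^c$, then the last string $t$ of $B_i$ is inequivalent to $\alpha_c$ (when $B_i$ is a small cycle block this follows from Lemma~\ref{lm:equivSmall} together with Observation~\ref{obs:representativesInequivalent}; when $B_i$ is itself a large cycle block a parallel argument is needed), and property~(i) combined with Observation~\ref{obs:periodUB} ($\period(t) \le w(c(B_i))$) yields $|\ov(B_i, B_{i+1})| < w(c(B_i)) + \tfrac{1}{2} w(c) + k'_c$; otherwise Lemma~\ref{lm:maOverlap} yields $|\ov(B_i, B_{i+1})| < w(c(B_i)) + w(c(B_{i+1}))$.

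Finally, writing $W_S := \sum_{c \in \mathcal{S}(S)} r'_c w(c)$ and $W_L := \sum_{c \in \mathcal{L}(S)} w(c)$, a direct telescoping over the external overlap bounds (attributing to each block a ``left'' contribution $w(c(B_i))$ and a ``right'' contribution of either $w(c(B_i))$ for small blocks or $\tfrac{1}{2} w(c(B_i)) + k'_{c(B_i)}$ for large blocks, and dropping the first/last block's missing contributions for an upper bound) gives $\sum_i |\ov(B_i, B_{i+1})| < 2 W_S + \tfrac{3}{2} W_L + \sum_{c \in \mathcal{L}(S)} k'_c$. Combining this with the lower bound $\sum_i |B_i| \ge 2 W_S + \sum_{c \in \mathcal{S}(S)} (o(c) - w(c)) + \sum_{c \in \mathcal{L}(S)} |s^c|$ from Lemma~\ref{lm:residuesSmall} and with property~(ii), both the $2 W_S$ and $\sum_c k'_c$ terms cancel, leaving $|\sigma| \ge \sum_{c \in \mathcal{S}(S)} (o(c) - w(c)) + \sum_{c \in \mathcal{L}(S)} (o(c) - \tfrac{3}{2} w(c)) = o - \sum_{c \in \mathcal{S}(S)} w(c) - \tfrac{3}{2} \sum_{c \in \mathcal{L}(S)} w(c)$; using $|\sigma| \le n$ and rearranging yields the claim. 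The main obstacle will be the simultaneous satisfaction of properties~(i) and~(ii) with a \emph{matching} value of $k'_c$: each unit by which the starting position of $s^c$ drifts from $k_c$ costs $+k'_c$ in the overlap bound but is repaid by $+k'_c$ in $|s^c|$, and the ``closest cyclic predecessor'' choice is exactly what makes these two effects cancel out.
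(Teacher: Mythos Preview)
Your approach is essentially identical to the paper's proof: pick one string per large cycle via the Overlap Rotation Lemma with the ``closest cyclic predecessor'' shift $k'_c$, take all small-cycle strings, and lower-bound the optimal superstring of this subset by telescoping block lengths minus external overlaps, using Lemma~\ref{lm:residuesSmall} for the small-cycle side and the cancellation of $k'_c$ on the large-cycle side.

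There is one place where your sketch is incomplete and the ``parallel argument'' you defer does \emph{not} go through in the same way. When $B_i$ is a large-cycle block with last string $t\in c'$, you cannot argue inequivalence of $t$ to $\alpha_c$ via Lemma~\ref{lm:equivSmall}: that lemma is specific to small cycles, and indeed a string on a large cycle need not be equivalent to $R_{c'}$ (this is exactly the point made just before Lemma~\ref{lm:equivSmall}). The paper sidesteps this uniformly for both cases by replacing $t$ with $\hat{R}_{c'}:=\strings(c',t)\,t$. This string has $t$ as a suffix (so $|\ov(t,s^c)|\le |\ov(\hat{R}_{c'},\alpha_c[k_c-k'_c])|$), has $\period(\hat{R}_{c'})\le w(c')$, and contains all strings of $c'$, so it is inequivalent to $\alpha_c$ by the second part of Observation~\ref{obs:representativesInequivalent}. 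With this substitution your bound $|\ov(B_i,B_{i+1})|<w(c(B_i))+\tfrac12 w(c)+k'_c$ follows directly from Lemma~\ref{lm:overlapRotationLemmaGen} in both cases, and the rest of your telescoping argument is exactly as in the paper.
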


\begin{proof}
  First, for each large cycle $c$, we apply Lemma \ref{lm:overlapRotationLemma} for the semi-infinite string $\alpha_c = s(c)^\infty$ to get an integer $k_c\ge 1$. We also let $k'_c$ to be the smallest integer $k'\ge 0$ such that $\alpha_c[k_c-k']$ starts with a string $f_c$ from cycle $c$.
  By the minimality of $k'_c$, it follows that $k'_c < \dist(f_c, t_c)$, which is the prefix length between $f_c$ and string $t_c$ that $f_c$ is merged to across the large cycle $c$. See the following for an illustration.

  \[
    \alpha_c=\underbrace{abcabcabc}_{\pref(s_{c_{0}},s_{c_{1}})}\underbrace{abcabcabcabc}_{\pref(s_{c_{1}},s_{c_{2}})}\underbrace{\tikzmarknode{prefstart}{a}bcabca\tikzmarknode{k}{b}cabcabcabc}_{\pref(\tikzmarknode{fc}{f_c=s_{c_{2}}},t_c=s_{c_{3}})}\underbrace{abcabcabcabcabca}_{\pref(s_{c_{3}},s_{c_{0}})}\underbrace{abcabcabc}_{\pref(s_{c_{0}},s_{c_{1}})}\dots
  \]
  \begin{tikzpicture}[remember picture,overlay]
    \node[above=0.3cm of k] (k label) {\scriptsize $k_c$};
    \node[above=0.3cm of prefstart] (prefstart label) {};
    \draw[-] (k label) -- ([yshift=3pt] k.north);
    \path (prefstart label|-k label) edge[<->] node[fill=white] {\scriptsize $k'_c$} (k label);
  \end{tikzpicture}

  \noindent Since $|f_c|=\dist(f_c,t_c)+|\ov(f_c,t_c)|$ and $|\ov(f_c,t_c)| \ge o(c)$,  we get that $k'_c < \dist(f_c,t_c) = |f_c|-|\ov(f_c,t_c)| \le |f_c|-o(c)$.

  Fix input $S_{r}\subseteq S$, which contains all strings of $S$ belonging to small cycles and only the single string $f_c$ from each large cycle $c$. Consider $\OPT(S_{r})$, the optimal superstring of $S_{r}$, and let $n_{r}=|\OPT(S_{r})|$. Our aim is to derive a lower bound on $n_{r}\le n$.

  Superstring $\OPT(S_{r})$ induces a partition of the strings in each small cycle $c$ such that strings in each part are merged together in $\OPT(S_{r})$, while strings from different parts are separated by a string from a different cycle; this is the partition for which we apply Lemma~\ref{lm:residuesSmall}. By Observation~\ref{obs:smallCyclesRotations}, we may assume that the order in which strings of the same small cycle $c$ are merged in $\OPT(S_{r})$ is the same as the order in which they appear on $c$. For a small cycle $c$, let $r_c'$ be the size of this partition of strings in $c$, and for $j = 0,\dots, r_c'$, denote by $\bar{s}_{c,j}$ the string obtained by merging strings in the $j$-th part (in the same order as they appear on $c$).

  The key step towards lower-bounding $n_{r}$ is to obtain suitable upper bounds on the overlap length of two strings merged in $\OPT(S_{r})$ after we merge strings of small cycles $c$ to obtain strings $\bar{s}_{c,j}$.
  First, consider string $f_c$ of a large cycle $c$ and string $s'$ from a cycle $c'$ for $c'\neq c$ such that $s'$ is either $f_{c'}$ or $\bar{s}_{c',j}$ (depending on whether $c'$ is large or small) and $s'$ and $f_c$ are merged in $\OPT(S_{r})$ in this order.
  Consider string $\hat{R}_{c'} := \strings(c',s')s'$\footnote{Strictly speaking, $\strings(c',s')$ is only defined for a string $s'$ of cycle $c'$. If $c'$ is a small cycle and $s' = \bar{s}_{c',j}$ is a result of merging strings $t_{j}^{0}, t_{j}^{1}, \dots, t_{j}^{\ell_j-1}$ from cycle $c'$, then we let $\strings(c',s') := \strings(c',t_{j}^{0})$ so that $\hat{R}_{c'} = \strings(c',t_{j}^{0})s'$.}.
  Note that $\period(\hat{R}_{c'}) \le w(c')$ as $\hat{R}_{c'} = \strings(c',s')s'$, $s'$ is a prefix of $\strings(c',s')^\infty$ and $|\strings(c',s')| = w(c')$.
  Furthermore, $\hat{R}_{c'}$ contains all strings of cycle $c'$ as substrings, and thus, $\hat{R}_{c'}$ is inequivalent to $\alpha_c$ by Observation~\ref{obs:representativesInequivalent}.
  Since $f_c$ is a prefix of $\alpha_c[k_c-k'_c]$ and $s'$ is a suffix of $\hat{R}_{c'}$, we have
  $|\ov(s', f_c)| \le |\ov(\hat{R}_{c'}, \alpha_c[k_c-k'_c])|$. Using this together with Lemma~\ref{lm:overlapRotationLemmaGen} for $\alpha_c$, $k_c'$, and $\hat{R}_{c'}$, it holds that
  \begin{align}\label{eq:f_c_overlap}
    |\ov(s', f_c)| \le |\ov(\hat{R}_{c'}, \alpha_c[k_c-k'_c])|
     & < \period(\hat{R}_{c'}) + \frac12 \period(\alpha_c) + k_c'
    \nonumber                                                     \\
     & < w(c') + \frac12 w(c) + |f_c|-o(c)\,,
  \end{align}
  where the third inequality uses $\period(\hat{R}_{c'}) \le w(c')$,
  $\period(\alpha_c) \le w(c)$ (by the definition of $\alpha_c = s(c)^\infty$ and $|s(c)| = w(c)$),
  and $k_c'<|f_c|-o(c)$.

  Second, consider string $\bar{s}_{c,j}$ for a small cycle $c$ (recall that $\bar{s}_{c,j}$ may be the result of merging several strings appearing consecutively on $c$). Let $s'$ be the string merged to $\bar{s}_{c,j}$ in $\OPT(S_{r})$ in this order, and let $c'$ be the (large or small) cycle of string $s'$. From Corollary~\ref{cor:lm:maOverlap} we get
  \begin{equation}\label{eq:bar-s_c,j_overlap}
    |\ov(s', \bar{s}_{c,j})| < w(c') + w(c)\,.
  \end{equation}
  Observe that $n_{r} \ge \sum_s (|s| - |\ov(s', s)|)$, where the sum is over strings $f_c$ and $\bar{s}_{c,j}$ as defined above and
  $s'$ is the string merged to $s$ in $\OPT(S_{r})$ ($s'$ is empty for the first string in $\OPT(S_{r})$). Next, we use~\eqref{eq:f_c_overlap} or~\eqref{eq:bar-s_c,j_overlap} to bound $|\ov(s', s)|$ for all such strings $s$.
  In particular, since each such string appears once as string $s'$ (except for the last one), we get that
  \begin{equation}\label{eq:sumLengths}
    n_{r}\ge \sum_{c\in \mathcal{L}(S)} (|f_c|-1.5\cdot w(c)-(|f_c|-o(c)))+ \sum_{c\in \mathcal{S}(S)}\sum_{j=0}^{r_c'-1}\big(|\bar{s}_{c,j}| - 2\cdot w(c)\big)\,.
  \end{equation}
  Using Lemma \ref{lm:residuesSmall}, we lower-bound the second term in the right-hand side of \eqref{eq:sumLengths} and obtain
  \begin{equation*}\label{eq:sumLengthsNew}
    n_{r}\ge \sum_{c\in \mathcal{L}(S)} \big(o(c)-1.5\cdot w(c)\big)+ \sum_{c\in \mathcal{S}(S)} \big(o(c)-w(c)\big)
  \end{equation*}
  Using that $n=|\OPT(S)|\ge |\OPT(S_{r})|=n_{r}$ as $S_{r}\subseteq S$, and that $o= \sum_{c\in\mathcal{L}(S)} o(c)+\sum_{c\in\mathcal{S}(S)} o(c)$, we obtain
  \begin{equation*}
    n\ge o - 1.5\cdot \sum_{c\in \mathcal{L}(S)} w(c) - \sum_{c\in \mathcal{S}(S)} w(c)\,,
  \end{equation*}
  which completes the proof by rearranging.
\end{proof}

\section{The Second Upper Bound}\label{sec:secondBound}

In this section we show \eqref{eq:secondBound}. The first ingredient of our analysis is a suitable modification of the input set of strings $S$.

\subsection{Modifying the Input}\label{sec:modifyingInput}
For each small cycle $c=s_{c_{0}}\rightarrow s_{c_{1}} \rightarrow \dots \rightarrow s_{c_{r-1}} \rightarrow s_{c_{0}}$ in $\C(S)$, we remove all strings belonging to this cycle from $S$ and instead add the string $$R'_{c} := \pref(s_{c_{0}},s_{c_{1}})\pref(s_{c_{1}},s_{c_{2}})\dots \pref(s_{c_{r-2}},s_{c_{r-1}})\pref(s_{c_{r-1}},s_{c_{0}})s_{c_{0}}$$ to $S$. Note that the representative string $R_c$ is a prefix of $R'_c$ and thus, $R'_c$ contains all strings of the small cycle $c$. We denote the new set of strings obtained this way by $S'$.

The length of $\C(S')$ is the same as the length of $\C(S)$. Indeed, due to Lemma~\ref{lm:lmMucha}, the generated optimal \CC remains the same except that whenever we had a small cycle $c$ involving nodes $s_{c_{0}}, s_{c_{1}}, \dots, s_{c_{r-1}}$ before, we now only have a single node (corresponding to the string $R'_c$) and a self-loop at that node. In addition, the length of small cycles does not change, i.e., $\sum_{c\in \mathcal{S}(S')}w(c) = \sum_{c\in \mathcal{S}(S)}w(c)$, again by Lemma~\ref{lm:lmMucha}.

However, the length $n'=|\OPT(S')|$ of the shortest superstring of $S'$ could increase compared to the length $n=|\OPT(S)|$ of the optimal shortest superstring of $S$. The following lemma gives a bound on the increase.
\begin{lemma}\label{lem:input-transform}
  The shortest superstring for $S'$ is at most by $\sum_{c\in \mathcal{S}(S)}w(c)$ longer than the shortest superstring for $S$.
\end{lemma}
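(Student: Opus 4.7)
The plan is to construct, from an optimal superstring $\sigma$ of $S$ with $|\sigma|=n$, a superstring $\sigma'$ of $S'$ whose length exceeds $n$ by at most $\sum_{c\in\mathcal{S}(S)}w(c)$. The key algebraic observation driving the construction is that each $R'_c$ factors as $R'_c=P_c\cdot s_{c_{0}}$, where
\[
P_c \;:=\; \pref(s_{c_{0}},s_{c_{1}})\,\pref(s_{c_{1}},s_{c_{2}})\cdots\pref(s_{c_{r-1}},s_{c_{0}})
\]
has length $|P_c|=\sum_{i=0}^{r-1}\dist(s_{c_{i}},s_{c_{i+1\bmod r}})=w(c)$. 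Because $s_{c_{0}}\in S$ already appears as a substring of $\sigma$, prepending $P_c$ to any occurrence of $s_{c_{0}}$ in $\sigma$ manufactures an occurrence of $R'_c$ at the cost of only $w(c)$ additional characters.

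Concretely, for each small cycle $c$ I would pick an occurrence of $s_{c_{0}}$ in $\sigma$ at some position $p_c$ and insert $P_c$ immediately before position $p_c$. Since the standing assumption forbids any input string from being a substring of another, distinct starting strings $s_{c_{0}}$ and $s_{c'_{0}}$ for $c\neq c'$ cannot begin at the same position of $\sigma$, so all the $p_c$'s are distinct and these insertions are well-defined to perform simultaneously. The resulting string $\sigma'$ has length exactly $n+\sum_{c\in\mathcal{S}(S)}w(c)$, and contains every $R'_c$ as a substring by construction.

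The point that will require some care is verifying that $\sigma'$ remains a superstring of every string of $S$ that belongs to a \emph{large} cycle, since those strings (together with the $R'_c$'s) make up $S'$. Because insertions do not delete any characters of $\sigma$, every large-cycle-string occurrence in $\sigma$ that lies entirely inside a single ``slab'' between two consecutive insertion points survives intact. The only obstruction would be a large-cycle string $t$ whose every occurrence in $\sigma$ straddles some chosen $p_c$. My plan to rule this out is to pick each $p_c$ as, say, the leftmost occurrence of $s_{c_{0}}$ in $\sigma$, process the cycles in decreasing order of $p_c$ (so that each $R'_c$-window, once formed, is never subsequently cut by an insertion to its left), and then use $|s_{c_{0}}|>2w(c)$ from~\eqref{eq:minStringLength0} together with Lemma~\ref{lm:maOverlap}---which bounds any overlap between strings from different cycles by $w(c_t)+w(c)$---to argue that no such bad configuration can arise. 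Granting this verification, $|\OPT(S')|\le|\sigma'|=n+\sum_{c\in\mathcal{S}(S)}w(c)$ follows immediately, which is exactly the claimed bound.
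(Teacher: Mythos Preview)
Your construction is exactly the paper's: for each small cycle $c$, replace an occurrence of $s_{c_0}$ in $\sigma$ by $R'_c=P_c\,s_{c_0}$, paying $w(c)$ extra characters. The paper simply asserts the result is a superstring of $S'$ ``by construction'' and does not spell out the preservation argument you are worried about.

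Your worry is legitimate, but the fix you sketch is not the right one. An occurrence of a large-cycle string $t$ can certainly straddle the insertion point $p_c$, and neither the choice of leftmost occurrence nor Lemma~\ref{lm:maOverlap} rules this out; the overlap bound tells you nothing about where $t$ sits relative to $p_c$. The clean observation that makes the argument go through is purely periodic: since $s_{c_0}$ is a prefix of $P_c^{\infty}$, the string $s_{c_0}$ is both a \emph{suffix} of $R'_c=P_c\,s_{c_0}$ (trivially) and a \emph{prefix} of $R'_c$. Hence replacing $s_{c_0}$ by $R'_c$ can equivalently be viewed as inserting $w(c)$ characters \emph{after} the occurrence of $s_{c_0}$ rather than before it. In particular, the new string agrees with the old one on all positions up to the end of the chosen occurrence of $s_{c_0}$, and (after a shift by $w(c)$) on all positions from its start onward. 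So any $t$ whose occurrence does not properly contain $s_{c_0}$ survives intact; the no-substring assumption handles every string of $S$, and for previously inserted $R'_{c'}$ one only needs that $R'_{c'}$ does not contain $s_{c_0}$, which follows from Corollary~\ref{cor:lm:maOverlap} and $|s_{c_0}|>2w(c)$ provided you process small cycles in non-decreasing order of $w(\cdot)$. With this periodicity observation in place, your ordering trick and the appeal to Lemma~\ref{lm:maOverlap} become unnecessary.
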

\begin{proof}
  We show how to transform any superstring $\sigma$ for $S$ into a superstring $\sigma'$ for $S'$ (which is also a superstring for $S$ as $R'_c$ contains all strings of the small cycle $c$)
  while only increasing the length of the superstring by $\sum_{c\in \mathcal{S}(S)}w(c)$, i.e., $|\sigma'| \le |\sigma| + \sum_{c\in \mathcal{S}(S)}w(c)$.
  Namely, for every small cycle  $s_{c_{0}}\rightarrow s_{c_{1}} \rightarrow \dots \rightarrow s_{c_{r-1}} \rightarrow s_{c_{0}}$ in $\C(S)$, we replace the first occurrence of $s_{c_{0}}$ in $\sigma$ by $R'_c$. The resulting superstring is our new string $\sigma'$, which
  by construction, contains all strings of $S'$ as required.

  For a small cycle $c$, the length of $R'_c$ is equal to $|s_{c_{0}}|+w(c)$. Therefore, $|\sigma'| \le |\sigma| + \sum_{c\in \mathcal{S}(S)}w(c)$ as claimed.
\end{proof}

\begin{corollary}\label{cor:opt-lower-bound}
  Let $\C_0(S')$ be a directed Hamiltonian cycle of minimum length in the distance graph $G_{\dist}(S')$. The length $n$ of the shortest superstring for $S$ is at least $|\C_0(S')| - \sum_{c\in \mathcal{S}(S')}w(c)$.
\end{corollary}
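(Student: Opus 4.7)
The plan is to combine Lemma~\ref{lem:input-transform} with the elementary fact, already invoked in the excerpt when noting $w \le n$, that any superstring induces a directed Hamiltonian cycle in the distance graph of no greater total length. Writing $n' := |\OPT(S')|$, I would establish the chain $|\C_0(S')| \le n' \le n + \sum_{c\in\mathcal{S}(S)}w(c)$ and then rearrange.

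For the first inequality, given any superstring $\tau$ for $S'$, listing the strings of $S'$ in the order of their first occurrences in $\tau$ as $t_1,\dots,t_{|S'|}$ and closing the induced path into a cycle by the edge $t_{|S'|}\to t_1$ produces a Hamiltonian cycle in $G_{\dist}(S')$ of total length $\sum_{i=1}^{|S'|-1}\dist(t_i,t_{i+1}) + \dist(t_{|S'|},t_1)$. The edge-to-merged-string correspondence from Section~\ref{sec:definitions} shows that the first sum is at most $|\tau| - |t_{|S'|}|$, while $\dist(t_{|S'|},t_1) \le |t_{|S'|}|$, so the cycle has length at most $|\tau|$. Taking $\tau = \OPT(S')$ gives $|\C_0(S')| \le n'$.

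For the second inequality, Lemma~\ref{lem:input-transform} directly yields $n' \le n + \sum_{c\in\mathcal{S}(S)}w(c)$, and the paragraph immediately preceding that lemma already records $\sum_{c\in\mathcal{S}(S')}w(c) = \sum_{c\in\mathcal{S}(S)}w(c)$, since the input modification collapses each small cycle $c$ into a single self-loop of the same weight $w(c)$ while leaving the rest of the optimal cycle cover intact. Chaining the two inequalities produces $|\C_0(S')| \le n + \sum_{c\in\mathcal{S}(S')}w(c)$, and rearranging gives the claim. There is no real obstacle here; the only point worth stating explicitly is the standard reduction from a superstring to a Hamiltonian cycle in the distance graph, which follows at once from the identity $|s| = \dist(s,t) + |\ov(s,t)|$ applied along consecutive pairs of $\tau$.
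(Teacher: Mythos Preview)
Your proof is correct and essentially the same as the paper's: both combine Lemma~\ref{lem:input-transform} with the standard fact that a superstring yields a Hamiltonian cycle in the distance graph of no greater length, and both (implicitly or explicitly) use $\sum_{c\in\mathcal{S}(S')}w(c) = \sum_{c\in\mathcal{S}(S)}w(c)$ to finish. The only difference is that you spell out the superstring-to-Hamiltonian-cycle reduction in slightly more detail than the paper, which simply says one forms a cycle by merging the last string with the first.
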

\begin{proof}
  The length $n'$ of the shortest superstring for $S'$ is at least $|\C_0(S')|$, since
  we can form a Hamiltonian cycle of length at most $n'$ by merging the first and last string of the shortest superstring.
  With this, the corollary follows from Lemma~\ref{lem:input-transform}.
\end{proof}

Since the sum of overlap lengths of cycle-closing edges in $\C(S')$, denoted $o'$, cannot be smaller than $o$, the sum of overlap lengths of cycle-closing edges in $\C(S)$, showing the following inequality
\begin{equation}\label{eq:secondBoundNew}
  o'\leq |\C_0(S')| +  (\gamma - 1)\cdot\sum_{c\in \mathcal{S}(S')}w(c)+  \sum_{c\in \mathcal{L}(S')}w(c)
\end{equation}
implies \eqref{eq:secondBound}, due to Corollary \ref{cor:opt-lower-bound}.

\subsection{Overview of the Proof}
Before proceeding, we note that our goal is to show \eqref{eq:secondBoundNew} and from now on we will only be concerned with the modified input $S'$. Therefore, for the sake of simplicity, we omit the set $S'$ from the cycle cover notation from this point onward (for instance, we shall indicate $\C(S')$ as $\C$ and $\C_{0}(S')$ as $\C_{0}$).

Consider a maximum directed Hamiltonian cycle $\C_{0}$ in $G_{\ov}(S')$ and note that $\C_{0}$ is, in particular, also a (not necessarily maximum) \CC in $G_{\ov}(S')$. We call the sum of the profits of the edges of a \CC in $G_{\ov}(S')$ the \emph{total overlap} of the \CC. Our goal is to show that the total overlap
of $\C_{0}$ is by at least
\begin{equation}\label{eqn:overlap-difference}
  \sum_{c\in \mathcal{S}(S')} (o(c)-\gamma\cdot w(c))+\sum_{c\in \mathcal{L}(S')} (o(c)-2\cdot w(c))
\end{equation}
smaller than the total overlap of the optimal \CC $\C$.
In terms of the distance graph, this implies that $\C_{0}$ has a length which is by at least $\sum_{c\in \mathcal{S}(S')} (o(c)-\gamma\cdot w(c))+\sum_{c\in \mathcal{L}(S')} (o(c)-2\cdot w(c))$ larger than the length of $\C$. The length of $\C$ is $\sum_{c\in \mathcal{S}(S')}w(c) + \sum_{c\in \mathcal{L}(S')}w(c)$.
Therefore, \eqref{eq:secondBoundNew} is then implied by the following sequence of calculations:
\begin{align*}
  |\C_{0}| & \ge \sum_{c\in \mathcal{S}(S')} (o(c)-\gamma\cdot w(c))+\sum_{c\in \mathcal{L}(S')} (o(c)-2\cdot w(c)) +
  \sum_{c\in \mathcal{S}(S')}w(c) + \sum_{c\in \mathcal{L}(S')}w(c)                                                   \\ &=
  \sum_{c\in \mathcal{S}(S')} (o(c)-(\gamma-1)\cdot w(c))+\sum_{c\in \mathcal{L}(S')} (o(c)- w(c))                    \\ & =
  o'- (\gamma-1)\cdot \sum_{c\in \mathcal{S}(S')}  w(c)-\sum_{c\in \mathcal{L}(S')} w(c)\,,
\end{align*}
and this implies~\eqref{eq:secondBound}, as noted above.

To show the desired lower bound on the difference of total overlap between $\C$ and $\C_0$, we slowly ``transform'' $\C_0$ into $\C$ and track how each step of the transformation increases the total overlap. Next, we describe these individual transformation steps in more detail.

\begin{figure}
  \begin{center}
    \begin{tikzpicture}[node distance=2cm and 2cm,-{Latex[length=2mm]}, vertex/.style={draw,circle,fill=black,inner sep=0pt,minimum size=3pt}]
      \node[vertex] (u) [label={[label distance=-4]above left:$u$}] {};
      \node[vertex] (v) [right of=u, label={[label distance=-4]above right:$v$}] {};
      \node[vertex] (u') [below of=u, label={[label distance=-4]below left:$u'$}] {};
      \node[vertex] (v') [below of=v, label={[label distance=-4]below right:$v'$}] {};
      \draw (u) to node[above] {$e$} (v);
      \draw (u) to node[left] {$f'$} (u');
      \draw (v') to node[right] {$f$}  (v);
      \draw (v') to node[below] {$e'$} (u');
    \end{tikzpicture}
    \caption{Illustration of the notation for $\swap(\overline{\C},e)$. Note that we also allow nodes to be equal to one another here, e.g., it could be that $u=v$, in which case $e$ is a self-loop.}
    \label{fig:second_bound_notation}
  \end{center}
\end{figure}
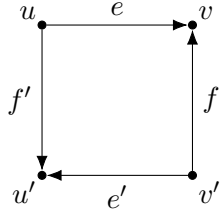

Consider any \CC $\overline{\C}$ and a directed edge $e=(u,v)$ which is not contained in $\overline{\C}$ (note that $u=v$ is possible because the graphs contain self-loops). Then we can modify $\overline{\C}$ slightly such that it does contain $e$. Specifically, let $f=(v',v)$ be the incoming edge of $v$ in $\overline{\C}$ and $f'=(u,u')$ be the outgoing edge of $u$ in $\overline{\C}$. Then, we can add $e$ and $e'=(v',u')$ to $\overline{\C}$ and instead remove $f$ and $f'$ from $\overline{\C}$. The resulting set of edges forms a \CC $\overline{\C}'$ which now includes the edge $e$. We call this operation an \emph{edge swap}. Note that the edge swap is completely determined by the given \CC $\overline{\C}$ and the edge $e$. We refer to this unique swap as $\swap(\overline{\C},e)$ and always refer to the edges that are added to the \CC as $e$ and $e'$ and to the edges which are removed as $f$ and $f'$;
see Figure~\ref{fig:second_bound_notation} for an illustration of the notation.

Given a \CC $\C_0$ (in our case the maximum Hamiltonian cycle) and the \CC $\C$, we can transform $\C_0$ into $\C$ by a sequence of edge swaps. Specifically, if $\C_i$ is a \CC, we can take any edge $e\in \C\setminus \C_i$, i.e., any edge in $\C$ that is not in $\C_i$, and obtain a new \CC $\C_{i+1}$ from $\C_i$ by performing $\swap(\C_i,e)$. Note that because $e \in \C$, the edges $f$ and $f'$ which are swapped out in $\swap(\C_i,e)$ cannot be part of $\C$. If $e'$ belongs to $\C$, the symmetric difference between $\C_{i+1}$ and $\C$ contains four fewer edges than the symmetric difference between $\C_{i}$ and $\C$ (namely all four edges $e$, $e'$, $f$, and $f'$). If $e'$ is not part of $\C$, the symmetric difference between $\C_{i+1}$ and $\C$ contains two fewer edges than the symmetric difference between $\C_{i}$ and $\C$ (it no longer contains $e$, $f$, and $f'$, but it now contains $e'$). In either case, the number of edges in the symmetric difference always decreases and therefore, after a finite number of such edge swap operations, we obtain a \CC $\C_{\ell}$ which is identical to $\C$.

If we obtain $\C_{i+1}$ from $\C_i$ by swapping in the edges $e$ and $e'$ and swapping out the edges $f$ and $f'$, then the total overlap of $\C_{i+1}$ is larger than the total overlap of $\C_i$ by $|\ov(e)|+|\ov(e')|-|\ov(f)|-|\ov(f')|$.

For a \CC $\C_i$, let $\mathcal{M}(\C_i)$ be the set of small cycles of $\C$ which are also part of $\C_i$.
In other words, if $\C_i$ contains a self-loop $(s,s)$ and the string $s$ corresponds to a small cycle $c$ in $\C$, then (and only then) $c \in \mathcal{M}(\C_i)$.
Note that since $\swap(\C_i,e)$ for $e\in \C\setminus \C_i$ only removes edges $f, f'\in \C_{i}\setminus\C$ from $\C_{i}$,
it holds that $\mathcal{M}(\C_{i+1})\supseteq \mathcal{M}(\C_i)$.

Ideally, we would want to show that we can always choose an edge $e \in \C \setminus \C_i$ such that the total overlap increase from $\C_i$ to $\C_{i+1}$ is at least $\sum_{c \in \mathcal{M}(\C_{i+1})\setminus \mathcal{M}(\C_i)} (o(c)- \gamma\cdot w(c))$. It would not be difficult to see that summing over all $i$ would then imply the desired result, i.e., inequality~\eqref{eqn:overlap-difference}, even without the sum over large cycles. Unfortunately, this appears difficult and in some cases we have to allow for slightly smaller increases. To address this, we relate some small cycles and some large cycles to one another.

We define a relation $T$ between small cycles and a large cycle as follows. A small cycle $c$ of $\C$ and a large cycle $c'$ of $\C$ are \emph{related} if $(\gamma-2)\cdot w(c)\le w(c')$ and the large cycle has a string $s'$ such that $|\ov(s,s')|\ge \alpha\cdot w(c')$ or $|\ov(s',s)|\ge \alpha\cdot w(c')$, where $s$ is the only string corresponding to the small cycle, by the input modification in Section~\ref{sec:modifyingInput}. In this case, and only in this case, we have $(c,c')\in T$.

\begin{lemma}\label{lem:related-count}
  For every large cycle $c'$ of $\C$, at most two different small cycles of $\C$ are related to $c'$.
\end{lemma}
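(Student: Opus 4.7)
The plan is to proceed by contradiction: suppose three distinct small cycles $c_1, c_2, c_3$ of $\C$ are all related to the large cycle $c'$, and derive a common substring that violates Corollary~\ref{cor:lm:maOverlap}. For each $c_i$, let $s_i := R'_{c_i}$ be the unique string of $S'$ representing $c_i$; by the input modification in Section~\ref{sec:modifyingInput}, $s_i$ is a prefix of $s(c_i)^\infty$, and the length $w(c_i)$ of $c_i$ in $\C$ satisfies $w(c_i) \le w(c')/(\gamma - 2)$ by the definition of relatedness. Furthermore, relatedness provides, for each $i$, a string $s_i' \in c'$ and an overlap $o_i$ between $s_i$ and $s_i'$ (as a prefix or suffix of $s_i$, depending on direction) of length $L_i \ge \alpha\cdot w(c')$.

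The key observation is that each $o_i$ is simultaneously a substring of $s(c_i)^\infty$ (since $o_i$ is a substring of $s_i$, which is a prefix of $s(c_i)^\infty$) and of $s(c')^\infty$ (since $o_i$ is a prefix or suffix of $s_i' \in c'$, which by Lemma~\ref{lm:Claim2Blum} is a substring of $s(c')^\infty$). Fix for each $i$ a position $p_i$ at which $o_i$ starts within $s(c')^\infty$; since $s(c')^\infty$ has period $w(c')$, only the residue $p_i \bmod w(c')$ matters. Pigeonhole on three residues in a cyclic group of order $w(c')$ yields a pair, say $p_i$ and $p_j$, at cyclic distance at most $w(c')/3$. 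Choosing appropriate integer representatives with $0 \le p_j - p_i \le w(c')/3$ and aligning $o_i$ at $[p_i, p_i + L_i)$ with $o_j$ at $[p_j, p_j + L_j)$ inside $s(c')^\infty$, the intersection of these two intervals has length at least
\[
  L_i - (p_j - p_i) \ge \alpha\cdot w(c') - w(c')/3 = (\alpha - 1/3)\cdot w(c').
\]
Being a sub-interval of both $o_i$ and $o_j$, the content of this intersection is a common substring of $s(c_i)^\infty$ and $s(c_j)^\infty$.

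I then close the argument using Corollary~\ref{cor:lm:maOverlap}, which forces any such common substring to have length strictly less than $w(c_i) + w(c_j) \le 2 w(c')/(\gamma - 2)$. However, constraint~\eqref{eqn:alpha_gamma_constr_related_cycles} rearranges to the equivalent inequality $(\alpha - 1/3)(\gamma - 2) \ge 2$, so $(\alpha - 1/3)\cdot w(c') \ge 2w(c')/(\gamma - 2) \ge w(c_i) + w(c_j)$, contradicting the Corollary. The main subtlety to manage is that each $o_i$ may have either overlap direction (prefix versus suffix of $s_i$), so one must argue that in both cases $o_i$ still occupies a well-defined interval of $s(c')^\infty$ whose residue modulo $w(c')$ is fixed by the position of $s_i'$ on $c'$; once this is observed, the pigeonhole step and the common-substring conclusion go through uniformly. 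The tight numerical identity $(\alpha - 1/3)(\gamma - 2) = 2$, verifiable from the closed forms of $\alpha$ and $\gamma$, reveals that~\eqref{eqn:alpha_gamma_constr_related_cycles} is precisely the constraint that enables this lemma.
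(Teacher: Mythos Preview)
Your proposal is correct and follows essentially the same approach as the paper. Both arguments place the three long overlaps $o_1,o_2,o_3$ inside $s(c')^\infty$ at positions modulo $w(c')$ and use the constraint~\eqref{eqn:alpha_gamma_constr_related_cycles} together with Corollary~\ref{cor:lm:maOverlap}; the paper shows each of the three cyclic gaps must exceed $\bigl(\alpha-\tfrac{2}{\gamma-2}\bigr)w(c')$ and sums them, while you equivalently observe by pigeonhole that some gap is at most $w(c')/3$, yielding a common substring of length at least $(\alpha-\tfrac13)w(c')$---the two inequalities are algebraic rearrangements of one another.
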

\begin{proof}
  Suppose for a contradiction that there are three small cycles $c_1, c_2$, and $c_3$
  related to cycle $c'$.
  For $j\in \{1, 2, 3\}$, let $s_j$ be the only string of cycle $c_j$
  and let $o_j$ be the overlap from the definition of the relation satisfying $|o_j| \ge \alpha\cdot w(c')$, i.e., either $o_j = \ov(s_j, s'_j)$ or $o_j = \ov(s'_j, s_j)$ for some string $s'_j$ from $c'$.
  Note that since $o_j$ is a suffix or prefix of $s_j$ (depending on whether $o_j = \ov(s_j, s'_j)$ or $o_j = \ov(s'_j, s_j)$), Corollary~\ref{cor:lm:maOverlap} implies
  \begin{equation}\label{eqn:related-count overlap bound}
    |\ov(o_1, o_2)| < w(c_1) + w(c_2) \le \frac{2}{\gamma-2}\cdot w(c')\,,
  \end{equation}
  where the second inequality holds as both $c_1$ and $c_2$ are related to $c'$.
  Using the same argument, both $|\ov(o_2, o_3)|$ and $|\ov(o_3, o_1)|$ are also strictly smaller
  than $\frac{2}{\gamma-2}\cdot w(c')$.

  Each overlap string $o_j$ appears as substring in the semi-infinite string $s(c')^\infty$ for the large cycle $c'$, since each $s'_j$ is a substring of $s(c')^\infty$ by Lemma~\ref{lm:Claim2Blum}.
  For $j\in \{1, 2, 3\}$, let $i_j\in [1,w(c')]$ be the smallest index such that $o_j$ is a prefix of $s(c')^\infty[i_j]$.
  W.l.o.g., suppose that $i_1\le i_2\le i_3$ (by reordering indexes of $c_1, c_2$, and $c_3$).
  Observe that
  $$i_2 - i_1 > \left(\alpha - \frac{2}{\gamma-2}\right) w(c')\,,$$
  since otherwise, $o_1$ and $o_2$ would overlap by at least $\frac{2}{\gamma-2}w(c')$ (using that $o_1$ and $o_2$ have length at least $\alpha\cdot w(c')$), contradicting~\eqref{eqn:related-count overlap bound}.
  Similarly, it holds that $i_3 - i_2 >  \left(\alpha - \frac{2}{\gamma-2}\right) w(c')$ and $i_1+w(c') - i_3 >  \left(\alpha - \frac{2}{\gamma-2}\right) w(c')$;
  for the latter, we use that $o_1$ is also a prefix of $s(c')^\infty[i_1 + w(c')]$ as $|s(c')| = w(c')$ is a periodicity of $s(c')^\infty$.
  Finally, we get a contradiction as follows:
  $$w(c') =  (i_2 - i_1) + (i_3 - i_2) + (i_1+w(c') - i_3) > 3\cdot  \left(\alpha - \frac{2}{\gamma-2}\right)\cdot w(c') \ge w(c')\,,$$
  where the last step uses~\eqref{eqn:alpha_gamma_constr_related_cycles}.
\end{proof}

With this we define $$\Delta_i = \sum_{c \in \mathcal{M}(\C_{i+1})\setminus \mathcal{M}(\C_i)} \Bigg(o(c)- \gamma \cdot w(c) - \frac12\cdot\sum_{c' : (c,c')\in T} (2\cdot w(c')-o(c'))\Bigg) \;.$$

We will show that, for every $i$, we can choose $e \in \C\setminus \C_i$ such that the total overlap increase from $\C_i$ to $\C_{i+1}$ is at least $\Delta_i$ when we obtain $\C_{i+1}$ from $\C_i$ by performing $\swap(\C_i,e)$. Note that the value of $\Delta_i$ does depend on $\C_{i+1}$ and therefore on the edge $e$ that we choose.

Summing over all $i$ gives the desired result since then the total overlap increase is at least
\begin{align*}
  \sum_{i=0}^{\ell-1} \Delta_i & = \sum_{c \in \mathcal{M}(\C_{\ell})\setminus \mathcal{M}(\C_0)} \Bigg(o(c)- \gamma \cdot w(c)- \frac12\cdot\sum_{c' : (c,c')\in T} (2\cdot w(c')-o(c'))\Bigg) \\ &= \sum_{c \in \mathcal{S}(S')} \Bigg(o(c)- \gamma \cdot w(c)- \frac12\cdot\sum_{c' : (c,c')\in T} (2\cdot w(c')-o(c'))\Bigg)
  \\ &\ge \sum_{c \in \mathcal{S}(S')} \Big(o(c)- \gamma \cdot w(c)\Big)- 2\cdot \frac12\cdot \sum_{c' \in \mathcal{L}(S')} \Big(2\cdot w(c')-o(c')\Big)\\ &=
  \sum_{c \in \mathcal{S}(S')} \Big(o(c)- \gamma \cdot w(c)\Big) + \sum_{c' \in \mathcal{L}(S')} \Big(o(c')-2\cdot w(c')\Big)
\end{align*}
and this is what we wanted in (\ref{eqn:overlap-difference}). Here, the first line follows because $\mathcal{M}(\C_{i+1}) \supseteq \mathcal{M}(\C_{i})$ for all $i$ as noted above, the second line follows because $\C_{\ell}=\C$ and $\mathcal{M}(\C_0)=\emptyset$, and the third line follows from Lemma~\ref{lem:related-count}.
Strictly speaking, it is possible that $\mathcal{M}(\C_0)\neq \emptyset$. However, $\C_0$ is a Hamiltonian cycle, and therefore, the only case in which this happens is if this Hamiltonian cycle is in fact a single small cycle $c$, in which case, by Observation~\ref{obs:smallCyclesRotations}, \GREEDY computes an optimal solution.

We will sometimes use the fact that the term $2 w(c')-o(c')$ is non-negative for every large cycle $c'$. Therefore, the part of the definition of $\Delta_i$ that sums over large cycles $c'$ such that $c$ is related to $c'$ can only decrease the value of $\Delta_i$ (and makes it easier to find a suitable edge $e$ in some cases), i.e.,
\begin{equation}\label{eqn:Delta_i_UB}
  \Delta_i \le \sum_{c \in \mathcal{M}(\C_{i+1})\setminus \mathcal{M}(\C_i)} \Bigg(o(c)- \gamma \cdot w(c)\Bigg) \;.
\end{equation}

In Section~\ref{sec:2ndBound-analysis}, we will show that for any \CC $\C_i \neq \C$, it is always possible to find an edge $e\in\C\setminus\C_i$ such that if we obtain $\C_{i+1}$ by performing the $\swap(\C_i,e)$, the total overlap increase is at least $\Delta_i$.
Before that, we present three useful lemmas.

\subsection{Useful Lemmas}

Tarhio and Ukkonen~\cite{tarhio} and Turner~\cite{Turner89} show the following lemma.

\begin{lemma}\label{lem:monge}
  Let $e=(u,v)$, $f=(v',v)$, $f'=(u,u')$, and $e'=(v',u')$ be edges in $G_{\ov}(S')$ such that $\max\{|\ov(e)|,\ov(e')|\}\ge \max\{|\ov(f)|, |\ov(f')|\}$. Then $|\ov(e)|+|\ov(e')|-|\ov(f)|-|\ov(f')|\geq 0$.
\end{lemma}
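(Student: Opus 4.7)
The plan is to prove this Monge-style inequality via a short case analysis that exploits the fact that overlap strings sit inside their endpoints as suffixes or prefixes, so that longer overlaps dominate shorter ones in a containment sense. First I would reduce the hypothesis: by the symmetry of the four-edge configuration under the renaming $u \leftrightarrow v'$, $v \leftrightarrow u'$ (which swaps $e\leftrightarrow e'$ and $f\leftrightarrow f'$), I may assume WLOG that $|\ov(e)| \ge |\ov(e')|$. Combined with the hypothesis of the lemma, this makes $|\ov(u,v)|$ the maximum of all four overlap lengths; in particular $|\ov(u,v)| \ge |\ov(u,u')|$ and $|\ov(u,v)| \ge |\ov(v',v)|$.

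Next I would record two structural observations. Both $\ov(u,v)$ and $\ov(u,u')$ are suffixes of $u$, and the former is at least as long, so $\ov(u,u')$ is a suffix of $\ov(u,v)$; symmetrically, both $\ov(u,v)$ and $\ov(v',v)$ are prefixes of $v$, so $\ov(v',v)$ is a prefix of $\ov(u,v)$. This locates the two overlaps $\ov(f)$ and $\ov(f')$ as substrings inside the single string $\ov(e)$, anchored at its two ends.

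The core of the argument is then a dichotomy. If $|\ov(u,u')| + |\ov(v',v)| \le |\ov(u,v)|$, the suffix and the prefix sit disjointly inside $\ov(u,v)$, and $|\ov(f)| + |\ov(f')| \le |\ov(e)| \le |\ov(e)| + |\ov(e')|$ follows immediately. Otherwise they overlap inside $\ov(u,v)$ in exactly $\ell := |\ov(u,u')| + |\ov(v',v)| - |\ov(u,v)|$ characters; call this overlapping substring $h$. The key observation is that $h$ is simultaneously a suffix of $v'$ (inherited from the $\ov(v',v)$ side) and a prefix of $u'$ (inherited from the $\ov(u,u')$ side), hence a valid candidate overlap from $v'$ to $u'$, giving $\ell \le |\ov(v',u')| = |\ov(e')|$. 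Rearranging this inequality yields $|\ov(f)| + |\ov(f')| \le |\ov(e)| + |\ov(e')|$.

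The main obstacle, albeit a mild one, is the identification in the second case of the arithmetic excess $\ell$ with a genuine $v' \to u'$ overlap; once this is recognized, the conclusion is immediate. The hypothesis of the lemma is used only to justify taking $|\ov(e)|$ as the largest of the four overlap lengths, which is what lets the prefix/suffix containments point in the right direction. Note that no distinctness of the nodes $u, v, u', v'$ is required, so the argument automatically covers the self-loop situations admitted by the statement.
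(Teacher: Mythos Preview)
Your proof is correct and is precisely the classical Monge-condition argument for string overlaps. The paper itself does not prove this lemma; it merely cites Tarhio and Ukkonen~\cite{tarhio} and Turner~\cite{Turner89}, and what you have written is essentially the argument found in those references.
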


The following is a slightly different, but somewhat related inequality which gives us better bounds when $e$ is the (only) edge of a small cycle in $\C$. Another difference to Lemma~\ref{lem:monge} is that the following lemma can also be applied if $\max\{|\ov(e)|,|\ov(e')|\} < \max\{|\ov(f)|, |\ov(f')|\}$.

\begin{lemma}\label{lem:small-monge}
  Let $e=(u,v)$, $f=(v',v)$, $f'=(u,u')$, and $e'=(v',u')$ be edges in $G_{\ov}(S')$ such that $e$ is an edge in a small cycle $c$ in $\C$. Then
  \[
    |\ov(e)|+|\ov(e')|- |\ov(f)|-|\ov(f')|> |\ov(e)|-\max\{|\ov(f)|,|\ov(f')|\}-w(c) \;.
  \]
\end{lemma}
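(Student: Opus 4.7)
The plan is to first peel off the cycle structure of $e$. Since $c$ is a small cycle in $\C = \C(S')$ and the input modification in Section~\ref{sec:modifyingInput} turns every small cycle into a self-loop at a single representative string $R'_c$, the edge $e$ must be the self-loop $(s, s)$ with $s := R'_c$. Consequently $u = v = s$, $f = (v', s)$, $f' = (s, u')$, and $e' = (v', u')$. Writing $a = |\ov(f)|$ and $b = |\ov(f')|$, and using $|\ov(f)| + |\ov(f')| - \max\{|\ov(f)|,|\ov(f')|\} = \min\{|\ov(f)|,|\ov(f')|\}$, the inequality to prove reduces to $|\ov(e')| > \min(a, b) - w(c)$. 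If $\min(a, b) < w(c)$ the right-hand side is negative and the claim is trivial, so I assume $\min(a, b) \ge w(c)$.

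The engine of the proof is the periodicity of $s$. In $\C$, the self-loop at $s$ has length $w(c) = \dist(s, s) = \period(s)$, so $s$ is a prefix of $p^\infty$ for some string $p = \pref(s, s)$ of length $w(c)$, and in particular $s[i] = s[i + w(c)]$ whenever both positions lie in $[1, |s|]$. By the definitions of $a$ and $b$, the last $a$ characters of $v'$ coincide with $s[1, a]$ and the first $b$ characters of $u'$ coincide with $s[|s|-b+1, |s|]$. Hence, for any integer $1 \le k \le \min(a, b)$, the length-$k$ suffix of $v'$ equals $s[a-k+1, a]$ and the length-$k$ prefix of $u'$ equals $s[|s|-b+1, |s|-b+k]$. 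By the period-$w(c)$ structure of $s$, these two substrings of $s$ (which both fit inside $s$ thanks to $k \le \min(a,b)$) coincide whenever their starting positions are congruent modulo $w(c)$, i.e.\ when $k \equiv a + b - |s| \pmod{w(c)}$.

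The proof concludes with a pigeonhole step: the interval $(\min(a, b) - w(c), \min(a, b)]$ contains exactly $w(c)$ consecutive integers, so precisely one of them satisfies the above congruence; call this integer $k^*$. Since $\min(a, b) \ge w(c)$, I get $k^* \ge \min(a,b) - w(c) + 1 \ge 1$, so $k^*$ is a legitimate overlap length, yielding $|\ov(e')| \ge k^* > \min(a, b) - w(c)$, which is the desired bound.

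I do not anticipate a serious obstacle. The only delicate point is the identity $\period(s) = w(c)$, but this is immediate from the fact that the self-loop at $s = R'_c$ in $\C(S')$ has length $w(c)$ and the length of any edge in the distance graph is by definition the corresponding prefix distance, so $w(c) = \dist(s, s) = \period(s)$; this structural property of the modified input is exactly what is set up in Section~\ref{sec:modifyingInput}.
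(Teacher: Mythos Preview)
Your proof is correct and follows essentially the same approach as the paper: reduce to showing $|\ov(e')| > \min(a,b)-w(c)$, dispose of the case $\min(a,b)<w(c)$ trivially, and then exploit the $w(c)$-periodicity of the single string $s=u=v$ to exhibit a suffix of $\ov(f)$ (hence of $v'$) that equals a prefix of $\ov(f')$ (hence of $u'$) of length strictly greater than $\min(a,b)-w(c)$. The paper phrases this by shifting the copy of $\ov(f)$ inside $u$ by the largest admissible multiple of $w(c)$, whereas you phrase it as choosing $k^*\in(\min(a,b)-w(c),\min(a,b)]$ with $k^*\equiv a+b-|s|\pmod{w(c)}$; unwinding either description yields the same overlap length.
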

\begin{proof}
  If $\min\{|\ov(f)|,|\ov(f')|\} < w(c)$, then trivially $|\ov(e)|+|\ov(e')|- |\ov(f)|-|\ov(f')|\ge |\ov(e)|- |\ov(f)|-|\ov(f')|>|\ov(e)|-\max\{|\ov(f)|,|\ov(f')|\}-w(c)$ and we are done.
  So now assume $\min\{|\ov(f)|,|\ov(f')|\} \ge w(c)$.

  First note that since $e$ is an edge of a small cycle in $\C$, $e$ is a self-loop in $G_{\ov}(S')$ and $u=v$.
  Since $\ov(f)$ is a prefix of $u=v$, we observe that $\ov(f) = u[1,|\ov(f)|]$. Because $u$ has period $w(c)$ by Lemma~\ref{lm:equivSmall}, this also implies
  $\ov(f) = u[1+k\cdot w(c),|\ov(f)|+k\cdot w(c)]$, where we choose $k\ge0$ as the largest integer for which $k\cdot w(c) \le |u|-\max\{|\ov(f)|,|\ov(f')|\}$.  For this choice of $k$, we have $k\cdot w(c) > |u|-\max\{|\ov(f)|,|\ov(f')|\} - w(c)$.

  Furthermore, $\ov(f') = u[|u|-|\ov(f')|+1,|u|]$ because $\ov(f')$ is a suffix of $u$. Hence, the string $u[|u|-|\ov(f')| +1, |\ov(f)|+k\cdot w(c)]$ is a suffix of $\ov(f)$ as well as a prefix of $\ov(f')$. This string has length $|\ov(f)|+k\cdot w(c) - (|u|-|\ov(f')|)
    > |\ov(f)| + |u|-\max\{|\ov(f)|,|\ov(f')|\} - w(c) - (|u|-|\ov(f')|) = \min\{|\ov(f)|,|\ov(f')|\} - w(c)$, which is non-negative by the assumption above.

  Every suffix of $\ov(f)$ is also a suffix of $v'$ and every prefix of $\ov(f')$ is also a prefix of $u'$. Hence, $v'$ has a suffix of length larger than $\min\{|\ov(f)|,|\ov(f')|\} - w(c)$ which is identical to a prefix of $u'$. Therefore, $|\ov(e')| >  \min\{|\ov(f)|,|\ov(f')|\} - w(c)$, which implies the lemma.
\end{proof}

Under a certain condition, we can further strengthen the inequality of the previous lemma.

\begin{lemma}\label{lm:lastR}
  Consider the edges $e=(u,v)$, $f=(v',v)$, $f'=(u,u')$, and $e'=(v',u')$. Suppose $e$ is an edge in a (large or small) cycle $c$ of $\C$, $e'$ is an edge in a (large or small) cycle $c'$ of $\C$, and $|\ov(e')| \ge w(c)+w(c')$. Then
  \[
    |\ov(e)|+|\ov(e')|- |\ov(f)|-|\ov(f')|> |\ov(e)|-w(c) \;.
  \]
\end{lemma}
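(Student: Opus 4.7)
My plan is to first handle the easy case $\Delta := |\ov(f)|+|\ov(f')|-|\ov(e')| \le 0$, where the target inequality reduces to $|\ov(e)|+|\ov(e')|-|\ov(f)|-|\ov(f')| \ge |\ov(e)| > |\ov(e)|-w(c)$ using $w(c)\ge 1$. So I can assume $\Delta>0$, in which case it suffices to prove $\Delta < w(c)$. Throughout I take $c\neq c'$, which is what makes the planned appeal to Corollary~\ref{cor:lm:maOverlap} possible.

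Under $c\neq c'$, Lemma~\ref{lm:maOverlap} yields $|\ov(f)|,|\ov(f')| < w(c)+w(c') \le |\ov(e')|$, so $\ov(f)$ sits as a strict suffix of $\ov(e')$ at positions $[|\ov(e')|-|\ov(f)|+1,|\ov(e')|]$ and $\ov(f')$ as a strict prefix at positions $[1,|\ov(f')|]$. These intervals overlap inside $\ov(e')$ in a substring $h$ of length exactly $\Delta$, which is simultaneously the length-$\Delta$ prefix of $\ov(f)\subseteq v$ and the length-$\Delta$ suffix of $\ov(f')\subseteq u$. In particular $\Delta \le |\ov(u,v)|=|\ov(e)|$ because $h$ witnesses an overlap of $u$ and $v$.

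To prove $\Delta < w(c)$, suppose for contradiction that $\Delta \ge w(c)$. Place $u$ and $v$ inside $s(c)^\infty$ at positions $p_u$ and $p_v$, noting $p_v \equiv p_u+\dist(u,v) \pmod{w(c)}$ from the edge $e\in c$. The two occurrences of $h$ in $s(c)^\infty$ --- at $p_v$ as the prefix of $v$, and at $p_u+|u|-\Delta$ as the suffix of $u$ --- each span at least $w(c)$ characters, so the primitivity of $s(c)$ (which follows from $\period(R_c)=w(c)$ via Lemma~\ref{lm:lmMucha}) forces their starting positions to coincide modulo $w(c)$, yielding
\[
p_v + \Delta \equiv p_u + |u| \pmod{w(c)}.
\]
Next I verify the string identity $\ov(e') = u[|u|-|\ov(f')|+1,|u|] \cdot v[\Delta+1,|\ov(f)|]$: its prefix of length $|\ov(f')|$ is $\ov(f')=u[|u|-|\ov(f')|+1,|u|]$, while its suffix of length $|\ov(f)|-\Delta$ is $\ov(f)[\Delta+1,|\ov(f)|]=v[\Delta+1,|\ov(f)|]$. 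Combining this identity with the displayed congruence and the periodicity of $s(c)^\infty$ collapses the right-hand side into a single contiguous run $s(c)^\infty[p_u+|u|-|\ov(f')|,\,p_u+|u|+|\ov(f)|-\Delta-1]$, a substring of $s(c)^\infty$ of length $|\ov(e')|\ge w(c)+w(c')$. Since $\ov(e')$ is also a substring of $s(c')^\infty$ (as a prefix of $u'\in c'$), Corollary~\ref{cor:lm:maOverlap} applied with $c\neq c'$ delivers the required contradiction.

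The main technical obstacle is verifying the string identity for $\ov(e')$ and then recognizing that, thanks to the congruence $p_v+\Delta\equiv p_u+|u|\pmod{w(c)}$, its two factors glue into one contiguous stretch of $s(c)^\infty$; the hypothesis $|\ov(e')|\ge w(c)+w(c')$ then does its work in the final appeal to Corollary~\ref{cor:lm:maOverlap}.
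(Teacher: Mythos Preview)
Your proof is correct and follows the same overall strategy as the paper: both reduce to showing $|\ov(e')|>|\ov(f)|+|\ov(f')|-w(c)$, assume the contrary, establish that $\ov(e')$ is a substring of $s(c)^\infty$, and then contradict Corollary~\ref{cor:lm:maOverlap} using that $\ov(e')$ is also a substring of $s(c')^\infty$. The assumption $c\neq c'$ that you make explicit is also present (implicitly) in the paper's proof, which likewise invokes Lemma~\ref{lm:maOverlap} on $f,f'$ and Corollary~\ref{cor:lm:maOverlap} at the end.

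The only genuine difference is in how the key step ``$\ov(e')$ is a substring of $s(c)^\infty$'' is carried out. The paper observes that $\ov(e')[1,|\ov(f')|]$ is a suffix of $u$ and $\ov(e')[|\ov(e')|-|\ov(f)|+1,|\ov(e')|]$ is a prefix of $v$, and then does a direct character-by-character argument (using only that $u$ and $v$ have periodicity $w(c)$) to conclude that $\ov(e')$ itself has periodicity $w(c)$; since its first $w(c)$ characters lie in $u\subseteq s(c)^\infty$, the whole string embeds. You instead write $\ov(e')$ as a concatenation of a suffix of $u$ and a substring of $v$, and use primitivity of $s(c)$ (derived from Lemma~\ref{lm:lmMucha}) to force the two occurrences of the overlap string $h$ to align modulo $w(c)$, which lets you glue the two factors into a single contiguous run of $s(c)^\infty$. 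The paper's route is a bit more elementary (no primitivity needed), while yours makes the embedding into $s(c)^\infty$ more explicit; both are short and valid. Your side observation $\Delta\le|\ov(e)|$ is correct but unused, and the remark $p_v\equiv p_u+\dist(u,v)$ is likewise not needed for the argument.
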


\begin{proof}
  We show that $|\ov(e')|>|\ov(f)|+|\ov(f')|-w(c)$, which trivially implies the lemma.

  If $\min\{|\ov(f)|,|\ov(f')|\} \le w(c)$, this inequality holds because by using Lemma~\ref{lm:maOverlap}, we get $|\ov(e')|\ge w(c)+w(c') > \max\{|\ov(f)|,|\ov(f')|\} \ge \max\{|\ov(f)|,|\ov(f')|\} + \min\{|\ov(f)|,|\ov(f')|\}-w(c)=|\ov(f)|+|\ov(f')|-w(c)$. Hence, for the remainder of the proof, we assume that we have $\min\{|\ov(f)|,|\ov(f')|\} > w(c)$.

  Now, assume for a contradiction that $|\ov(e')|\le|\ov(f)|+|\ov(f')|-w(c)$. We claim that in this case $\ov(e')$ has a periodicity of $w(c)$, i.e., $\ov(e')$ is a prefix of $x^\infty$ for some string $x$ with $|x| = w(c)$. To show this, recall that $|\ov(e')| \ge w(c)+w(c') > \max\{|\ov(f')|, |\ov(f)|\}$ by Lemma~\ref{lm:maOverlap}.
  Since $\ov(f')$ is a prefix of $u'$ and a suffix of $u$ and since $\ov(e')$ is a prefix of $u'$,
  the first $|\ov(f')|$ characters of $\ov(e')$ are also a suffix of $u$, i.e., $$\ov(e')[1,|\ov(f')|] = \ov(f') = u[|u|-|\ov(f')|+1,|u|]\,.$$
  Similarly, since $\ov(f)$ is a prefix of $v$ and a suffix of $v'$ and since $\ov(e')$ is a suffix of $v'$, we get that $$\ov(e')[|\ov(e')|-|\ov(f)|+1,|\ov(e')|] = \ov(f) = v[1, |\ov(f)|]\,.$$
  Observe that for all $1\le i \le |\ov(e')|-w(c)$, a character at position $i$ of $\ov(e')$ must be the same as the character at position $i+w(c)$ of $\ov(e')$. Indeed, if $i+w(c)\le |\ov(f')|$, this is true as $u$ has a periodicity of $w(c)$. If $i>|\ov(e')|-|\ov(f)|$, it is true because $v$ has periodicity $w(c)$. One of these two cases must apply because otherwise, $i+w(c)> |\ov(f')|$ and $i\le|\ov(e')|-|\ov(f)|$, which implies $|\ov(f')|-w(c)<i\le |\ov(e')|-|\ov(f)|$, contradicting our assumption that $|\ov(f')|+|\ov(f)| \ge |\ov(e')| + w(c)$. Hence, $\ov(e')$ has a periodicity of $w(c)$ (in particular, $\period(\ov(e'))\le w(c)$).

  Next, we show that $\ov(e')$ is a substring of the semi-infinite string $s(c)^{\infty}$. Because $\ov(e')$ has a periodicity of $w(c)$ and $s(c)^{\infty}$ has period $w(c)$, it is sufficient to argue that the first $w(c)$ characters of $\ov(e')$ are a substring of $s(c)^{\infty}$. This is indeed the case since $\ov(e')[1,|\ov(f')|]$ is a substring of $u$ which is a substring of $s(c)^{\infty}$ and we assumed that $|\ov(f')|> w(c)$.

  Since $\ov(e')$ is a substring of $s(c)^{\infty}$ as well as of $s(c')^{\infty}$ (because $e'$ lies on cycle $c'$), Corollary~\ref{cor:lm:maOverlap} implies $|\ov(e')|<w(c)+w(c')$ which contradicts the assumption of the lemma.
\end{proof}

\subsection{Analysis}\label{sec:2ndBound-analysis}

In this section, we will show that for any \CC $\C_i \neq \C$, it is always possible to find an edge $e\in\C\setminus\C_i$ such that if we obtain $\C_{i+1}$ by performing the $\swap(\C_i,e)$, the total overlap increase is at least $$\Delta_i = \sum_{c \in \mathcal{M}(\C_{i+1})\setminus \mathcal{M}(\C_i)} \Bigg(o(c)- \gamma \cdot w(c) - \frac12\cdot\sum_{c' : (c,c')\in T} (2\cdot w(c')-o(c'))\Bigg)\,.$$

The following defines the concept of a \emph{good edge}. It is a slightly technical definition, but it is useful in the sense that (a) we will be able to show that a good edge $e$ is always a suitable choice for $\swap(\C_i,e)$ and (b) in many cases we can find a good edge. For the remaining cases (i.e., when it is not obvious whether a good edge exists), we will have separate arguments that show that an appropriate swap is possible.

\begin{definition}\label{def:goodedge}
  We call an edge $e=(u,v) \in \C \setminus \C_i$ a \emph{good} edge if the following statements hold for the $\swap(\C_i,e)$ which swaps out edges $f=(v',v) \in \C_i \setminus \C$ and $f'=(u,u') \in \C_i \setminus \C$ and swaps in edges $e=(u,v)$ and $e'=(v',u')$:
  \begin{itemize}
    \item $e$ belongs to a small cycle $c$ of $\C$ and $e'$ does not belong to a small cycle of $\C$.
    \item If $|\ov(f)| \ge |\ov(f')|$, then for the cycle $c'$ in $\C$ that contains the string $v'$, it holds that either $|\ov(f)| \ge o(c')$ or $c'$ is a small cycle with $w(c') \le w(c)$.
    \item If $|\ov(f')| > |\ov(f)|$,   then for the cycle $c'$ in $\C$ that contains the string $u'$, it holds that either $|\ov(f')| \ge o(c')$ or $c'$ is a small cycle with $w(c') \le w(c)$.
  \end{itemize}
\end{definition}

\begin{figure}
  \centering
  \begin{subfigure}[b]{0.475\textwidth}
    \centering
    \begin{tikzpicture}[-{Latex[length=2mm]}, vertex/.style={draw,circle,fill=black,inner sep=0pt,minimum size=3pt}, decoration={
          markings,
          mark=between positions 0.125 and 0.875 step 0.125 with {\node[inner sep=0pt, minimum size=2pt](y\pgfkeysvalueof{/pgf/decoration/mark info/sequence number}) {};},
          mark=between positions 0.125 and 0.875 step 0.125 with {\fill (0pt,0pt) circle (2pt);},
        }]
      \path[use as bounding box] (-3.3, 1.4) rectangle (4.3, -4);
      \node[vertex] (u) at (0,0) [label={[label distance=5]right:$u=v$}] {};
      \node[vertex] (u') at (-2,-2) [label={[label distance=-4]below left:$u'$}] {};
      \node[vertex] (v') at (2,-2) [label={[label distance=-4]below right:$v'$}] {};
      \draw (u) to node[left, outer sep=5, pos=0.3] {$f'\in\C_i\setminus \C$} (u');
      \draw (v') to node[right, outer sep=5, pos=0.7] {$f\in\C_i\setminus \C$}  (u);
      \draw (v') [dotted] to node[above] {$e'$} (u');
      \path (u) edge [in=20,out=160,loop,looseness=70] node[above] {$e\in\C\setminus \C_i$} node[below, outer sep=4pt] {$c$} (u);
      \path (v') edge [-,decorate, out=270, in=10,loop,looseness=130] node[left, outer sep=20pt, pos=0.6] {$c'$} (v');
      \draw (v') edge (y1) ++(y1) edge (y2) ++(y2)edge (y3) ++(y3)edge (y4) ++(y4)edge (y5) ++(y5)edge (y6) ++(y6) edge (y7) ++(y7)  edge(v');
    \end{tikzpicture}
  \end{subfigure}
  \hfill
  \begin{subfigure}[b]{0.475\textwidth}
    \centering
    \begin{tikzpicture}[-{Latex[length=2mm]}, vertex/.style={draw,circle,fill=black,inner sep=0pt,minimum size=3pt}, decoration={
          markings,
          mark=between positions 0.125 and 0.875 step 0.125 with {\node[inner sep=0pt, minimum size=2pt](y\pgfkeysvalueof{/pgf/decoration/mark info/sequence number}) {};},
          mark=between positions 0.125 and 0.875 step 0.125 with {\fill (0pt,0pt) circle (2pt);},
        }]
      \path[use as bounding box] (-3.3, 1.4) rectangle (4.3, -4);
      \node[vertex] (u) at (0,0) [label={[label distance=5]right:$u=v$}] {};
      \node[vertex] (u') at (-2,-2) [label={[label distance=-4]below left:$u'$}] {};
      \node[vertex] (v') at (2,-2) [label={[label distance=-4]below right:$v'$}] {};
      \draw (u) to node[left, outer sep=5, pos=0.3] {$f'\in\C_i\setminus \C$} (u');
      \draw (v') to node[right, outer sep=5, pos=0.7] {$f\in\C_i\setminus \C$}  (u);
      \draw (v') [dotted] to node[above] {$e'$} (u');
      \path (u) edge [in=20,out=160,loop,looseness=70] node[above] {$e\in\C\setminus \C_i$} node[below, outer sep=4pt] {$c$} (u);
      \path (v') edge [out=270, in=10,loop,looseness=130] node[left, outer sep=20pt, pos=0.6] {$c'$} (v');
    \end{tikzpicture}
  \end{subfigure}
  \caption{Illustration of a good edge $e$ for different cases. The edge $e'$ is not allowed to be in a small cycle of $\C$. It can either not be contained in $\C$ at all or it can be part of a large cycle of $\C$. For these illustrations, we also assume that $|\ov(f)|\ge |\ov(f')|$. On the left, $c'$ is a large cycle and $|\ov(f)|$ is at least $o(c')$ (it is possible that $e'$ is part of the large cycle $c'$). On the right, $c'$ is a small cycle and $w(c)\ge w(c')$.}
  \label{fig:good-edge}
\end{figure}
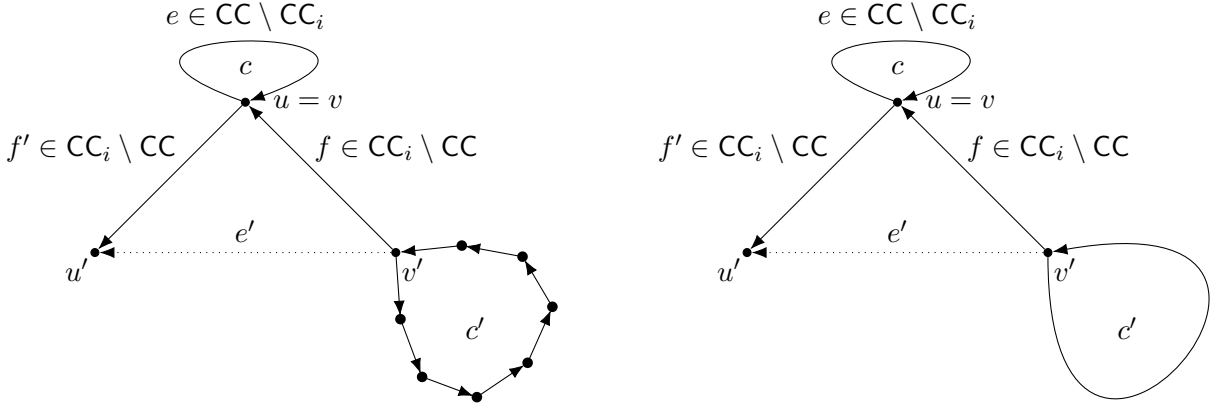

The following lemma shows that if there is a good edge $e$, performing $\swap(\C_i,e)$ results in a sufficient increase of the total overlap.
\begin{lemma}\label{lem:goodedge}
  If $e$ is a good edge, then after performing $\swap(\C_i,e)$, the resulting \CC $\C_{i+1}$ has by at least $\Delta_i$ larger total overlap than $\C_i$.
\end{lemma}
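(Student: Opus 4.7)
The plan is to first identify that swapping in $e$ and $e'$ while swapping out $f$ and $f'$ yields $\mathcal{M}(\C_{i+1})\setminus\mathcal{M}(\C_i)=\{c\}$: since $e$ lies on the small cycle $c$ of $\C$ (and is a self-loop at $u$, as $c$ has a single string after the input modification), adding $e$ creates the self-loop at $u$ corresponding to $c$; since $e'$ belongs to no small cycle of $\C$ by the definition of a good edge, no other new small cycle appears. Hence
\[
\Delta_i = o(c)-\gamma\cdot w(c) - \tfrac12\sum_{c''\,:\,(c,c'')\in T}\bigl(2w(c'')-o(c'')\bigr)\,.
\]
Assume WLOG $|\ov(f)|\ge|\ov(f')|$ and let $c'$ be the cycle of $\C$ containing $v'$. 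Note $c'\ne c$ (else $v'=u$ and $f=e\in\C$, contradicting $f\in\C_i\setminus\C$), so Lemma~\ref{lm:maOverlap} gives $|\ov(f)|<w(c)+w(c')$. The main analytic tool is Lemma~\ref{lem:small-monge}, which yields
\[
|\ov(e)|+|\ov(e')|-|\ov(f)|-|\ov(f')| \;>\; |\ov(e)|-|\ov(f)|-w(c) \;\ge\; o(c)-|\ov(f)|-w(c)\,.
\]

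Next, I would case-split on $c'$ according to the bullets of the good edge definition. In two ``easy'' cases---either $c'$ is small with $w(c')\le w(c)$ (Case~1), or $|\ov(f)|\ge o(c')$ with $c'$ large but \emph{not} related to $c$, i.e., $w(c')<(\gamma-2)w(c)$ (Case~2)---I would combine $|\ov(f)|<w(c)+w(c')$ with the case hypothesis to bound $|\ov(f)|<(\gamma-1)w(c)$, yielding overlap increase $>o(c)-\gamma\cdot w(c)\ge\Delta_i$ (using $\gamma\ge 3$ in Case~1). The subcase where $c'$ is small with $w(c')>w(c)$ and $|\ov(f)|\ge o(c')$ is vacuous: $o(c')>2w(c')$ together with $|\ov(f)|<w(c)+w(c')<2w(c')$ would contradict $|\ov(f)|\ge o(c')$.

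The crux is Case~3, where $c'$ is large, $|\ov(f)|\ge o(c')$, and $w(c')\ge(\gamma-2)w(c)$. Here I would first verify $(c,c')\in T$: the weight condition holds by assumption, and the overlap condition holds since $|\ov(v',u)|=|\ov(f)|\ge o(c')>\alpha\cdot w(c')$ as $c'$ is large. Then I would combine $|\ov(f)|>\alpha\cdot w(c')$ with $|\ov(f)|<w(c)+w(c')$ to isolate $w(c')$, obtaining $|\ov(f)|<\alpha\cdot w(c)/(\alpha-1)$. The constraint~\eqref{eqn:alpha_gamma_constr_good_edge}, rewritten as $\alpha/(\alpha-1)\le 2(\gamma-2)$, then strengthens this to $|\ov(f)|<2(\gamma-2)w(c)$. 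Combined with $o(c')\le|\ov(f)|$ and $w(c')>|\ov(f)|-w(c)$, a short calculation gives $|\ov(f)|<(\gamma-1)w(c)+w(c')-o(c')/2$, so the overlap increase exceeds $o(c)-\gamma\cdot w(c)-\tfrac12(2w(c')-o(c'))\ge\Delta_i$, where the last inequality holds because any further related large cycles only make $\Delta_i$ smaller. The main obstacle is precisely this Case~3: deriving the tight bound $|\ov(f)|<2(\gamma-2)w(c)$ from the opposing inequalities $|\ov(f)|>\alpha\cdot w(c')$ and $|\ov(f)|<w(c)+w(c')$ requires exactly the constraint~\eqref{eqn:alpha_gamma_constr_good_edge}, which is why this specific relation between $\alpha$ and $\gamma$ appears in the parameter system.
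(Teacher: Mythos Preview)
Your proof is correct and follows essentially the same approach as the paper: both identify $\mathcal{M}(\C_{i+1})\setminus\mathcal{M}(\C_i)=\{c\}$, invoke Lemma~\ref{lem:small-monge}, case-split on the nature of $c'$, and rely on constraint~\eqref{eqn:alpha_gamma_constr_good_edge} in the related-large-cycle case. The only organizational difference is that the paper immediately applies Lemma~\ref{lm:maOverlap} to reduce to bounding $w(c')$ (via Fact~\ref{fact:largecyclebound}), whereas you keep $|\ov(f)|$ as the working quantity and bound it directly; the underlying inequalities and the chain $\alpha w(c')<|\ov(f)|<w(c)+w(c')\Rightarrow w(c')<w(c)/(\alpha-1)$ are identical.
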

\begin{proof}
  By definition of a good edge, $e$ is the edge of a small cycle $c$.
  Due to Lemma~\ref{lem:small-monge}, $|\ov(e)|+|\ov(e')|-|\ov(f)|-|\ov(f')|> |\ov(e)|-\max\{|\ov(f)|,|\ov(f')|\} - w(c)$.

  Suppose $|\ov(f)| \ge |\ov(f')|$ (the other case is analogous) and let $c'$ be the cycle containing the string $v'$.
  Then $|\ov(e)|-\max\{|\ov(f)|,|\ov(f')|\} - w(c) = |\ov(e)|-|\ov(f)| - w(c) > |\ov(e)|-w(c)-w(c') - w(c) = |\ov(e)|-2w(c)-w(c')$, where the inequality follows from Lemma~\ref{lm:maOverlap}.
  Hence, it is sufficient to show that $|\ov(e)|-2w(c)-w(c') \ge \Delta_i$.

  Since $e$ is the edge of a small cycle in $\C$ and $e'$ is not an edge of a small cycle in $\C$ (by the definition of a good edge), if we obtain $\C_{i+1}$ from $\C_i$ by performing $\swap(\C_i,e)$, then $\mathcal{M}(\C_{i+1})\setminus \mathcal{M}(\C_i) = \{c\}$. In this case,
  \begin{align}
    \Delta_i & = o(c)- \gamma \cdot w(c) - \frac12\cdot\sum_{c'' : (c,c'')\in T} (2\cdot w(c'')-o(c'')) \nonumber \\ &= |\ov(e)|- \gamma \cdot w(c) -
    \frac12\cdot\sum_{c'' : (c,c'')\in T} (2\cdot w(c'')-o(c''))                                     \nonumber    \\ &\le
    \begin{cases}
      |\ov(e)|- \gamma \cdot w(c) - \frac12\cdot (2\cdot w(c')-o(c')) & \text{if } (c,c')\in T \\
      |\ov(e)|- \gamma \cdot w(c)                                     & \text{otherwise}
    \end{cases}                                                                         \nonumber  \\
             & \le
    \begin{cases}
      |\ov(e)|- \gamma \cdot w(c) - \frac12\cdot (w(c')-w(c)) & \text{if } (c,c')\in T \\
      |\ov(e)|- \gamma \cdot w(c)                             & \text{otherwise}
    \end{cases}\;. \label{eqn:goodedge_Delta_i_UB}
  \end{align}
  The last step follows since if $(c,c')\in T$, then $c'$ is a large cycle and therefore, $o(c')\le |\ov(f)| < w(c)+w(c')$, where the first inequality follows from the definition of a good edge and the last inequality follows from Lemma~\ref{lm:maOverlap}.

  The following fact establishes an upper bound on $w(c')$ by a function of $w(c)$.
  \begin{fact}\label{fact:largecyclebound}
    \hfill
    \begin{itemize}
      \item If $c'$ is a large cycle and $(c,c')\in T$, then $w(c') < \frac{1}{\alpha-1} w(c)$.
      \item Otherwise, $w(c') < (\gamma-2)\cdot w(c)$ holds.
    \end{itemize}
  \end{fact}
  \begin{proof}
    If $c'$ is a large cycle, then $w(c)+w(c') > |\ov(f)| \ge o(c') > \alpha \cdot w(c')$, where the first step follows from Lemma~\ref{lm:maOverlap}, the second step follows from the definition of a good edge, and the last step follows from the definition of a large cycle. Rearranging this inequality gives $w(c') < \frac{1}{\alpha-1} w(c)$.

    Now, to show the second claim, there are two cases. If $c'$ is a large cycle, but $(c,c')\notin T$, then we again recall that $|\ov(f)| \ge \alpha\cdot w(c')$. Since $(c,c')\notin T$, this implies that $w(c') < (\gamma-2)\cdot w(c)$ as claimed.
    On the other hand, if $c'$ is a small cycle,
    then, due to the definition of a good edge, either $w(c')\le w(c)$ or $|\ov(f)| \ge o(c')$. In the former case, we are already done as $\gamma > 3$. In the latter case, $|\ov(f)| \ge o(c')> 2 w(c')$ and hence $w(c) > |\ov(f)|-w(c') > w(c')$, where the first inequality follows from Lemma~\ref{lm:maOverlap}. Again, this implies the second claim as $\gamma > 3$.
  \end{proof}

  Finally, to show that $|\ov(e)| - 2w(c) - w(c')\ge \Delta_i$, we distinguish two cases and utilize the upper bound on $\Delta_i$ derived in~\eqref{eqn:goodedge_Delta_i_UB}.
  \begin{itemize}
    \item If $c'$ is large cycle and $(c,c')\in T$, then using the first claim in Fact~\ref{fact:largecyclebound},
          \begin{align*}
            |\ov(e)| - 2w(c) - w(c')
             & =   |\ov(e)| - 2w(c) - \frac12 w(c') - \frac12 w(c')                                               \\
             & > |\ov(e)| - 2w(c) - \frac12 w(c') - \frac{1}{2(\alpha-1)}  w(c)                                   \\
             & =   |\ov(e)| - \left(2 +\frac{1}{2(\alpha-1)}\right)\cdot  w(c) - \frac12 w(c')                    \\
             & =   |\ov(e)| - \left(\frac52 +\frac{1}{2(\alpha-1)}\right)\cdot w(c) - \frac12 w(c') +\frac12 w(c) \\
             & \ge |\ov(e)| - \gamma\cdot w(c) - \frac12 w(c') +\frac12 w(c)  \ge \Delta_i\,,
          \end{align*}
          where the last line uses~\eqref{eqn:alpha_gamma_constr_good_edge}.
    \item Otherwise, $|\ov(e)| - 2w(c) - w(c') \ge |\ov(e)| - \gamma\cdot w(c) \ge \Delta_i$.\qedhere
  \end{itemize}
\end{proof}

There may be cases where $\C\setminus \C_i$ does not necessarily have a good edge. In such cases, we can use other arguments. The following lemma is an example of this.

\begin{lemma}\label{lem:longcycles}
  If there exists an edge $e\in \C\setminus \C_i$ such that (i) $\swap(\C_i,e)$ swaps in edges $e$ and $e'$, (ii) neither $e$ nor $e'$ are edges of a small cycle in $\C$, and (iii) $\max\{|\ov(e)|,|\ov(e')|\}\ge \max\{|\ov(f)|,|\ov(f')|\}$, then after performing $\swap(\C_i,e)$, the resulting \CC $\C_{i+1}$ has by at least $\Delta_i$ larger total overlap than $\C_i$.
\end{lemma}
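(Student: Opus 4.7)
The plan is to argue that under the hypotheses $\Delta_i=0$, and then to invoke Lemma~\ref{lem:monge} directly to conclude. First I would verify that $\mathcal{M}(\C_{i+1})\setminus \mathcal{M}(\C_i)=\emptyset$. Recall that after the input modification of Section~\ref{sec:modifyingInput}, each small cycle $c$ of $\C$ collapses to a single node carrying a self-loop, and $c\in \mathcal{M}(\C_j)$ holds exactly when that self-loop lies in $\C_j$. Since the text already establishes that $\mathcal{M}(\C_{i+1})\supseteq \mathcal{M}(\C_i)$, the only way a small cycle of $\C$ could newly enter $\mathcal{M}$ after the swap is if one of the freshly added edges $e$ or $e'$ were a self-loop at a small cycle of $\C$. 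Hypothesis~(ii) rules this out, so $\mathcal{M}(\C_{i+1})=\mathcal{M}(\C_i)$, and the sum defining $\Delta_i$ is over the empty set, giving $\Delta_i=0$.

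Next I would express the change in total overlap produced by $\swap(\C_i,e)$ as $|\ov(e)|+|\ov(e')|-|\ov(f)|-|\ov(f')|$, which is exactly the quantity estimated in Lemma~\ref{lem:monge}. Hypothesis~(iii) is precisely the condition $\max\{|\ov(e)|,|\ov(e')|\}\ge \max\{|\ov(f)|,|\ov(f')|\}$ required by that lemma, so its conclusion yields
\[
|\ov(e)|+|\ov(e')|-|\ov(f)|-|\ov(f')|\ge 0 = \Delta_i,
\]
which is exactly the desired bound on the total overlap increase of $\C_{i+1}$ over $\C_i$.

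There is no real obstacle: the lemma is essentially a bookkeeping step that packages hypothesis~(iii) into Lemma~\ref{lem:monge}, together with the easy observation that a swap whose new edges avoid small-cycle self-loops cannot introduce any new element into $\mathcal{M}$. The only subtlety worth stating explicitly in a full write-up is the equivalence between ``$c$ is a small cycle of $\C$ appearing in $\C_j$'' and ``the corresponding self-loop is an edge of $\C_j$'', which follows directly from the reduction in Section~\ref{sec:modifyingInput}.
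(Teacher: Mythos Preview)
Your proposal is correct and follows essentially the same approach as the paper: show that hypothesis~(ii) forces $\mathcal{M}(\C_{i+1})\setminus \mathcal{M}(\C_i)=\emptyset$ and hence $\Delta_i=0$, then apply Lemma~\ref{lem:monge} via hypothesis~(iii) to get the nonnegative overlap increase. The only difference is that you spell out in more detail why hypothesis~(ii) yields $\mathcal{M}(\C_{i+1})=\mathcal{M}(\C_i)$, which the paper states in a single sentence.
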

\begin{proof}
  If neither $e$ nor $e'$ are edges of a small cycle in $\C$, then performing $\swap(\C_i,e)$ results in a \CC $\C_{i+1}$ for which $\mathcal{M}(\C_{i+1})\setminus \mathcal{M}(\C_i) = \emptyset$. Therefore, $\Delta_i = 0$.
  Since $\max\{|\ov(e)|,|\ov(e')|\}\ge \max\{|\ov(f)|,|\ov(f')|\}$, Lemma~\ref{lem:monge} implies that $|\ov(e)|+|\ov(e')|\ge |\ov(f)| + |\ov(f')|$. Hence, $|\ov(e)|+|\ov(e')|- |\ov(f)| - |\ov(f')| \ge 0 = \Delta_i$.
\end{proof}

If there is an edge $e\in \C\setminus \C_i$ such that performing $\swap(\C_i,e)$ reduces the symmetric difference between $\C$ and $\C_i$ by four, then we show that $\swap(\C_i,e)$ increases the total overlap by at least $\Delta_i$.

\begin{lemma}\label{lem:2ring}
  If there exists an edge $e\in \C\setminus \C_i$ such that performing $\swap(\C_i,e)$ reduces the symmetric difference between the \CC $\C_i$ and $\C$ by four edges, then after performing $\swap(\C_i,e)$, the resulting \CC $\C_{i+1}$ has by at least $\Delta_i$ larger total overlap than $\C_i$.
\end{lemma}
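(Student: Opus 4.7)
The plan is to start with the observation that the symmetric-difference-by-4 condition forces $e'\in\C$ alongside $e\in\C$ and $f,f'\in\C_i\setminus\C$. The first step exploits the \MGREEDY rejection rule: since the only edges of $\C$ adjacent to $f=(v',v)$ are $e'$ (sharing tail $v'$) and $e$ (sharing head $v$), the rejection of $f$ forces one of $e,e'$ to be picked earlier in \MGREEDY, giving $\max\{|\ov(e)|,|\ov(e')|\}\ge|\ov(f)|$; applying the same reasoning to $f'$ yields $\max\{|\ov(e)|,|\ov(e')|\}\ge\max\{|\ov(f)|,|\ov(f')|\}$. Lemma~\ref{lem:monge} then provides $|\ov(e)|+|\ov(e')|-|\ov(f)|-|\ov(f')|\ge 0$, which already settles the case where neither $e$ nor $e'$ lies in a small cycle of $\C$, since there $\mathcal{M}(\C_{i+1})\setminus\mathcal{M}(\C_i)=\emptyset$ and $\Delta_i=0$.

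If instead both $e$ and $e'$ are edges of (necessarily distinct) small cycles $c$ and $c''$, each a self-loop in $S'$, then $\Delta_i\le o(c)+o(c'')-\gamma(w(c)+w(c''))$, and Lemma~\ref{lm:maOverlap} bounds $|\ov(f)|,|\ov(f')|<w(c)+w(c'')$ because $f$ and $f'$ both connect $c$ with $c''$; the case then closes using $\gamma>2$. By the symmetry of the swap operation in $e$ and $e'$, the only remaining situation (up to relabelling) is that $e$ is the self-loop of a small cycle $c$ and $e'$ is an edge of a large cycle $c'$, where $\Delta_i\le o(c)-\gamma w(c)$ in general and $\Delta_i\le o(c)-\gamma w(c)-\tfrac12(2w(c')-o(c'))$ when $(c,c')\in T$ (any further related cycles only decrease $\Delta_i$).

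For this remaining case I would first dispatch $|\ov(e')|\ge w(c)+w(c')$ via Lemma~\ref{lm:lastR}, obtaining $|\ov(e)|+|\ov(e')|-|\ov(f)|-|\ov(f')|>o(c)-w(c)\ge\Delta_i$ in both sub-situations (using $w(c')\ge o(c')/2$ when $(c,c')\in T$). When instead $|\ov(e')|<w(c)+w(c')$, combining with $|\ov(e')|\ge o(c')>\alpha w(c')$ forces $w(c')<w(c)/(\alpha-1)$, and three further subcases arise: (i)~if $w(c')<(\gamma-2)w(c)$ (so $(c,c')\notin T$), Lemma~\ref{lem:small-monge} with $\max\{|\ov(f)|,|\ov(f')|\}<w(c)+w(c')$ yields $>o(c)-2w(c)-w(c')>o(c)-\gamma w(c)$; (ii)~if $w(c')\ge(\gamma-2)w(c)$ but $(c,c')\notin T$ because the overlap condition fails, i.e., $|\ov(f)|,|\ov(f')|<\alpha w(c')$, then combining with $|\ov(e')|>\alpha w(c')$ gives $>o(c)-\alpha w(c')$, and constraint~\eqref{eqn:alpha_gamma_constr_sym_diff_4} (equivalently $\gamma\ge\alpha/(\alpha-1)$) together with $w(c')<w(c)/(\alpha-1)$ closes the subcase; (iii)~if $(c,c')\in T$, the general bound $>o(c)+o(c')-2(w(c)+w(c'))$ reduces the desired inequality to $(\gamma-2)w(c)\ge w(c')-o(c')/2$.

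The main obstacle will be closing subcase~(iii). Bounding $w(c')-o(c')/2<w(c')(2-\alpha)/2$ via $o(c')>\alpha w(c')$ and then $w(c')<w(c)/(\alpha-1)$, the target reduces to $(2-\alpha)\le 2(\gamma-2)(\alpha-1)$, equivalently $\gamma\ge(3\alpha-2)/(2(\alpha-1))$. The plan is to deduce this from constraint~\eqref{eqn:alpha_gamma_constr_sym_diff_4} via the chain $\gamma\ge\alpha/(\alpha-1)\ge(3\alpha-2)/(2(\alpha-1))$, whose last step is equivalent to $\alpha\le 2$ and thus holds for $\alpha=(1+\sqrt{57})/6$.
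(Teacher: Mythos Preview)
Your proof is correct, and your opening steps (forcing $e'\in\C$, the \MGREEDY rejection argument yielding $\max\{|\ov(e)|,|\ov(e')|\}\ge\max\{|\ov(f)|,|\ov(f')|\}$, the both-large and both-small cases) coincide with the paper's. The genuine divergence is in the mixed case where $e$ lies in a small cycle $c$ and $e'$ in a large cycle $c'$. The paper splits into three subcases according to where $|\ov(e')|$ falls relative to $\max\{|\ov(f)|,|\ov(f')|\}$ and to $w(c)+w(c')$; in the first of these it observes that $e$ is a \emph{good edge} and simply invokes Lemma~\ref{lem:goodedge}, and in the second it uses constraint~\eqref{eqn:alpha_gamma_constr_sym_diff_4} to derive $\max\{|\ov(f)|,|\ov(f')|\}<\gamma\,w(c)$ directly. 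Your decomposition instead first dispatches $|\ov(e')|\ge w(c)+w(c')$ via Lemma~\ref{lm:lastR} (as does the paper), and for $|\ov(e')|<w(c)+w(c')$ splits according to whether $(c,c')\in T$ and, if not, why not. This lets you avoid the good-edge machinery entirely: subcase~(i) is handled by Lemma~\ref{lem:small-monge}, subcase~(ii) exploits that $v'$ and $u'$ are strings of $c'$ so the failed overlap condition bounds $|\ov(f)|,|\ov(f')|<\alpha\,w(c')$ directly, and subcase~(iii) uses the extra $-\tfrac12(2w(c')-o(c'))$ term in $\Delta_i$. The trade-off is that your argument is more self-contained but requires verifying the numerical inequality $\gamma\ge(3\alpha-2)/(2(\alpha-1))$ from~\eqref{eqn:alpha_gamma_constr_sym_diff_4}, whereas the paper's route hides the corresponding work inside Lemma~\ref{lem:goodedge} (which itself uses~\eqref{eqn:alpha_gamma_constr_good_edge}).
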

\begin{proof}
  Recall that $\swap(\C_i,e)$ adds the edges $e$ and $e'$ to the \CC $\C_i$ and removes the edges $f$ and $f'$. Thus, if the symmetric difference to $\C$ decreases by four edges, then it must be the case that $e, e' \in \C\setminus \C_i$ and $f, f' \in \C_i \setminus \C$.

  We have $\max\{|\ov(e)|,|\ov(e')|\}\ge\max\{|\ov(f)|,|\ov(f')|\}$, since otherwise, \MGREEDY would have picked the edge of greater overlap between $f$ and $f'$ for inclusion in $\C$, before picking either one of $e$ or $e'$.
  We now consider four cases:

  \begin{itemize}
    \item Suppose $e$ and $e'$ both belong to large cycles in $\C$. Then Lemma~\ref{lem:longcycles} applies and we are done.

    \item Suppose $e$ and $e'$ both belong to small cycles in $\C$. Let these two small cycles be $c$ and $c'$ respectively. If we obtain $\C_{i+1}$ from $\C_i$ by performing $\swap(\C_i,e)$, then $\mathcal{M}(\C_{i+1})\setminus \mathcal{M}(\C_i) = \{c,c'\}$.
          Thus, using~\eqref{eqn:Delta_i_UB} together with $|\ov(e)| = o(c)$, $|\ov(e')| = o(c')$, and $\gamma > 2$, we obtain
          $$\Delta_i < |\ov(e)| - 2w(c) + |\ov(e')| - 2w(c')\,.$$
          Due to Lemma~\ref{lm:maOverlap}, $\max\{|\ov(f)|,|\ov(f')|\} < w(c)+w(c')$. Therefore, $|\ov(e)|+|\ov(e')| - |\ov(f)|-|\ov(f')|\ge |\ov(e)|-2w(c)+|\ov(e')|-2w(c') > \Delta_i$ as claimed.

    \item Suppose $e$ belongs to a small cycle $c$ and $e'$ belongs to a large cycle $c'$ in $\C$.

          We distinguish between three cases:
          \begin{itemize}
            \item If $|\ov(e')| \le \max\{|\ov(f)|,|\ov(f')|\}$, then $e$ is a good edge (note that $o(c')\le |\ov(e')|$ because $e'$ belongs to the cycle $c'$) and we apply Lemma~\ref{lem:goodedge}.

            \item If $w(c)+w(c') \ge |\ov(e')| > \max\{|\ov(f)|,|\ov(f')|\}$, then using Lemma~\ref{lm:maOverlap},
                  \begin{align*}
                    \max\{|\ov(f)|,|\ov(f')|\} < w(c)+w(c')
                     & = w(c) + \gamma\cdot w(c') - (\gamma-1)\cdot w(c')                   \\
                     & \le w(c) + (\gamma-1)\cdot \alpha\cdot w(c') - (\gamma-1)\cdot w(c') \\
                     & \le w(c) + (\gamma-1)\cdot o(c') - (\gamma-1)\cdot w(c')             \\
                     & \le w(c) + (\gamma-1)\cdot |\ov(e')| - (\gamma-1)\cdot w(c')         \\
                     & \le w(c) + (\gamma-1)\cdot w(c) = \gamma\cdot w(c)\,,
                  \end{align*}
                  where the second line uses~\eqref{eqn:alpha_gamma_constr_sym_diff_4}, the third line follows from $c'$ being large,
                  the fourth one from that $o(c')$ is the smallest overlap on cycle $c'$, and the fifth line uses the case condition.
                  Now, the increase in the total overlap when performing $\swap(\C_i,e)$ is at least $|\ov(e)|+|\ov(e')|-|\ov(f)|-|\ov(f')| \ge |\ov(e)|-|\ov(f)| > o(c)-\gamma\cdot w(c) \ge \Delta_i$, where we use~\eqref{eqn:Delta_i_UB} together with $|\ov(e)| = o(c)$ and $\gamma > 1$.

            \item Otherwise, we have $|\ov(e')| > w(c)+w(c')$. From Lemma~\ref{lm:lastR}, it follows that $|\ov(e)|+|\ov(e')|-|\ov(f)|-|\ov(f')|\ge  |\ov(e)|-w(c) \ge \Delta_i$.

          \end{itemize}

    \item Suppose $e$ belongs to a large cycle and $e'$ belongs to a small cycle in $\C$. Observe that $\swap(\C_i,e')$ results in exactly the same \CC $\C_{i+1}$ as $\swap(\C_i,e)$. Therefore, we just apply the previous argument to $\swap(\C_i,e')$, and we are done.\qedhere
  \end{itemize}
\end{proof}

Lastly, if neither of the previous two lemmas applies, we can find a good edge for sure:
\begin{lemma}\label{lem:centraledge}
  Suppose Lemmas \ref{lem:longcycles} and \ref{lem:2ring} do not apply, i.e., no edge with the corresponding properties exists.
  Then there exists a good edge in $\C \setminus \C_i$.
\end{lemma}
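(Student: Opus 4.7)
I will argue by contradiction: assume no good edge exists in $\C \setminus \C_i$ in addition to the hypotheses that Lemmas~\ref{lem:longcycles} and~\ref{lem:2ring} do not apply. Since Lem~\ref{lem:2ring} not applying means the partner edge $e'$ from $\swap(\C_i,e)$ never lies in $\C$, it is never on a small cycle of $\C$, so the second bullet of Definition~\ref{def:goodedge} is automatic for every small-cycle edge in $\C \setminus \C_i$. My task then reduces to producing a small-cycle edge $e \in \C \setminus \C_i$ whose $\swap$ also satisfies the third/fourth bullets.

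I will pick $e^* \in \C \setminus \C_i$ of maximum overlap (ties broken consistently with \MGREEDY's order) and first show $e^*$ must sit on a small cycle of $\C$. If not, both $e^*$ and its partner $e'^*$ avoid small cycles, so Lem~\ref{lem:longcycles} not applying forces $\max\{|\ov(e^*)|,|\ov(e'^*)|\} < \max\{|\ov(f^*)|,|\ov(f'^*)|\}$; WLOG $|\ov(f^*)| > |\ov(e^*)|$, and since $f^* \in \C_i \setminus \C$ was rejected by \MGREEDY, some edge $h \in \C$ sharing a head or tail with $f^*$ has $|\ov(h)| \ge |\ov(f^*)|$. A degree count in $\C_i$ (no two outgoing or incoming edges at the same vertex) rules out $h \in \C_i$, hence $h \in \C \setminus \C_i$ with overlap strictly greater than $|\ov(e^*)|$, contradicting maximality. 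So $e^* = (s,s)$ lies on a small cycle $c$ with $|\ov(e^*)| = o(c)$.

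Next I suppose $e^*$ is not a good edge. WLOG $|\ov(f^*)| \ge |\ov(f'^*)|$ and the cycle $c'$ of $v'^*$ in $\C$ violates the third bullet: $|\ov(f^*)| < o(c')$ and either $c'$ is large or $c'$ is small with $w(c') > w(c)$. The outgoing edge $g$ of $v'^*$ in $\C$ lies on $c'$ and, by the same degree count, $g \in \C \setminus \C_i$ with $|\ov(g)| \ge o(c') > |\ov(f^*)|$; maximality of $e^*$ gives $|\ov(g)| \le o(c)$. The key structural fact I will use repeatedly is that the outgoing edge of $v'^*$ in $\C_i$ is $f^*$, so $\swap(\C_i,g)$ inherits $f'_g = f^*$ and hence $|\ov(f'_g)| = |\ov(f^*)|$.

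From here I run a two-phase iteration. When the current candidate is a self-loop on a small cycle whose weight strictly exceeds that of the previous candidate, the $|\ov(f')|>|\ov(f)|$ branch of the good-edge check closes immediately, because its target cycle is the previous smaller small cycle; so non-goodness forces the $|\ov(f)| \ge |\ov(f')|$ branch, which advances to yet another small cycle of strictly larger weight. When the current candidate is on a large cycle, Lem~\ref{lem:longcycles} not applying combined with $|\ov(f'_g)| < |\ov(g)|$ forces $|\ov(f_g)| > |\ov(g)|$; the \MGREEDY-blocking edge of $f_g$ (which cannot be the incoming edge of the head of $f_g$, since that edge is $g$ itself) lies in $\C \setminus \C_i$ with strictly larger overlap, and again either closes as a good edge via the $|\ov(f')|>|\ov(f)|$ branch---whose inherited $f'$ has overlap $\ge o(c')$ for the previous large cycle $c'$---or continues the overlap chain. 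Since small-cycle weights take finitely many values and overlaps are bounded above by $|\ov(e^*)|$ (otherwise the blocking-edge argument directly contradicts maximality), both monotone chains terminate, and termination must exhibit a good edge---the final contradiction. I expect the main obstacle to be the careful bookkeeping of the two WLOG branches of Definition~\ref{def:goodedge} at each iteration and of the monotone quantity (small-cycle weight in one phase, overlap in the other) that strictly advances at every step.
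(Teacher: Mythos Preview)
Your approach is essentially correct and does lead to a good edge, but it takes a genuinely different—and substantially more involved—route than the paper. The paper's proof starts from the edge $f_{\max}$ of \emph{maximum overlap in $\C_i\setminus\C$} (not in $\C\setminus\C_i$). This choice is the key simplification: for $\swap(\C_i,e_h)$ where $e_h$ is the $\C$-edge sharing the head of $f_{\max}$, one automatically has $f=f_{\max}$ and hence $|\ov(f)|\ge|\ov(f')|$, so only the second bullet of Definition~\ref{def:goodedge} needs checking. The relevant cycle $c'$ is then exactly the cycle containing $e_t$ (the $\C$-edge sharing the tail of $f_{\max}$), and a short three-case analysis—depending on whether $|\ov(e_h)|$, $|\ov(e_t)|$, or both exceed $|\ov(f_{\max})|$—finishes the proof in one step, with no iteration.

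Your iterative argument can be made rigorous, but the write-up glosses over the crucial point that your two ``phases'' interleave. When the current small-cycle candidate fails the $|\ov(f)|\ge|\ov(f')|$ check, the next cycle may be large (not ``yet another small cycle of strictly larger weight''), and conversely a large-cycle step may hand off to a small cycle whose weight bears no relation to any earlier small cycle. The correct global invariant is that $|\ov(f_k)|$ is non-decreasing along the chain and \emph{strictly} increases at every large-cycle step; combined with the bound $|\ov(f_k)|<|\ov(e^*)|$ this forces the chain to eventually visit only small cycles, after which the strictly-increasing-weight argument applies. You should also note that your ``$|\ov(f')|>|\ov(f)|$ branch closes'' claim needs two sub-cases: when the previous cycle $c_{k-1}$ is small you use $w(c_{k-1})<w(c_k)$, but when $c_{k-1}$ is large you instead need $|\ov(f'_k)|=|\ov(f_{k-1})|>|\ov(e_{k-1})|\ge o(c_{k-1})$. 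Both work, but neither your single sentence about ``the previous smaller small cycle'' nor the later one about ``inherited $f'$ has overlap $\ge o(c')$'' covers the transitions cleanly. The paper's choice of anchor edge avoids all of this bookkeeping.
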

\begin{proof}
  Let $f_{\max}$ be an edge of $\C_i\setminus \C$ that has the maximum overlap among the edges of $\C_i\setminus \C$. We will show that $f_{\max}$ is a candidate for either $f$ or $f'$.

  Let $e_h$ be the edge of $\C$ that has the same head node as $f_{\max}$ and let $e_t$ be the edge of $\C$ that has the same tail node as $f_{\max}$. We will later pick one of these as our edge $e$.
  We have $|\ov(f_{\max})| \le \max\{|\ov(e_h)|,|\ov(e_t)|\}$ as otherwise, \MGREEDY would have picked edge $f_{\max}$ for inclusion in $\C$ before picking either one of $e_h$ or $e_t$.

  We first show that $e_h$ or $e_t$ satisfies the first condition of a good edge in Definition~\ref{def:goodedge}.

  \begin{fact}\label{fact:goodedge-first-condition}
    \mbox{}
    \begin{itemize}
      \item If $|\ov(e_h)|\ge|\ov(f_{\max})|$, then $e=e_h$ satisfies the first condition of a good edge.
      \item Similarly, if $|\ov(e_t)|\ge|\ov(f_{\max})|$, then $e=e_t$ satisfies the first condition of a good edge.
    \end{itemize}
  \end{fact}
  \begin{proof}
    \mbox{}
    \begin{itemize}
      \item To see that $e=e_h$ satisfies the first condition of a good edge if $|\ov(e_h)|\ge|\ov(f_{\max})|$, consider $\swap(\C_i,e_h)$ and use the same notation as in Figure~\ref{fig:second_bound_notation}.

            First of all, in this case, $f=f_{\max}=(v',v)$ and because $f_{\max}$ was chosen to have the maximum overlap in $\C_i \setminus \C$, $|\ov(f)| \ge |\ov(f')|$. We conclude that $|\ov(e)| \ge \max\{|\ov(f)|, |\ov(f')|\}$.
            If $e$ and $e'$ both belong to $\C$, Lemma~\ref{lem:2ring} applies. Since we assume that the lemma does not apply and since we know that $e\in \C$, it follows that $e'\notin \C$.
            If $e$ belongs to a large cycle in $\C$, Lemma~\ref{lem:longcycles} applies because $e'\notin \C$ and $|\ov(e)| \ge \max\{|\ov(f)|, |\ov(f')|\}$. Because we assume that the lemma does not apply, we conclude that $e$ must belong to a small cycle. Together with $e'\notin\C$, this satisfies the first condition of a good edge.
      \item By symmetric arguments to the above, it also follows that if $|\ov(e_t)|\ge|\ov(f_{\max})|$, then $e=e_t$ satisfies the first condition of a good edge.\qedhere
    \end{itemize}
  \end{proof}

  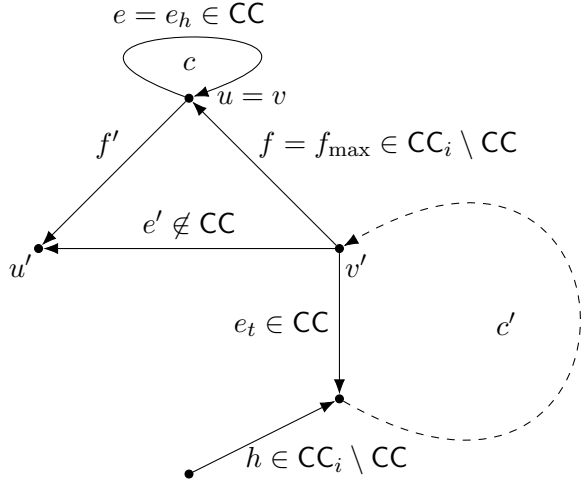
\begin{figure}
    \begin{center}
      \begin{tikzpicture}[-{Latex[length=2mm]}, vertex/.style={draw,circle,fill=black,inner sep=0pt,minimum size=3pt}]
        \path[use as bounding box] (-6.1, 1.4) rectangle (6.1, -5.2);
        \node[vertex] (u) at (0,0) [label={[label distance=5]right:$u=v$}] {};
        \node[vertex] (u') at (-2,-2) [label={[label distance=-4]below left:$u'$}] {};
        \node[vertex] (v') at (2,-2) [label={[label distance=-4]below right:$v'$}] {};
        \node[vertex] (x) at (2,-4) {};
        \node[vertex] (y) at (0,-5) {};
        \draw (u) to node[left, outer sep=5, pos=0.3] {$f'$} (u');
        \draw (v') to node[right, outer sep=5, pos=0.7] {$f=f_{\max}\in\C_i\setminus \C$}  (u);
        \draw (v') to node[above] {$e'\not\in\C$} (u');
        \path (u) edge [in=20,out=160,loop,looseness=70] node[above] {$e=e_h\in\C$} node[below, outer sep=4pt] {$c$} (u);
        \draw (v') to node[left] {$e_t\in\C$} (x);
        \path (x) edge [bend right, out=240, in=300,looseness=6,dashed] node[left, outer sep=20pt] {$c'$} (v');
        \draw (y) to node[right, outer sep=7, pos=0.17] {$h\in \C_i\setminus \C$} (x);

      \end{tikzpicture}
      \caption{Illustration of Case A in the proof of Lemma~\ref{lem:centraledge}.}
      \label{fig:central_edge-case1}
    \end{center}
  \end{figure}

  To show that we can also satisfy the second or the third condition (for an edge that satisfies the first), we distinguish three cases:

  \smallskip\noindent
  \textbf{Case A:} Suppose $|\ov(e_h)| \ge |\ov(f_{\max})|$ and $|\ov(e_t)| \ge |\ov(f_{\max})|$.

  Let $c$ be the cycle of $\C$ to which $e_h$ belongs and let $c'$ be the cycle of $\C$ to which $e_t$ belongs; see Figure~\ref{fig:central_edge-case1} for an illustration.
  We assume that $w(c)\ge w(c')$ as the arguments for the other case are completely symmetric with the roles of $e_h$ and $e_t$ reversed.

  We claim that $e=e_h$ is a good edge. It follows from Fact~\ref{fact:goodedge-first-condition} that $e$ satisfies the first condition of a good edge.
  Since $|\ov(f)| \ge |\ov(f')|$ as $f = f_{\max}$, it only remains to show the second condition.
  Since $e_t$ is an edge in the cycle $c'$ in $\C$, we have $|\ov(e_t)|\ge o(c')$.
  If $o(c')\le|\ov(f)|$, the second condition of a good edge is already satisfied. So suppose $o(c') > |\ov(f)|$.

  Assume for a contradiction that $c'$ is a large cycle in $\C$. Then consider the edge $h$ in $\C_i \setminus \C$ that has the same head node as $e_t$. We know that $|\ov(f)| \ge |\ov(h)|$ because $f_{\max}=f$ was chosen to have the maximum overlap among all edges in $\C_i \setminus \C$. Hence, $|\ov(e_t)|\ge o(c') > |\ov(f)|\ge |\ov(h)|$ and thus $|\ov(e_t)| > \max\{|\ov(f)|, |\ov(h)|\}$.
  Consider $\swap(\C_i, e_t)$, i.e., with edge $e_t$ acting as edge $e$ in the operation.
  If $\swap(\C_i, e_t)$ reduces the symmetric difference between $C_i$ and $\C$ by four edges, then Lemma~\ref{lem:2ring} applies. Otherwise, $e'\not\in \C$, so Lemma~\ref{lem:longcycles} applies as the cycle $c'$ containing $e = e_t$ is large.
  This is a contradiction to our assumption that neither Lemma~\ref{lem:2ring} nor Lemma~\ref{lem:longcycles} can be applied.

  Thus, $c'$ must be a small cycle. Since we initially assumed that $w(c)\ge w(c')$, the second condition in Definition~\ref{def:goodedge} follows and thus, $e$ is a good edge.

  \smallskip\noindent
  \textbf{Case B:} Suppose that $|\ov(e_h)| \ge |\ov(f_{\max})| > |\ov(e_t)|$. We claim that $e=e_h$ is a good edge.
  It follows from Fact~\ref{fact:goodedge-first-condition} that $e$ satisfies the first condition of a good edge.
  Since $|\ov(f)| \ge |\ov(f')|$, it only remains to show the second condition.

  Let $c'$ be the cycle containing the string $v'$. Observe that $e_t$ is an edge in the cycle $c'$ and recall that $f=f_{\max}$ and $|\ov(e_t)| < |\ov(f_{\max})|$. We conclude that $o(c')\le |\ov(e_t)|< |\ov(f)|$, so the second condition in Definition~\ref{def:goodedge} is satisfied and $e$ is good edge.

  \smallskip\noindent
  \textbf{Case C:} Otherwise, since $\max\{|\ov(e_h)|,|\ov(e_t)|\} \ge  |\ov(f_{\max})|$, we have $|\ov(e_t)| \ge |\ov(f_{\max})| > |\ov(e_h)|$. This case is symmetric to the previous one with the roles of $e_t$ and $e_h$ swapped.
  \qedhere
\end{proof}

To summarize, for any arbitrary cycle cover $\C_i$, there exists an edge $e \in \C\setminus \C_i$ such that if we obtain the cycle cover $\C_{i+1}$ from $\C_i$ by performing $\swap(\C_i,e)$, then the total overlap of $\C_{i+1}$ is by at least $\Delta_i$ larger than the total overlap of $\C_i$. This follows because either one of Lemmas~\ref{lem:longcycles} and~\ref{lem:2ring} directly applies or, if that is not the case, Lemma~\ref{lem:centraledge} guarantees the existence of a good edge $e\in \C\setminus \C_i$. For such a good edge, $\swap(\C_i,e)$ provides the claimed increase of the total overlap due to Lemma~\ref{lem:goodedge}.

\section{Final Remarks}\label{sec:finalRemarks}

We have made the first progress since 2005 on the approximation factor of the \GREEDY algorithm, showing that the upper bound of 3.5 by Kaplan and Shafrir~\cite{kaplanshafrir} is not the final answer. In addition, we have also improved the approximation guarantee for the Shortest Superstring problem in general.
Both results follow from our main technical contribution, which is the inequality $o \le n+\alpha\cdot w$ for $\alpha \approx 1.425$.
We get this inequality by proving two incomparable upper bounds on $o$, stated in~\eqref{eq:firstBound} and~\eqref{eq:secondBound},
with the second one being better when large cycles contribute much more to $w$ than small cycles.

Lastly, we want to briefly comment on whether we see potential for further improvements of our results.
We believe that the technique to prove the first bound (largely based on lemmas from previous works) does not offer room for improvement without bringing in substantial new ideas.
On the other hand, our
approach to get the second bound might have more potential for further improvements.
While we do not know specifically how, it would not be surprising to us if arguments in the same spirit could prove inequality $o\leq n +\gamma\cdot \sum_{c\in \mathcal{S}(S)}w(c) +\sum_{c\in \mathcal{L}(S)}w(c)$ for a smaller value of $\gamma$.
However, we also believe that this would make the proof considerably longer and more technical as, for example, more cases about how different small and large cycles interact with each other may have to be considered.
Furthermore, decreasing $\gamma$ slightly does not lead to significantly better upper bounds for \GREEDY or for SSP.
In fact, even for $\gamma = 3$ (compared to the current value of $\approx 3.832$), one would only obtain that $o \le n+1.4\cdot w$, which would imply an upper bound of $3.4$ for \GREEDY (compared to $\approx 3.425$ in Theorem~\ref{thm:greedy}) and of $\approx 2.467$ for the general approximation guarantee of SSP (via Theorem~\ref{thm:mgreedy-to-approx}, which currently gives $\approx 2.475$).

\bibliography{references}
\appendix

\section{Proof of Theorem \ref{thm:mgreedy-to-approx}}
For completeness, we provide a proof of Theorem~\ref{thm:mgreedy-to-approx}. The proof entirely follows the ideas from Theorem 21 in \cite{Mucha07}.

We start by observing that the existence of a $\delta$-approximation algorithm for MaxATSP also implies the existence of a $\delta$-approximation algorithm for MaxATSP path, i.e., the longest Hamiltonian path. This is because we can add one node to our graph which has outgoing and incoming edges to and from all other nodes, and all of these additional edges have profit 0. A TSP tour in this graph corresponds to a Hamiltonian path of equal profit in the original graph, and vice versa.

Given $S$, we compute $\C(S)$ and obtain the set of representative strings $\mathcal{R}$. The string that \MGREEDY generates is simply a concatenation of all these representative strings. Suppose this superstring has length $x$.

If we were not computationally bounded, we could also optimally merge the representative strings instead of naively concatenating them. Optimally merging the representative strings means finding an optimal solution for the SSP instance that has $\mathcal{R}$ as its set of input strings. The following lemma states that optimally merging the representative strings would result in a 2-approximation for the input $S$.

\begin{lemma}\label{lm:optSuperstringForR}
  An optimal superstring for input $\mathcal{R}$ is at most twice as long as an optimal superstring for input $S$.
\end{lemma}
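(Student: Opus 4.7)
The plan is to establish $|\OPT(\mathcal{R})| \le n + w$; combined with $w \le n$ (as noted in Section~\ref{sec:definitions}: closing the Hamiltonian path corresponding to $\OPT(S)$ into a Hamiltonian cycle yields a cycle cover of length at most $n$), this will imply the claimed bound $|\OPT(\mathcal{R})| \le 2n$.

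To prove $|\OPT(\mathcal{R})| \le n+w$, I would construct, starting from $\OPT(S)$, a superstring $\sigma$ of $\mathcal{R}$ by augmenting it with at most $w(c)$ new characters for each cycle $c\in \C(S)$. The key observation is that for any $s_c \in S_c$, we have $|s_c| > o(c)$ strictly: the outgoing overlap of $s_c$ along cycle $c$ is at least $o(c)$ yet strictly less than $|s_c|$, because no input string is a substring of another. Since $R_c$ contains $s_c$ as a substring, we can write $R_c = x_c\, s_c\, y_c$ for strings $x_c, y_c$ with $|x_c|+|y_c| = |R_c|-|s_c| < w(c) + o(c) - o(c) = w(c)$.

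Cycles are processed sequentially, starting from $\sigma_0 = \OPT(S)$. For each cycle $c$, the plan is to locate an occurrence of $s_c$ in $\sigma_{c-1}$ (inherited from its occurrence in the original $\OPT(S)$, appropriately shifted by previous insertions) and insert $x_c$ immediately before and $y_c$ immediately after it, obtaining $\sigma_c$ which contains $R_c$ as a substring and satisfies $|\sigma_c| < |\sigma_{c-1}| + w(c)$.

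The main technical hurdle is ensuring that an insertion for cycle $c$ does not destroy an $R_{c'}$ substring inserted earlier. The plan to handle this is to fix, once and for all, a specific occurrence of each $s_c$ in the original $\OPT(S)$, chosen so that the chosen occurrences are pairwise disjoint across cycles, and to then process cycles in left-to-right order of these occurrences. With pairwise disjoint occurrences processed left-to-right, every insertion is performed strictly outside the region of every previously embedded $R_{c'}$, which keeps those substrings intact. The existence of such disjoint occurrences follows from the freedom to choose any $s_c \in S_c$ together with the overlap bounds between strings from different cycles provided by Lemma~\ref{lm:maOverlap}. Summing the per-cycle cost yields $|\sigma| \le n + \sum_{c\in \C(S)} w(c) = n + w \le 2n$, from which $|\OPT(\mathcal{R})| \le |\sigma| \le 2n$ follows.
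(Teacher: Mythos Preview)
Your overall target $|\OPT(\mathcal{R})|\le n+w$ is exactly the right one, and the per-cycle budget of at most $w(c)$ extra characters is also correct. The gap is precisely in the step you flag as the ``main technical hurdle'': the claim that one can choose the representatives $s_c$ so that their occurrences in $\OPT(S)$ are \emph{pairwise disjoint} is neither proved nor true in general. Take the simplest case where $c$ and $c'$ are both $1$-cycles, so $S_c=\{s\}$ and $S_{c'}=\{s'\}$ offer no freedom of choice at all; in an optimal superstring these two strings will typically overlap (that is what makes the superstring short), and Lemma~\ref{lm:maOverlap} only bounds the \emph{amount} of overlap by $w(c)+w(c')$, not its presence. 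Without disjointness your scheme breaks: if the occurrences satisfy $l_c<l_{c'}\le r_c<r_{c'}$, then inserting $y_c$ immediately after position $r_c$ lands strictly inside the occurrence of $s_{c'}$ and destroys it before cycle $c'$ is processed; conversely, an insertion for $c'$ can cut through the already-embedded $R_c$.

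The paper avoids this not via disjointness but via a structural trick. It always picks the specific string $s_{c_0}$ (the head of the cycle-closing edge) and replaces it by $R'_c:=s(c)\,s_{c_0}$, of which $R_c$ is a prefix. Because $s_{c_0}$ is itself a prefix of $s(c)^\infty$, the string $s_{c_0}$ is simultaneously a prefix \emph{and} a suffix of $R'_c$; hence the ``replacement'' is really a one-sided insertion of the $w(c)$ characters $s(c)$, and the characters immediately following the insertion point still read $s_{c'_0}$ whenever another chosen occurrence straddles it (since $s(c')\,s_{c'_0}$ begins with $s_{c'_0}$). Thus overlapping occurrences are harmless and no disjointness hypothesis is needed. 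Your two-sided insertion $x_c\,s_c\,y_c$ has no analogous transparency property, which is exactly why you were pushed toward the unavailable disjointness assumption. A clean repair of your argument is simply to take $s_c=s_{c_0}$ and prepend $s(c)$; your length accounting then goes through verbatim.
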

\begin{proof}
  The proof follows the same idea as Lemma~\ref{lem:input-transform}.
  Consider the following superstring for $\mathcal{R}$: For each cycle $c$, we take the single string $s_{c_{0}}$. Let $S'\subseteq S$ be the set of these strings for all cycles. Let $t$ be an optimal superstring for this set $S'$ of strings. Clearly $|t| \le |\OPT(S)|$. Now for each string $s_{c_{0}} \in S'$ replace one occurrence of the string $s_{c_{0}}$ in $t$ by the string $$\pref(s_{c_{0}},s_{c_{1}})\pref(s_{c_{1}},s_{c_{2}})\dots \pref(s_{c_{r-2}},s_{c_{r-1}})\pref(s_{c_{r-1}},s_{c_{0}})s_{c_{0}}\,.$$ This increases the length of $t$ by $\sum_c w(c) \le |\OPT(S)|$ and results in a superstring for $\mathcal{R}$.
\end{proof}

Since we do not know how to compute this optimal solution for $\mathcal{R}$ efficiently, we instead use the $\delta$-approximation algorithm for MaxATSP path on the corresponding overlap graph. Suppose the optimal value (i.e. maximum total overlap) for this MaxATSP path problem is $y$. Then, using Lemma~\ref{lm:optSuperstringForR} and that $x$ is the total length of representative strings, we have $x-y \le 2\cdot|\OPT(S)|$ or, equivalently, $y\ge x - 2\cdot|\OPT(S)|$. Therefore, using the $\delta$-approximation algorithm we obtain a superstring of length at most $x-\delta\cdot y \le x-\delta\cdot (x - 2\cdot|\OPT(S)|) = (1-\delta) \cdot x + 2\delta \cdot |\OPT(S)|$.
Now, the theorem directly follows by using $x \le (2+\alpha)\cdot |\OPT(S)|$.

\end{document}